\def\NoNumber#1{{\def\alglinenumber##1{}\State #1}\addtocounter{ALG@line}{-1}}
\newcommand{\cD}[0]{\mathcal{D}}
\newcommand{\cA}[0]{\mathcal{A}}
\newcommand{\T}[0]{\Delta\tau}
\newcommand{\Tx}[0]{\Delta\tau_x}
\newcommand{\Ty}[0]{\Delta\tau_y}
\newcommand{\FH}[0]{F^{H,\phi}}
\newcommand{\GH}[0]{G^{H,\phi}}
\newcommand{\GHt}[0]{\tilde{G}^{H,\phi}}
\newcommand{\FHS}[0]{F^{H_S,\phi_S}}
\newcommand{\GHS}[0]{G^{H_S,\phi_S}}
\newcommand{\GHtS}[0]{\tilde{G}^{H_S,\phi_S}}
\newcommand{\ket}[1]{\left\vert #1 \right\rangle}
\newcommand{\bC}[0]{\mathbb{C}}
\newcommand{\bR}[0]{\mathbb{R}}
\newcommand{\NP}[0]{\mbox{\bf NP}}
\newcommand{\Complete}[0]{\mbox{\bf complete}}
\newcommand{\BQP}[0]{\mbox{\bf BQP}}
\newcommand{\BPP}[0]{\mbox{\bf BPP}}
\newcommand{\QCMA}[0]{\mbox{\bf QCMA}}
\newcommand{\hard}[0]{\mbox{\bf hard}}
\newcommand{\hardness}[0]{\mbox{\bf hardness}}
\newcommand{\QMA}[0]{\mbox{\bf QMA}}
\newcommand{\MA}[0]{\mbox{\bf MA}}
\newtheorem{definition}{Definition}
\newtheorem{lemma}{Lemma}
\newtheorem{remark}{Remark}
\newtheorem{claim}[lemma]{Claim}
\newtheorem{fact}[lemma]{Fact}
\newtheorem{theorem}{Theorem}
\newtheorem{corollary}[lemma]{Corollary}
\newtheorem{conjecture}{Conjecture}
\newenvironment{proof}{{\bf Proof:}}{\hfill\rule{2mm}{2mm}}
\newcommand{\linethrough}{\mathpalette\@thickbar}
\newcommand{\@thickbar}[2]{{#1\mkern0mu\vbox{
    \sbox\z@{$#1#2\mkern-1.5mu$}%
    \dimen@=\dimexpr\ht\tw@-\ht\z@+2\p@\relax 
    \hrule\@height0.5\p@ 
    \vskip\dimen@
    \box\z@}}
}
\begin{document}

\title{Local Hamiltonian Problem with succinct ground state is  $\MA$-$\Complete$}

\author{Jiaqing Jiang}
\email{jiaqingjiang95@gmail.com}
\affiliation{Department of Computing and Mathematical Sciences, California Institute of Technology, Pasadena, CA, USA}

\begin{abstract}
Finding the ground energy of a quantum system is a fundamental problem in condensed matter physics and quantum chemistry. Existing classical algorithms for tackling this problem often
assume that the ground state has a succinct classical description,  that is,  a poly-size classical circuit for computing the amplitude.  Notable examples of succinct states encompass matrix product states,  contractible projected entangled pair states, and states that can be represented by classical neural networks. 
 We study the complexity of the local Hamiltonian problem with succinct ground state. 
We prove this problem is $\MA$-$\Complete$. 
 The Hamiltonian we consider is general and might not be stoquastic.  The $\MA$ verification protocol is based on the fixed node quantum Monte Carlo method, particularly the variant of the continuous-time Markov chain introduced by Bravyi, Carleo, Gosset and Liu. 
 Based on our work, we also introduce a notion of strong guided states, and conjecture that the local Hamiltonian problem with strong guided state is $\MA$-$\Complete$,  which will be in contrast with the $\QCMA$-$\Complete$ result of the local Hamiltonian problem with standard guided states.

\end{abstract}

\maketitle

\section{Introduction}

A fundamental question in quantum chemistry and condensed matter physics is finding the ground energy of a many-body system.
A main obstacle to designing efficient classical algorithms for finding ground energy, is the need for exponentially many
parameters to completely characterize the ground state. In practice, computational experts 
 often make an additional assumption on the many-body system,  that is, the ground state can be well-approximated by a succinct classical description. Here ``succinct" refers to that
 there exists a poly-size classical circuit which computes the amplitude of the ground state on any computational basis.
For instance, the Density Matrix Renormalization Group method (DMRG)~\cite{white1992density,white1993density,schollwock2005density,landau2015polynomial}, extensively used in quantum chemistry,  operates under the assumption that the ground states can be represented by matrix-product states (MPS)~\cite{vidal2003efficient}. 
 The  Hartree-Fock method~\cite{fischer1987general}, on the other hand, assumes that the ground states can be represented as Fermionic Gaussian states. For two-dimensional and higher-dimensional local Hamiltonians,  researchers have devised successful heuristic algorithms by representing the ground states by contractible projected entangled pair states~\cite{verstraete2008matrix,corboz2016variational,vanderstraeten2016gradient}. Additionally, there have been endeavors to model ground states using neural networks~\cite{carleo2017solving, sharir2022neural,gao2017efficient,carrasquilla2019reconstructing}.

In this work, we study the complexity-theoretic implications of the succinct ground state assumption, which is used in the above classical algorithms.
Recall that the decision version of the ground energy finding problem, often referred to as the local Hamiltonian problem (LHP)~\cite{kempe2006complexity}, is that given an $n$-qubit $k$-local Hamiltonian $H=\sum_{i=1}^m H_i$, 
two parameters $a,b$ where $b-a\geq 1/poly(n)$,  
determine  whether the ground energy of $H$, denoted as $\lambda(H)$, is less than $a$ or greater than $b$. 
Here we introduce a variant of LHP, denoted as \textit{LHP with succinct ground state},  where when $\lambda(H)\leq a$, we further assume that there exists a normalized ground state $\ket{\psi}$ which is \textit{succinct},  that is,  there exists a poly-size classical circuit $C_{\psi}(\cdot)$ which computes the amplitude of an unnormalized version of $\ket{\psi}$ as  $C_\psi(x)=  c\cdot\langle x|\psi\rangle$ for some $c$.
It is well-established that the standard LHP is $\QMA$-$\Complete$~\cite{kempe2006complexity}.  The central question (denoted as Q1) we ask is, how the assumption of a succinct ground state affects the complexity of LHP, or more precisely, what complexity class  LHP with a succinct ground state is.

  It is essential to note that the definition of succinct states is more general than MPS, Fermionic Gaussian states, and similar representations. For instance, the concept of a normalized state $\ket{\phi}$ being succinct does not necessarily imply efficient sample access to  $\ket{\phi}$,  that is,  the ability to efficiently generate sample $x$ with probability $|\langle x|\phi\rangle|^2$. This kind of sample access is a crucial element in many dequantization algorithms~\cite{ten1995proof,bravyi2022rapidly}, thus those dequantization algorithms do not apply to our setting.  
 Besides, the Hamiltonian $H$ we consider is general and may not be stoquastic, where stoquastic Hamiltonian~\cite{bravyi2006complexity} is known to be sign-free and general Hamiltonians suffer from the sign problem~\cite{hangleiter2020easing}. Due to those difficulties,
   for a general succinct state $\ket{\phi}$, it is  highly non-trivial to classically certify the energy,  that is,  $\langle \phi|H|\phi\rangle$. 
However,
our main result,
   Theorem \ref{thm:intro}, demonstrates that we can still efficiently dequantize the quantum verifier in LHP. More connections between Theorem \ref{thm:intro}, dequantization algorithms and stoquastic Hamiltonians are put into Section \ref{sec:RW}.
\begin{theorem}[Main theorem]\label{thm:intro}
	LHP with succinct ground state is $\MA$-$\Complete$.
\end{theorem}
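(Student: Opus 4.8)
The plan is to prove the two inclusions separately.

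\emph{Hardness.} I would start from an arbitrary language in $\MA$ and use that $\MA$ has perfect completeness. Fix a verifier of the normal form ``apply Hadamards to $h$ fresh ancillas to produce a uniform string $r$, then apply a reversible classical circuit $R$ to $(x,w,r)$ and the remaining ancillas, then measure one output qubit'', which accepts with probability $1$ on some witness $w$ in a YES instance and with probability at most $1/3$ on every witness in a NO instance. Feed this circuit into the Feynman--Kitaev circuit-to-Hamiltonian construction to obtain an $O(1)$-local Hamiltonian $H$ on $\mathrm{poly}(n)$ qubits with $b-a\ge 1/\mathrm{poly}(n)$. Kitaev's analysis already gives $\lambda(H)\le a$ in YES instances and $\lambda(H)\ge b$ in NO instances; the only addition is to check the succinct-ground-state promise in the YES case. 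There the ground state is the history state $\tfrac{1}{\sqrt{T+1}}\sum_{t=0}^{T}U_t\cdots U_1\,\ket{w^{*},0}\ket{t}$ with the optimal classical witness $w^{*}$ hard-wired (and all ancillas initialized to $0$), and for every $t$ past the Hadamard layer each branch of the superposition carries a single computational-basis value obtained by running $R$ forward on $(w^{*},r)$. Hence $\bra{z}U_t\cdots U_1\ket{w^{*},0}$ equals $0$ or $\pm(T+1)^{-1/2}2^{-h/2}$, and which case holds is decided in polynomial time by running $R$ and $R^{-1}$ on the relevant registers; so the history state is succinct.

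\emph{Containment.} Merlin sends (a description of) the circuit $C_\psi$ for the claimed succinct ground state $\ket\psi$. Since $\langle\psi|H|\psi\rangle/\langle\psi|\psi\rangle\ge\lambda(H)$ for every state, in a NO instance the variational energy of whatever Merlin sends is at least $b$, while in a YES instance the honest Merlin can send a true ground state, whose variational energy is exactly $\lambda(H)\le a$; so it suffices for Arthur to estimate $\langle\psi|H|\psi\rangle/\langle\psi|\psi\rangle$ to additive error $(b-a)/3$ and accept iff the estimate is at most $(a+b)/2$. Writing $H=\sum_{i=1}^{m}H_i$ with $\|H_i\|\le 1$, $k$-locality gives $\langle\psi|H|\psi\rangle/\langle\psi|\psi\rangle=\sum_x\pi(x)\,g_\psi(x)$ with $\pi(x)=|\langle x|\psi\rangle|^2/\langle\psi|\psi\rangle$ and local energy $g_\psi(x)=\sum_y\langle x|H|y\rangle\langle y|\psi\rangle/\langle x|\psi\rangle$, which is polynomial-time computable from $C_\psi$; the obstruction is that $C_\psi$ gives no handle on $\pi$ and $H$ need not be stoquastic, so direct sampling and naive path-integral Monte Carlo both fail. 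To bypass this I would apply the fixed-node construction with guiding function $\ket\psi$: move every off-diagonal entry of $H$ whose sign disagrees with the sign structure of $\ket\psi$ onto the diagonal, producing an operator $H^{FN}$ with efficiently computable entries. The properties that matter are that $H^{FN}$ is stoquastic with respect to the sign structure of $\ket\psi$ (so, after the corresponding sign rotation, its imaginary-time path integral importance-sampled by $|\psi|$ is sign-free and generated by a genuine continuous-time Markov chain), and the sandwich $\lambda(H)\le\lambda(H^{FN})\le\langle\psi|H^{FN}|\psi\rangle/\langle\psi|\psi\rangle=\langle\psi|H|\psi\rangle/\langle\psi|\psi\rangle$, the final equality because the moved terms do not change the expectation in $\ket\psi$. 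In a YES instance with honest Merlin these inequalities collapse, so $\ket\psi$ is the exact ground state of $H^{FN}$ and its local energy $g_\psi$ is the constant $\lambda(H)$. Arthur then runs the continuous-time fixed-node Markov chain of Bravyi \emph{et al.}~\cite{bravyi2022rapidly} --- jump rates $\langle x|H|y\rangle\langle y|\psi\rangle/\langle x|\psi\rangle$ on the sign-consistent edges, reweighted by the local-energy branching factor --- for a short imaginary time $\tau$, obtaining an estimate of $E^{FN}(\tau):=\langle\psi|H^{FN}e^{-\tau H^{FN}}|\psi\rangle/\langle\psi|e^{-\tau H^{FN}}|\psi\rangle$, which is nonincreasing in $\tau$ and lies in $[\lambda(H),\langle\psi|H|\psi\rangle/\langle\psi|\psi\rangle]$ for every $\tau\ge 0$; equivalently, for every $\tau$ this quantity is $\ge b$ in a NO instance and $=\lambda(H)\le a$ in the completeness case.

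\emph{Main obstacle.} The crux is making this Monte Carlo estimate provably converge in polynomial time against an adversarial succinct $\ket\psi$, for which the branching weights and the local energies can be exponentially large, and which is moreover unnormalized with unknown $\langle\psi|\psi\rangle$. I expect the fix to combine three ingredients: (i) taking $\tau=1/\mathrm{poly}(n)$ so that the total accumulated branching weight along a run is bounded by a constant, which gives the estimator polynomially bounded variance as soon as the encountered local energies are polynomially bounded, while $E^{FN}(\tau)$ still satisfies the sandwich; (ii) having Arthur abort and reject as soon as any local energy along the chain exceeds a fixed polynomial threshold --- always a correct rejection in a NO instance, and never triggered in the completeness case since the honest ground state has constant local energy $\lambda(H)\in[-m,m]$; and (iii) letting Merlin supply polynomial-size advice --- good starting configurations and a claimed value of the normalizing ratio --- that Arthur relies on only in the completeness case, since soundness rests solely on the expectation of the estimator being $E^{FN}(\tau)\ge b$ together with its bounded variance. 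Standard amplification and a union bound over the polynomially many chain runs then yield the $\MA$ protocol; combined with the hardness reduction this shows that LHP with succinct ground state is $\MA$-$\Complete$.
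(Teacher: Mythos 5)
Your hardness direction is essentially the paper's: a perfect-completeness $\MA$ verifier fed into the Kitaev circuit-to-Hamiltonian construction, with the observation that the history state of a classical reversible circuit on $(w^*,r)$ has poly-time computable amplitudes. That part is fine.

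The containment direction takes a genuinely different route, and it has a genuine gap in soundness. Your target quantity $E^{FN}(\tau)=\langle\psi|H^{FN}e^{-\tau H^{FN}}|\psi\rangle/\langle\psi|e^{-\tau H^{FN}}|\psi\rangle$ is indeed $\geq\lambda(H^{FN})\geq\lambda(H)\geq b$ in a NO instance, but that bound holds because $E^{FN}(\tau)$ is a Rayleigh quotient, i.e.\ an average over the distribution $\pi(x)=|\langle x|\psi\rangle|^2/\|\psi\|^2$. An unbiased estimator of it therefore requires \emph{sample access} to $\pi$, which succinctness does not provide (the paper stresses this with the 3SAT example). Your fix (iii) --- Merlin supplies the starting configurations --- replaces $E^{FN}(\tau)$ by a mixed estimator of the form $\langle x_{in}|e^{-\tau H^{FN}}H^{FN}|\psi\rangle/\langle x_{in}|e^{-\tau H^{FN}}|\psi\rangle$, which is \emph{not} a Rayleigh quotient and has no variational lower bound for finite $\tau$; it only converges to $\lambda(H^{FN})$ as $\tau\to\infty$, and your fix (i) explicitly forces $\tau=1/\mathrm{poly}(n)$ to tame the branching weights. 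Concretely, an adversary can choose a succinct $\ket\psi$ whose local energy $\langle x|H|\psi\rangle/\langle x|\psi\rangle$ equals $a$ on a region containing $x_{in}$ from which the short-time chain is unlikely to escape, hiding the excess energy in states the walk never visits; your abort-on-large-local-energy check (ii) never fires, the estimate reads $a$, and Arthur accepts a NO instance. Nothing in your argument rules this out.

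The paper closes exactly this hole by a different mechanism: it does not estimate an energy at all. It runs the Gillespie walk for a \emph{long} time $t=\mathrm{poly}(n)/\epsilon$ with the untrusted shift $\lambda(\FH)$ deleted from the generator, rejects the instant the walk reaches any $x$ whose column of $\GH$ is not that of a legal generator (equivalently, whose local energy is not exactly the claimed ground energy, or which has a negative off-diagonal rate), and accepts only if the walk survives. Soundness is then pure linear algebra: the survival probability equals $\sum_{z}\frac{\langle z|\psi\rangle}{\langle x_{in}|\psi\rangle}\langle z|e^{-\FH_{good}t}|x_{in}\rangle$, and Cauchy interlacing gives $\lambda(\FH_{good})\geq\lambda(\FH)\geq\lambda(H)\geq\epsilon$, so this decays like $2^{O(\mathrm{poly}(n))}(1-\epsilon/2)^{t}$ for \emph{every} starting point and every adversarial $\psi$. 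That interlacing step --- a walk confined to the ``looks-perfect'' region must still die exponentially fast because the principal submatrix inherits the spectral lower bound --- is the key idea your proposal is missing, and it is what lets the protocol avoid ever having to sample from $\pi$ or bound the variance of a weighted estimator.
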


In addition to exploring the underlying complexity aspects of classical algorithms, 
another motivation for  LHP with succinct ground state is to gain insight into the boundaries of quantum advantage.  Specifically, there is a widespread belief that quantum computers may provide an exponential advantage for quantum chemistry and condensed matter physics~\cite{cao2019quantum,mcardle2020quantum,bauer2020quantum}. One of the most promising pieces of evidence is the phase estimation algorithm~\cite{tilly2022variational}, which suggests that if we can prepare a \textit{guided state},   that is,  a state which has $1/poly(n)$ overlap with the ground state, 
 then there is an efficient quantum algorithm to estimate the ground energy. The guided states are suggested to be obtained from existing classical algorithms like DMRG or  Hartree-Fock, which has a succinct classical description. An important and natural question (denoted as Q2) that arises from this context is, with such classical access to the guided states, can we dequantize the phase estimation algorithm?

Existing literature partially refutes Q2, while many questions remain under research. Specifically,  one can define 
  the LHP with guided state~\cite{gharibian2022dequantizing}  as a variant of
  LHP, where when $\lambda(H)\leq a$, we further assume that there exists a guided state. For the standard definition of the guided states\footnote{We refer the standard guided state to be a state which has $1/poly(n)$ overlap with the ground state, and can be prepared by polynomial-size quantum circuit~\cite{gharibian2022dequantizing,weggemans2023guidable}.}, existing works do indicate a potential quantum advantage: LHP with guided state
   is proved to be  $\BQP$-$\Complete$ by Gharibian and Le Gall~\cite{gharibian2022dequantizing}, when the guided state is given; and 
proved to be
$\QCMA$-$\Complete$ by Weggemans, Folkertsma and Cade~\cite{weggemans2023guidable}, when the guided state is viewed as a witness.
When relaxing the precision  (the value of $b-a$ in LHP) from inverse-poly to constant, one can efficiently dequantize their algorithms,  that is,  in $\BPP$~\cite{gharibian2022dequantizing} and in $\NP$~\cite{weggemans2023guidable} respectively. 
Conditioned on $\BPP\subsetneq \BQP$ and $\NP\subsetneq \QCMA$, the above complexity results suggest that the quantum advantage for LHP is achieving higher precision.

However, these complexity results do not provide a comprehensive answer to Q2, since they overlooked the origin of the guided states.  As numerically investigated in \cite{lee2023evaluating}, if the
 guided states are obtained from existing classical algorithms,  for Hamiltonians in practice it is possible that those guided states are much better than the standard guided states, which will enable not only an efficient quantum algorithm for the LHP, but also an efficient classical algorithm\footnote{\cite{lee2023evaluating} also argued that with even standard guided states, for chemistry Hamiltonian which is a special case of the general local (Fermionic) hamiltonian, it is possible that the classical heuristics can efficiently estimate the ground energy to the desired precision.  Philosophically, \cite{lee2023evaluating} argued that if a classical algorithm is good enough to get a good guided state, which is non-trivially close to a ground state, then the problem itself might be classically tractable.
}. Inspired by \cite{lee2023evaluating}, from a complexity view we raise the following question (denoted as Q3): Is there a definition of ``strong guided states",  such that the complexity of LHP with such guided state drops from quantum to classical?

Here quantum refers to  $\QCMA$-$\Complete$ which is the complexity of LHP with standard guided states~\cite{weggemans2023guidable},  classical refers to $\MA$-$\Complete$, and  LHP with strong guided state is defined as a variant of LHP, where when
 $\lambda(H)\leq a$, we further assume that there exists a  ground state which admits a strong guided state.
 
 Recall that previous results~\cite{gharibian2022dequantizing,weggemans2023guidable} suggest that the quantum advantage for LHP is achieving higher precision. 
 Our result (Theorem \ref{thm:intro})  shows that even for inverse-poly precision, if the guided state in the Yes instance is extremely strong, i.e. is the ground state, the complexity of LHP with such guided states does drop to $\MA$-$\Complete$. 
 This opens the possibility that with a proper definition of strong guided states, even for inverse-poly precision, the LHP with strong guided states is $\MA$-$\Complete$, which will give an affirmative answer to Q3. This $\MA$-$\Complete$ will imply that heuristic randomized classical algorithms might tackle the corresponding LHP when the strong guided state is given, thus giving a partial answer to Q2.
Based on our work, we 
 propose a definition of \textit{strong guided states} to be the succinct approximation of ground state, which has entry-wise $1/poly(n)$ overlap with the ground state:

\begin{definition}[Strong guided state]\label{def:strong}
	We say an $n$-qubit normalized quantum state $\ket{\phi}$ admits a strong guided state, if there is a normalized state $\ket{\eta}$ which is succinct, and satisfies that 	  
	\begin{align}
		\forall x, \langle \eta|x \rangle \langle x|\phi\rangle \geq 
	\langle \phi|x \rangle \langle x|\phi\rangle/poly(n).\label{eq:strong}
	\end{align}
\end{definition}

Note that $\langle x|\phi\rangle $ can be a complex number, $\langle \eta|x\rangle
	\langle x|\phi\rangle \geq \langle \phi|x\rangle \langle x|\phi\rangle/poly(n)$ implicitly implies that $\ket{\eta}$ has information about the ``sign" of the ground state.  
 We conjecture that

\begin{conjecture}\label{conj:intro}
	LHP with strong guided state is $\MA$-$\Complete$. 
\end{conjecture}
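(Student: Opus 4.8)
## Proof Proposal for Conjecture \ref{conj:intro}

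The plan is to adapt the two directions of Theorem \ref{thm:intro} separately, since the $\MA$-hardness is essentially inherited for free and the real work is the containment in $\MA$. For hardness, note that LHP with strong guided state generalizes LHP with succinct ground state: if $\ket{\psi}$ is itself succinct, then taking $\ket{\eta}=\ket{\psi}$ satisfies Definition \ref{def:strong} with the $poly(n)$ factor equal to $1$, since $\langle\psi|x\rangle\langle x|\psi\rangle = \langle\psi|x\rangle\langle x|\psi\rangle$. Hence every Yes instance of LHP with succinct ground state is a Yes instance here with the same witness, and No instances are unchanged; the $\MA$-hardness of Theorem \ref{thm:intro} transfers directly. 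So the whole content of the conjecture is the $\MA$ \emph{upper bound}.

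For containment in $\MA$, I would follow the architecture of the proof of Theorem \ref{thm:intro}: Merlin sends the classical circuit $C_\eta$ describing the succinct state $\ket{\eta}$ (together with the claimed value $\langle\eta|x\rangle$ normalization data and the $poly(n)$ parameter), and Arthur runs the fixed-node / continuous-time Markov chain estimator of \cite{bravyi2022rapidly} to certify an upper bound on $\langle\eta|H|\eta\rangle$. In the succinct-ground-state case the estimator certifies $\langle\psi|H|\psi\rangle=\lambda(H)\le a$ exactly; here we only have the approximation $\ket{\eta}$, so the first step is a robustness argument: I would show that the entry-wise overlap condition $\langle\eta|x\rangle\langle x|\phi\rangle \ge \langle\phi|x\rangle\langle x|\phi\rangle/poly(n)$, where $\ket{\phi}$ is the true ground state, forces $\langle\eta|H|\eta\rangle$ to be close to $\lambda(H)$, or at least bounded by $a' < b$ for a suitably rescaled gap. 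The entry-wise nature of the condition is crucial and is exactly what the fixed-node method exploits: the continuous-time walk is defined from the ratios $\langle x|\eta\rangle/\langle y|\eta\rangle$ and a "sign-corrected" Hamiltonian, and the per-configuration local energy $E_{loc}(x) = \sum_y \langle x|H|y\rangle \langle y|\eta\rangle / \langle x|\eta\rangle$ is controlled provided $\ket{\eta}$ tracks the sign structure of $\ket{\phi}$ at every basis state — which is precisely what Definition \ref{def:strong} guarantees (the remark after the definition emphasizes $\ket{\eta}$ carries the sign information). I would make this quantitative: bound the variance of $E_{loc}$ under the stationary distribution of the walk in terms of the $poly(n)$ slack, so that Arthur's Monte Carlo estimate concentrates, and complete the soundness/completeness gap analysis as in Theorem \ref{thm:intro}.

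The main obstacle I anticipate is that, unlike the exact succinct ground state, the fixed-node upper bound for an \emph{approximate} guiding state $\ket{\eta}$ need not lie below $\lambda(H)$ — fixed-node QMC gives a variational upper bound $\langle\eta|H_{\text{FN}}|\eta\rangle \ge \langle\eta|H|\eta\rangle \ge \lambda(H)$ only when the nodal structure is correct, and with only an entry-wise $1/poly(n)$ overlap the fixed-node energy could overshoot $b$. Closing this gap requires either (i) strengthening Definition \ref{def:strong} so that $\ket{\eta}$ is close to $\ket{\phi}$ in a norm sense strong enough that $|\langle\eta|H|\eta\rangle - \lambda(H)| \le (b-a)/3$, which may collapse the definition back toward the exact-ground-state case and weaken the statement; or (ii) a genuinely new argument showing the \emph{entry-wise} lower bound already suffices to control the fixed-node energy, perhaps by arguing that the sign-agreement it enforces keeps the effective stoquastic Hamiltonian's ground energy within the required window. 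I expect (ii) is where the real difficulty — and the reason this is stated as a conjecture rather than a theorem — lies: one must rule out pathological $\ket{\eta}$ that satisfy the entry-wise bound yet have large $\langle\eta|H|\eta\rangle$ because the small amplitudes (which the bound barely constrains) are arranged adversarially against the off-diagonal terms of $H$. A careful accounting of how $H$'s matrix elements interact with the under-constrained small-amplitude sector, ideally showing their contribution to $E_{loc}$ is itself suppressed, would be the crux of a full proof.
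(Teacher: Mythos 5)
This statement is an open conjecture in the paper: the authors prove only the special case in which the guided state is the exact ground state (Theorem \ref{thm:intro}), note that Conjecture \ref{conj:intro} is known only for stoquastic Hamiltonians, and explicitly leave the general case open. There is therefore no paper proof to compare against, and your proposal --- as you yourself acknowledge --- is a plan with an unresolved core step, not a proof. The part you do establish is correct: $\MA$-$\hardness$ is inherited for free because a succinct ground state is its own strong guided state (take $\ket{\eta}=\ket{\psi}$ in Definition \ref{def:strong}), so the entire content of the conjecture is the $\MA$ upper bound.

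On the upper bound you have located the obstruction in roughly the right place, but with one imprecision and two omissions. The imprecision: the fixed-node energy is \emph{never} below $\lambda(H)$ --- Lemma \ref{lem:basic_FH}(3) gives $\lambda(\FH)\geq\lambda(H)$ for \emph{every} $\ket{\phi}$, which is exactly why soundness of the paper's protocol survives adversarial witnesses and would survive here as well. What breaks is completeness, in three distinct ways. First, as you note, Definition \ref{def:strong} gives no upper bound on $\lambda(F^{H,\eta})-\lambda(H)$: the condition only lower-bounds $\langle\eta|x\rangle\langle x|\phi\rangle$, constrains nothing at basis states where $\langle x|\phi\rangle=0$, and does not cap $|\langle x|\eta\rangle|$, so the fixed-node energy of an \emph{honest} witness could exceed $a$ or even $b$. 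Second, the protocol's legality checks (line \ref{aeq:GH}) are calibrated to $\lambda(\FHS)=0$, which is why the paper uses Eq.~(\ref{eq:GH}) rather than Eq.~(\ref{eq:GHt}); for $\eta\neq\psi$ the column sums equal $-\lambda(F^{H,\eta})\neq 0$ and an honest witness is rejected immediately, so Merlin would have to certify the value of $\lambda(F^{H,\eta})$ himself, reopening precisely the soundness issue the paper's choice of $\GHS$ was designed to avoid. Third, the polynomial bound on the number of transitions (Lemma \ref{lem:effS}) uses that $|\langle x|\eta\rangle|^2$ is a stationary distribution of the CTMC, which again holds only when $\eta$ is an exact ground state of $F^{H,\eta}$. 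Your sketch addresses the first issue qualitatively and the other two not at all; until all three are handled the conjecture remains open, and your write-up is best read as an accurate diagnosis of why it is stated as a conjecture rather than as a proof of it.
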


Theorem \ref{thm:intro} is a special case of Conjecture \ref{conj:intro}, since a succinct ground state admits a strong guided state as itself. We hope our techniques will be useful for proving Conjecture \ref{conj:intro}.  Conjecture \ref{conj:intro} is known to be true if the Hamiltonian in the LHP is stoquastic~\cite{bravyi2014monte}. 
 The definition of strong guided states here is unarguably strong, which might not be the real reason that makes classical algorithms work~\cite{lee2023evaluating}. Here we view the LHP with strong guided state as a starting point of quantitatively understanding the boundary of quantum advantage for LHP ---  
 With a standard guided state, there is an efficient quantum algorithm for the LHP. However, when the guided state is too strong, there might also be efficient classical algorithms. What is the precise definition of strong guided states that enable classical algorithms thus break the potential quantum advantage?

For clarity,  we give a further remark on the conceptual relationship among previous work,  Theorem \ref{thm:intro} and Conjecture \ref{conj:intro}. 
Previous work~\cite{gharibian2022dequantizing,weggemans2023guidable}  suggests that quantum advantage for LHP is achieving higher precision. There are two intuitive reasons for such quantum advantage. The direct reason is that the overlap between the guided state and the ground state is $1/poly(n)$ which is relatively small compared to one, and a quantum algorithm which can directly perform quantum operations on guided states   seems more powerful than a classical algorithm which can only query the amplitude one by one. 
A further reason is that general Hamiltonians can have sign problems which makes it hard for classical algorithms to utilize the guided states. Otherwise, if the Hamiltonian is sign-free (stoquastic), even for inverse-poly precision, previous work~\cite{bravyi2014monte} already showed that 
the corresponding LHP with strong  guided state could be dequantized, that is,  $\MA$-$\Complete$ instead of $\QCMA$-$\Complete$. Our result  (Theorem \ref{thm:intro}) shows that for general Hamiltonians,  the sign problem can be resolved with the help of  extremely strong guided states. Our result suggests that   the sign problem might also be resolved with the help of certain strong guided states which lie in between standard and extremely strong guided states.
 
 \subsection{Related works}\label{sec:RW}

 In this section, we compare our work (Theorem \ref{thm:intro}) with some related works.
 
 \vspace{0.1em}
 \noindent\textbf{Stoquastic Hamiltonians and quantum Monte Carlo method.} 
 Technically, 
  our work is closely related to the quantum Monte Carlo method for stoquastic Hamiltonian and their generalizations, especially the fixed node Monte Carlo method~\cite{ten1995proof,bravyi2022rapidly}. For a Hamiltonian $H$ with a ground state $\ket{\psi}$, the goal of the quantum Monte Carlo method mentioned here is to define a Markov chain which can efficiently sample from the ground state $\ket{\psi}$,  that is, 
 outputting $x$ with probability $|\langle x|\psi\rangle|^2$.  To ease notations, we denote those quantum Monte Carlo  methods as \textit{sampling algorithms}, and denote the runtime required for the Markov chain to be close to the ground state sampling as \textit{mixing time.}    
 
 To begin with,
Bravyi and Terhal~\cite{bravyi2010complexity} first proved that if the Hamiltonian is frustration-free, stoquastic, and has a $1/poly(n)$ spectral gap, one can design a sampling algorithm with $poly(n)$ mixing time. 
 For a general Hamiltonian $H$, one can transform $H$ to a stoquastic Hamiltonian with some heuristic information by 
  the fixed node Monte Carlo method~\cite{ten1995proof}. 
   Roughly speaking, given a general Hamiltonian $H$ and an arbitrary  known (un-normalized) state $\ket{\phi}$ which we have query access to,
  \cite{ten1995proof} constructs  a stoquastic Hamiltonian which is called the fixed node Hamiltonian $\FH$, such that  
  \begin{itemize}
      \item[(1)]
 The ground energy of $\FH$ is always an uppper bound of the ground energy of $H$,  that is,  $$\lambda(\FH)\geq \lambda(H).$$
 \item[(2)]  If $\ket{\phi}$ is the ground state of $H$, then $\ket{\phi}$ is the ground state of $\FH$ and $\lambda(\FH)=\lambda(H).$
  \end{itemize}
   The drawback of $\FH$ is that its norm can be exponentially large, which might influence the mixing time. Even assuming  $\ket{\phi}$ is the ground state of $H$, there is no  rigorous bound for the mixing time of 
 most sampling algorithms based on $\FH$~\cite{ten1995proof,blinder2018mathematical}.
 A breakthrough is made by Bravyi \textit{et al.} recently \cite{bravyi2022rapidly}. Instead of defining a discrete-time Markov chain like most sampling algorithms, from $\FH$ Bravyi \textit{et al.} defined a continuous-time Markov chain. Furthermore,  Bravyi \textit{et al.} proved that if $\ket{\phi}$ is the true ground state of $H$,   and the Hamiltonian has a $1/poly(n)$ spectral gap together with some other good conditions, then the continuous-time Markov Chain has $poly(n)$ mixing time, that is, can efficiently sample from the ground state.

Our work builds from the Markov Chain in \cite{bravyi2022rapidly} and the properties of the fixed node Monte Carlo method~\cite{ten1995proof,bravyi2022rapidly}. The main difference between our work and \cite{bravyi2022rapidly}  is that we work on different tasks. \cite{bravyi2022rapidly} aimed for sampling from the ground state, with query access to a \textit{trusted} ground state. While we aim for testing ground energy, with query access to an \textit{adversarially} claimed ``ground state". The fact that the claimed ``ground state" can be adversarial
 is the main difficulty of solving LHP with succinct ground state.
If the  witness  (the claimed ``ground state" $\ket{\phi}$)  is trusted to be the true ground state, then the LHP with succinct ground state can be trivially solved by computing the ground energy directly. That is,
	$\lambda(H)=\frac{\langle x|H|\phi\rangle}{\langle x|\phi\rangle}$ for $x$ with $\langle x|\phi\rangle\neq 0$, 
	which can be computed efficiently since $H$ is sparse and 
	 $\langle x|H|\phi\rangle =\sum_y \langle x|H|y\rangle \langle y|\phi\rangle$.
  
Another remark is that since we work on a different task from \cite{bravyi2022rapidly}, we do not need the Hamiltonian to have a spectral gap or other additional conditions. Thus even in the Yes instance, one cannot efficiently sample from the ground state. Instead we utilize the promise gap to distinguish the Yes and No instances. Roughly speaking, we  define a ``Markov chain" that is well-defined for the Yes instances but becomes ill-defined for the No instances. Our verification algorithm distinguishes the Yes and No instances by testing whether the ``Markov chain" is well-defined. 
More can be seen in the proof overview,  that is, the  Section \ref{sec:proofoverview}.

 \vspace{0.2em}
\noindent\textbf{Dequantization algorithms.}
Conceptually,
our work  is close to several dequantization algorithms. In particular,  
 Gharibian and Le Gall~\cite{gharibian2022dequantizing} proved 
 when the promise gap is $1/poly(n)$,  (promise gap is  the value of $b-a$ in LHP), and for the setting where
 the guided state is given, 
 that LHP with succinct \textit{guided} state is $\BQP$-$\Complete$. When the promise gap is a constant,  one can efficiently dequantize the quantum phase estimation algorithm.  Our main theorem --- The $\MA$-$\Complete$ in Theorem \ref{thm:intro} --- does not contradict  with their  result  since we are working with
 succinct \textit{ground} state rather than  with
succinct \textit{guided} states, and on $1/poly(n)$ promise gap rather than constant promise gap.  Our setting is more similar to \cite{liu2020stoqma}, where  Liu proved that stoquastic LHP with succinct ground state is $\MA$-$\Complete$. The main difference between our work and Liu's is that we do not  assume the Hamiltonian to be stoquastic. 

Besides,
 although aiming for different tasks, one may wonder  whether the dequantization techniques initiated by Tang~\cite{tang2019quantum,chia2022sampling,chia2020quantum} work for our setting. Tang \textit{et al.}'s  settings are very different from ours, since they require sample access to the data, while the concept of succinct  does not imply sample access. To illustrate this point, consider a  3SAT formula $S(x)$, where
$x=x_1x_2...x_n$ are the values of the variables, $S(x)\in\{0,1\}$ denotes whether $x$ is  a satisfying assignment. By definition, 
 the $n$-qubit quantum state $\ket{\phi}$ which is a uniform superposition of all satisfying assignments is succinct, since the 3SAT formula can compute the amplitude of an unnormalized version of $\ket{\phi}$. However, it is not evident that how to efficiently obtain uniform samples of all satisfying assignments from the 3SAT formula.

\noindent\textbf{Verification of matrix products.}
There is an extensive study of the complexity of matrix verification~\cite{freivalds1979fast,buhrman2004quantum}, that is, given query access to  three matrices $A,B,C\in\bR^{2^n\times 2^n}$, verifying whether $AB=C$. Here we use $2^n$ as the matrix size for convenience  of comparison with our setting. 

 In particular, \cite{freivalds1979fast} gave a classical algorithm with high probability in time proportional to $2^{2n}$, and \cite{buhrman2004quantum}  gave a quantum algorithm in time $O(2^{5n/3})$.
 Our setting is conceptually related to verifying matrix product where the matrices have different sizes,  that is,  related to the question where  $\ket{\phi}\in \bR^{2^n}, H\in\bR^{2^n\times 2^n}$, and  testing whether the multiplication $\langle \phi|H|\phi\rangle \leq a$ or $\geq b$. In Theorem \ref{thm:intro} we claim that we have a $\MA$-type algorithm which only  needs $poly(n)$-queries instead of $poly(2^n)$. Note that this is achieved because we are testing ground energy
  rather than  testing matrix multiplication for arbitrary $\ket{\phi}, H$. Besides, in our setting the norm of $H\in \bR^{2^n\times 2^n}$ is bounded by $poly(n)$. A more detailed generalization  where our algorithm works  can be seen in Appendix \ref{appendix:matrix_verificaion}. This generalization only involves linear algebra and thus might be  easier to understand for readers outside quantum information.

\subsection{Proof overview}\label{sec:proofoverview}
 
 To prove Theorem \ref{thm:intro}, one needs to prove that LHP with succinct ground state is $\MA$-$\hard$ and is inside $\MA$. 
 The $\MA$-$\hardness$ directly comes from the Section 4 in
 \cite{bravyi2006complexity}, which is originally designed to prove that stoquastic Hamiltonian is $\MA$-$\hard$.  We  explain  why it implies LHP with succinct ground state is $\MA$-$\hard$ in more detail in Appendix \ref{appendix:MAhard}.

Our main contribution is proving LHP with succinct ground state is inside $\MA$. For better illustration, we make some simplifications here. Note that if an $n$-qubit state $\ket{\phi}$ is succinct, one can efficiently compute the ratio of any two amplitudes,  that is,  $\frac{\langle x|\phi\rangle}{\langle y|\phi\rangle}$. 
  Since we only use the ratios, to ease notations, here we assume that we can efficiently compute the amplitude $\langle x|\phi\rangle,\forall x$, that is, we have \textit{query access} to $\ket{\phi}$.
  For simplicity, in this section, we also assume that 
  $a=0, b=1/poly(n)$, and assume that $H$ and its ground state are real-valued. We always use the notations $\ket{\psi}$ for the true ground state, and $\ket{\phi}$ for an arbitrary state.

   To explain our $\MA$ verification protocol, 
  we begin with a direct algorithm that fails, then illustrate the ideas to  
  overcome the difficulties. Note that when 
  $a=0, b=1/poly(n)$,  given $H$ and query access to a succinct state $\ket{\phi}$,
   the LHP with succinct ground state is equivalent to test
   \begin{itemize}
       \item  Whether  $H\ket{\phi}=0$ with  $\lambda(H)=0$,
       \item Or $\langle \phi|H|\phi\rangle\geq 1/poly(n)$.
   \end{itemize}
    Here we assume 
   $\ket{\phi}\neq 0$ which can be checked easily  by providing a $x$ where $\langle x|\phi\rangle\neq 0$. 
  
   One cannot directly compute  $H|\phi\rangle$,
    since both $H$ and $\ket{\phi}$ are of exponential size. However, note that since $H$ is sparse,  one can efficiently check every row of  $H|\phi\rangle$,  that is, 
   \begin{align}
   		\langle x|H|\phi\rangle =\sum_y \langle x|H|y\rangle \langle y|\phi\rangle.
   \end{align}
Utilizing this,
   one may immediately come up with the following verification algorithm:
 
\begin{algorithm}
\caption{A direct algorithm}	\label{algo:direct}
\begin{algorithmic}[1]
		 \FOR{$t=1$ to $poly(n)$}		
            \STATE Randomly sample $x\leftarrow \{0,1\}^n$
			\STATE If $\langle x|H|\phi\rangle \neq 0$, return Reject
			 \hfill $\rhd$ Check row $x$
			\ENDFOR
		\STATE Return Accept
\end{algorithmic}
\end{algorithm}

Unfortunately,  Algorithm \ref{algo:direct} does not guarantee soundness. It checks the number of zeros in $H\ket{\phi}$ rather than the corresponding energy $\langle \phi|H|\phi\rangle$. Here is a simple counter-example: Let $H=I$, let $\ket{\phi}=|0\rangle^{\otimes n}$. $H$ is a No instance since $\lambda(H)=1$, while Algorithm \ref{algo:direct} accepts with probability close to $1$. Although one can easily rule out this case by adding preprocessing, it is easy to construct more complex counter-examples which still make Algorithm \ref{algo:direct} fail.

 Our key idea to circumvent the above problem, is instead of uniformly sampling the rows to be checked, we use the quantum Monte Carlo method to give different weights to different rows. Specifically, we add  checks in the quantum Monte Carlo algorithm which is originally designed for  sampling from the ground state.
 We first explain how \cite{bravyi2010complexity}  implicitly\footnote{\cite{bravyi2010complexity} does not explain  their results in the way we describe here.  In their setting the succinct classical circuit for the ground state is unknown, their main focus is constructing a circuit that can compute the ratio of the amplitudes,  which  obscures the idea we described here.}  used this idea to prove that
 LHP  w.r.t.  frustration-free stoquastic Hamiltonian  is $\MA$-$\Complete$, then explain how we adapt this idea to general Hamiltonians. A remark is that the $\MA$ verification protocol below is only a simplified version of the final protocol. In the final protocol, we need more witnesses and more checks. In particular, for the witness, we not only need a classical circuit for computing the amplitude of the claimed ground state, but also need a claimed ground energy, and a ``good" computational basis, which is a warm start of the Markov chain. More details are in Section \ref{Sec:pre}.

  Let us begin with the $\MA$ verification protocol when $H$ is stoquastic, where stoquastic means $\langle x|H|y\rangle\leq 0$ for all $x\neq y$. 
When $\lambda(H)=0$,    one can show that $H$ has  a ground state $\ket{\psi}$ whose amplitudes are real and non-negative~\cite{bravyi2010complexity}. 
  Given query access to the  ground state $\ket{\psi}$, one can connect $H$ to a Markov Chain~\cite{bravyi2010complexity}, whose transition matrix $P$ is defined as 
 \begin{align}\label{eq:px}
 	P_{x\rightarrow y} =\langle y|P|x\rangle := \langle y| I-\beta H|x\rangle \frac{\langle y|\psi\rangle}{\langle x|\psi\rangle},
 \end{align}
 where $\beta\leq 1/\|H\|$ is  to make $\langle y|I-\beta H|x\rangle\geq 0$, thus to make $\langle y|P|x\rangle\geq 0$ since it represents a probability. 
 Furthermore, when $\lambda(H)=0$,  every column of $P$ sums to $1$,  thus $P$ is a stochastic matrix and  a \textit{legal transition matrix}.
  The key connection between the Hamiltonian $H$ and the Markov chain $P$, is that the stationary distribution of $P$ is the distribution of sampling the ground state,  that is,   $x\sim |\langle x|\psi\rangle|^2$. Thus one can   sample from the ground state by a random walk w.r.t. $P$, where the mixing time depends on the spectral gap of $P$. A modification~\cite{bravyi2010complexity} of this sampling algorithm can be used
  to decide whether $\lambda(H)=0$  or $\lambda(H)\geq 1/poly(n)$
   when $H$ is stoquastic. Specifically, consider performing the random walk for  time $t=poly(n)$  --- In the Yes instance, $P$ is a legal transition matrix thus the random walk is always well-defined. In the No instance, for some $x$, the probability $P_{x\rightarrow y }$ is negative or the  sum of column $x$ of $P$ is not $1$, thus the random walk is not well-defined. The algorithm rejects if the random walk meets that $x$. Furthermore, in the No instance, one can show that
 the accepting probability will be upper bounded by a value proportional to $(1-\beta \lambda(H))^t$,  which decays exponentially fast since  $\lambda(H)\geq 1/poly(n)$ and $\beta= 1/poly(n)$ in the No instance.

 When $H$ is not stoquastic,   $\langle y|I-\beta H|x\rangle$ can have both positive and   negative entries, so does the entries in the   ground state $\ket{\psi}$. Thus the transition matrix defined in Eq.~(\ref{eq:px}) is not well-defined even in the Yes instance. 
 One way to handle this problem,  that is,  making the transition matrix well-defined, is to use the fixed node Monte Carlo method~\cite{ten1995proof} introduced in the related work section, which constructs a stoquastic Hamiltonian $\FH$ from a general Hamiltonian and an arbitrary state $\ket{\phi}$, with the property that $\lambda(\FH)\geq \lambda(H)$. Although $\FH$ is stoquastic and thus can be connected to a Markov chain,
  the main drawback of using $\FH$ directly is that the norm of $\FH$ can be exponentially large. 
 To define a legal probability in Eq.~(\ref{eq:px}),  that is, $P_{x\rightarrow y}\geq 0$, one needs to choose 
  the scaling factor $\beta\leq 1/\|H\|$ to be exponentially small, 
   which might in turn  make the accepting probability in the No instance, which is upper bounded by   $\sim(1-\beta \lambda(H))^t$, to be close to $1$. In other words,
   the exponentially small scaling factor 
   $\beta$ in  Eq.~(\ref{eq:px})
    hides any differences between the $H$ in the  Yes and No instance, thus making it hard to efficiently distinguish the two cases.

Instead of using a discrete-time Markov chain (DTMC), we build our protocol from 
the continuous-time Markov Chain (CTMC) by Bravyi \textit{et al.} \cite{bravyi2022rapidly}, which is based on the fixed node Hamiltonian $\FH$.  In case that the readers are not familiar with CTMC, we briefly explain some key concepts here. Recall that a DTMC is described by a transition matrix $P$, whose entries denote the transition probability. A CTMC is described by a generator matrix $G$, whose entries denote the transition rate, that is, for small $t$ and $x\neq y$, the probability of jumping from $x$ to $y$ is approximately $\langle y|G|x\rangle \cdot t$. Let the initial distribution be $\nu$, the distribution after
 evolving DTMC w.r.t.  time $t$ and transition matrix $P$  is  $P^{t}\nu$; The distribution after
 evolving CTMC w.r.t.  time $t$ and generator $G$ is   $e^{Gt}\nu$.  
 
 Now we review the key property of the CTMC~\cite{bravyi2022rapidly}, and explain  our $\MA$ verification protocol for the non-stoquastic $H$. The generator  of the CTMC~\cite{bravyi2022rapidly} is a rescaled version of $\FH$, that is, the 
 $\GHt$  defined as
	\begin{align}		\langle y| \GHt|x\rangle := 
			 &\lambda(\FH)\delta_{y,x}-\langle y|\FH|x\rangle \frac{\langle y|\phi\rangle}{\langle x|\phi\rangle}.
	\end{align}
	Note that $\GHt$ may also have an exponentially large norm. That means, if one simulates the  evolution  w.r.t. $\GHt$ and  time $t=poly(n)$ with a \textit{fixed} step-size,  that is,  $e^{Gt}=\left(e^{G\delta}\right)^{t/\delta}$, one need to choose $\delta$ to be exponentially small thus the algorithm runs in time $t/\delta$ and is in-efficient. 
 The key result in \cite{bravyi2022rapidly} is that if $\ket{\phi}$ is the true ground state of $H$, then
 \begin{itemize}
     \item[(1)] $\GHt$ is a \textit{legal generator}.
     \item[(2)]Furthermore, the CTMC w.r.t.  time $t=poly(n)$ can be simulated efficiently by Gillespie's algorithm.
 \end{itemize}
   Here  Gillespie's algorithm is another standard CTMC simulation method with \textit{varying} step-size.  In particular, \cite{bravyi2022rapidly} proved that if the Hamiltonian has a spectral gap and some other properties, 
  Gillespie's algorithm   converges to the stationary distribution of the CTMC in $poly(n)$ time. A very rough interpretation is that by using the  Gillespie's algorithm, 
 simulating this CTMC does not require the exponentially small scaling factor $\beta$ anymore, which is what we are searching for.
 Inspired by the verification algorithm  for stoquastic Hamiltonian described previously, for general Hamiltonian,  our $\MA$-verification algorithm is adding consistency checks in the CTMC ---
 In the Yes instance, since $\GHt$ is a legal generator, the CTMC is always well-defined. In the No instance, some parts of the generator are not well-defined. We reject if the random walk meets such parts.

 The completeness and soundness of our protocols come from the following observations.  For the completeness, a concern one may have is, compared to \cite{bravyi2022rapidly}, our setting does not have a large spectral gap, thus even in the Yes instance the CTMC cannot be close to the ground state sampling. We observe that the rapid mixing property is irrelevant for deciding the LHP. Instead what we need is 
 \begin{itemize}
     \item[(1)]  $\GHt$ is a legal generator thus the random walk is well-defined.
     \item[(2)]  One can efficiently simulate the CTMC   w.r.t. time $t=poly(n)$. 
 \end{itemize}
 Note that whether a generator is legal is independent of whether the generator has a large spectral gap.
  We will show that both of the above conditions hold   without a spectral gap~\cite{bravyi2022rapidly}.
  The soundness of our protocol mainly comes from the property of the Fixed node Hamiltonians, that is,  the ground energy of $\FH$ is always an upper bound of the ground energy of $H$: $$\lambda(\FH)\geq \lambda(H),$$ for any adversary state $\ket{\phi}$~\cite{ten1995proof}. Thus in the No case, 
 the accepting probability  decreases exponentially fast since $$\lambda(\FH)\geq \lambda(H)\geq 1/poly(n).$$

\subsection{Conclusion and future work}

In this manuscript, we study the underlying complexity question of existing classical algorithms for LHP,  which assumes that the ground state has a succinct classical description. More specifically, we define the local Hamiltonian problem with succinct ground state, and prove this problem is $\MA$-$\Complete$. A remark is that similar to stoquastic,  succinct is a basis dependent property.
It might be interesting to study the computational complexity of finding the basis which makes the ground state succinct (if such basis exists), where similar questions for stoquastic have been studied in \cite{marvian2019computational,klassen2019two,ioannou2020termwise}. In addition, 
the $\MA$-$\Complete$ result is established by simulating a continuous-time Markov chain  using Gillespie's algorithm. An interesting avenue for future work would be to investigate whether this result could also be demonstrated without this continuous-time technique.

As we illustrate in the introduction,
one of the most interesting open questions, is to relax the assumption from classical access of the ground state, to classical access of certain guided states. That is, giving a proper definition of strong guided states where the LHP with strong guided states is $\MA$-$\Complete$. 
Based on our work, we give a candidate definition of strong guided states as Definition \ref{def:strong}, and give the corresponding  Conjecture \ref{conj:intro}. The intuitive reason that we believe Conjecture \ref{conj:intro} is true, is because  according to Eq.~(\ref{eq:strong}), the strong guided state contains two key pieces of information: (a) The strong guided state knows the sign of every amplitude, which makes the problem closer to stoquastic Hamiltonian. (b) The strong guided state roughly knows every amplitude with certain errors. Note that although in this manuscript we choose a basic Markov chain which is  sensitive to errors and  cannot handle (b),  there  exist more complicated and robust quantum Monte Carlo methods. In particular, the Projection Monte Carlo algorithm in \cite{bravyi2014monte} can handle similar errors as in (b), where they proved if the Hamiltonian is stoquastic, then LHP with strong guided state is $\MA$-$\Complete$. The obstacle which  prevents us from  extending  our  Theorem \ref{thm:intro} to Conjecture \ref{conj:intro} by using the Projection Monte Carlo method \cite{bravyi2014monte}, is   the step before using the Markov chain.   
That is, the fixed node Hamiltonian $\FH$ builds from a  strong guided state $\ket{\phi}$ might not keep the promise gap. One way to tackle Conjecture \ref{conj:intro} is to develop or identify
 variants of fixed node Hamiltonians which keep the promise gap when assisted with the strong guided states.

Our primary motivation for defining strong guided states is to dequantize quantum algorithms, based on the conjectures~\cite{lee2023evaluating} that the guided states from existing classical algorithms are better than the standard guided states.
However, on the other hand, it is also interesting to ask, with strong guided states, can we design better quantum algorithms? Here better means fewer gates, lower depth quantum circuits, or much easier to implement in near-term quantum devices.

\subsection{Structure of the manuscript}

The structure of this manuscript is as follows. In Section \ref{sec:def}  we give notations and definitions which are used throughout this manuscript. In Section \ref{sec:preliminary} we introduce the techniques used in our verification protocol.  In Section \ref{sec:red} we briefly explain the argument from \cite{bravyi2022rapidly} that  w.l.o.g.  we can assume that the Hamiltonian and states are real-valued. In Section \ref{sec:CTMC} we review the continuous-time  Markov chain and  Gillespie's algorithm. In Section \ref{sec:FH} and Section \ref{sec:CTMCFH} we list  the construction and key properties of the fixed node Monte Carlo method~\cite{ten1995proof},
 and the variant of using continuous-time Markov chain~\cite{bravyi2022rapidly}.
 
  Finally, in Section \ref{sec:MA} we give our $\MA$ verification protocol and prove Theorem \ref{thm:intro}. In particular,
  in Section \ref{sec:CT} we assume that we have access to continuous-time randomness,  that is, 
we assume  that we can sample $u$ from uniform distribution of $[0,1]$ in $poly(n)$ time. In Section \ref{sec:DT} we substitute this assumption by discrete randomness and prove that the error induced by the discretization is small.

\section{Notations and Definitions}\label{sec:def}

\textbf{Notations.} 
We use $\bR,\bC$ to represent the real field and complex field.
Let $S\subseteq \{0,1\}^n$, we use  $\ket{\phi}\in \bR^{|S|}$ to denote a vector,
  $H,M \in\bR^{|S|\times |S|}$ to denote matrices, and  $x,y\in S$ to denote bit strings.
  The entries of $\ket{\phi}$ are indexed by $x\in S$, denoted by $\langle x|\phi\rangle$, similarly for $H,M$.

 we use $\|\ket{\phi}\|$ to denote the vector L2 norm.
 We say   $\ket{\phi}$ is normalized if  $\|\ket{\phi}\|=1$. We say an un-normalized state $\ket{\phi}$ is regularized if $\langle x|\phi\rangle\neq 0$ for all $x\in S$. 
 We use $\|M\|$ to denote its  spectral norm,  that is,  $$\|M\|:=\max_{\ket{\phi}\in \bR^{|S|},\ket{\phi}\neq 0}\frac{\|M\ket{\phi}\|}{\|\ket{\phi}\|}.$$ 
 We use $amax(M)$ to denote $\max_{x,y\in S} |\langle x|M|y\rangle|$. For a Hermitian $H$, we use $\lambda(H)$ to denote its ground energy,  that is,  the minimum eigenvalue. We use 
$\lambda_{max}(H)$ to denote its maximum eigenvalue. For $d=poly(n)$,
we say $H$  is $d$-sparse if
\begin{itemize}
    \item[(1)] Each row and each column of $H$ only have $d$ non-zero entries.
    \item[(2)] Given a row index $x\in S$ (or column index $y\in S$), there is a poly(n)-time algorithm which can list all the non-zero entries in  row $x$ (or column $y$),  that is,  list all $z\in S$ such that $\langle x|H|z\rangle\neq 0$, and similarly for column $y$.
\end{itemize}
  We say a Hermitian matrix $H$ or a vector $\ket{\phi}$ is real-valued if $\langle x|H|y\rangle \in \bR, \forall x,y\in S$, or $\langle x|\phi\rangle\in \bR,\forall x\in S$.  

A matrix $M$ is stochastic if 
\begin{itemize}
    \item[(1)] $\langle y|M|x\rangle\geq 0, \forall x$ and $y$. 
    \item[(2)] $\sum_{y} \langle y|M|x\rangle=1, \forall x$.
\end{itemize}
A Hermitian matrix $M$ is stoquastic if $$\langle x|M|y\rangle\leq 0, \forall x\neq y,x\in S,y\in S.$$

Given a vector $\ket{\phi}\in \bC^{|S|}$, we use $supp(\phi)$ to denote the positions of non-zero entries,  that is,  $$supp(\phi):=\{x\in S| \langle x|\phi \rangle\neq 0\}.$$  We use $\ket{\phi_{supp}}\in \bC^{|supp(\phi)|}$ to denote the vector obtained by deleting $0$ values in $\ket{\phi}$. For a vector $\ket{\phi}\in\bR^{|S|}$, we use $Diag(\phi)\in\bR^{|S|\times |S|}$ to denote the diagonal matrix where $$\langle x|Diag(\phi)|x\rangle:=\langle x|\phi\rangle,\text{ for }x\in S.$$ When $S=\{0,1\}^n$, we 
call $H,M\in\bC^{2^n\times 2^n}$  $n$-qubit operators, and $\ket{\psi}, \ket{\phi}\in\bC^{2^n}$  $n$-qubit vectors. 
When considering an $n$-qubit Hermitian $H$, we will always use $\ket{\psi} \in\bC^{2^n}$ to denote the ground state,  and $\ket{\phi}\in\bC^{2^n}$ to denote an arbitrary $n$-qubit state, which may or may not be the ground state. 
 Note that in this manuscript  $\ket{\psi}$  and $\ket{\phi}$  may be \textbf{unnormalized}.

An $n$-qubit $k$-local Hamiltonian is an $n$-qubit Hermitian operator $H\in\bC^{2^n\times 2^n}$, where $H=\sum_{j=1}^m H_j$, and $H_j$ acts nontrivially on at most $k$ qubits. We always assume that $k$ is a constant, $m=poly(n)$, and $\|H\|\leq poly(n).$

For any real number $\alpha$, we use $\lfloor \alpha\rfloor$  for the largest integer which is smaller than $\alpha$, and $\lceil \alpha\rceil$ for the smallest integer which is greater than $\alpha$. We use $\ln$ (or $\log$) to represent the logarithm w.r.t. the natural exponent $e$ (or $2$).

Let $w$ be a random variable. We use w.p. as an abbreviation for with probability.
We say $w$ is sampled from the uniform distribution $[0,1]$, if the probability density function is $p(w)=1,\forall w\in [0,1]$, and $p(w)=0$ for $w<0$ and $w>1$. We say $w$ is sampled from the exponential distribution of parameter $\lambda>0$, if the probability density function is $p(w)=\lambda e^{-\lambda w}$ for $w\geq 0$, and  $p(x)=0$ for $w<0$. 

We also consider the truncated discretized version of the exponential distribution:  
\begin{definition}
	Let $K$ be an integer, $\delta\in\bR$ be a small value,  and $\lambda>0$. We define the truncated discretized exponential distribution  $\cD_{K,\delta,\lambda}$ as a distribution over $\{k\delta\}_{k=0,1,..,K}$. In particular,
we say $w$ is sampled from $\cD_{K,\delta,\lambda}$
 if the probability is
\begin{align*}
	 &Pr_{w \sim \cD_{{K,\delta,\lambda}}}(	w=k\delta)=\\
  & \left\{
	 \begin{aligned}
	 	& \exp(-\lambda k\delta) - \exp(-\lambda (k+1)\delta), \text{ if $0\leq k\leq K-1$}\\
	 	& \exp(-\lambda k\delta), \text{ if $k=K$}\\
	 	&0 \text{ if  $k<0$ or $k>K$}
	 \end{aligned}
	 \right.
\end{align*}
\end{definition}
Note that 
$$\forall k\in\{0,...,K\}, Pr_{w\sim \cD_{{K,\delta,\lambda}}}(w\geq k\delta)=\exp(-\lambda k\delta).$$

\begin{definition}[Succinct quantum state]\label{def:ss}
	We say an $n$-qubit  normalized state $\ket{\phi}$ is succinct if there exists a poly$(n)$-size classical circuit $C_{\phi}$ which can compute the amplitude of  $\ket{\phi}$ up to a common factor,  that is,  $C_{\phi}(x)=c_{\phi}(n)\cdot \langle x|\phi\rangle$, where   $0<c_{\phi}(n)\leq 2^{poly(n)}$ is a function independent of $x$.
\end{definition}

Note that if $\ket{\phi}$ is succinct then one can compute the ratio of amplitudes of $\ket{\phi}$,  that is,  $\frac{\langle x|\phi\rangle}{\langle y|\phi\rangle}=\frac{C_{\phi}(x)}{C_{\phi}(y)}$. The requirement of $c_{\phi}(n)\leq 2^{poly(n)}$ comes from 
the fact that $C_{\phi}(x)$ should be efficiently described by $poly(n)$-bits.

\begin{definition}[LHP with succinct ground state]\label{def:Hss} 
 Given $(H,a,b)$ where $H=\sum_{i=1}^m H_i$ is an $n$-qubit $k$-local Hamiltonian, $\|H\|=poly(n)$, $m=poly(n)$, $k$ is a constant; $a,b$ are two parameters and $b-a\geq 1/poly(n)$. Besides, it is promised that one of the following holds:
	\begin{itemize}
		\item Yes instance: $\lambda(H)\leq a$, and there exists a  ground state $\ket{\psi}$ which is succinct.
		\item No instance: $\lambda(H)\geq b$.
	\end{itemize}
	The local Hamiltonian problem with succinct ground state is deciding whether $(H,a,b)$ is the Yes instance or the No instance.
\end{definition}

We implicitly assume that there is a sufficiently large polynomial $p(n)=poly(n)$, such that every value in Definition~\ref{def:Hss},  that is,  $\langle x|H|y\rangle,a,b,m, C_{\psi}(x)$ can be represented  by $p(n)$-bits. 
More clarification on precision can be found in Appendix \ref{appendix:precision} Remark \ref{remark:precision}. To ease analysis,  we also assume that we can use $poly(n)$-time to sample the discretized exponential distribution $\cD_{K,\delta,\lambda}$ exactly for parameters specified in Appendix \ref{appendix:precision} Remark \ref{remark:dist}.

Recall our main theorem is 
\begin{theorem}\label{thm:main}
Under the precision assumptions (Remark \ref{remark:precision} and Remark \ref{remark:dist} in Appendix \ref{appendix:precision}),
 LHP with succinct ground state is $\MA$-complete.
\end{theorem}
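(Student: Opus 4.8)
I would prove $\MA$-completeness in the two usual halves, with all the work in the containment. For $\MA$-$\hardness$ it suffices to invoke the construction of Section~4 of~\cite{bravyi2006complexity}: there the Yes-instance Hamiltonian is frustration-free and stoquastic, so it has a nonnegative ground state, and that ground state is, up to normalization, a fixed weighted superposition over the history states of the accepting computations of the $\MA$ verifier; its unnormalized amplitudes are therefore computed by a $poly(n)$-size classical circuit read off from the verification circuit. Hence Yes instances of stoquastic frustration-free LHP are Yes instances of LHP with succinct ground state while No instances carry over trivially, which gives the reduction (details deferred to Appendix~\ref{appendix:MAhard}). The rest goes into showing LHP with succinct ground state is in $\MA$.

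\textbf{The verification protocol.} After the preprocessing of Sec.~\ref{sec:red} I may assume $H$ and its ground state are real-valued, and after an affine rescaling that $a,b,\|H\|$ and all entries stay $poly(n)$-bit representable as in Remark~\ref{remark:precision}. The prover sends a $poly(n)$-size circuit $C_\phi$ claimed to satisfy $C_\phi(x) = c\langle x|\phi\rangle$ for a normalized ground state $\ket{\phi}$; the verifier evaluates every amplitude ratio on $supp(\phi)$, computes the claimed ground energy $E^\star := \sum_y \langle x_0|H|y\rangle\, C_\phi(y)/C_\phi(x_0)$ for a fixed $x_0 \in supp(\phi)$, and rejects at once if $supp(\phi)=\emptyset$ or $E^\star > a$. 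Otherwise it forms, on $supp(\phi)$, the fixed-node Hamiltonian $\FH$ of~\cite{ten1995proof} (Sec.~\ref{sec:FH}) --- stoquastic with respect to $\mathrm{sign}(\phi)$, with $\FH\ket{\phi}=H\ket{\phi}$ and $\lambda(\FH)\geq\lambda(H)$ --- and runs Gillespie's algorithm (Sec.~\ref{sec:CTMC}) on the continuous-time Markov chain of~\cite{bravyi2022rapidly} with generator $\GHt = Diag(\phi)\,(E^\star I - \FH)\,Diag(\phi)^{-1}$, started at a uniformly random $x_0\in supp(\phi)$, for total time $t=poly(n)$, drawing holding times from the truncated discretized exponential distribution $\cD_{K,\delta,\lambda}$ of Remark~\ref{remark:dist}. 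At each jump it checks legality of the exposed local data --- every off-diagonal rate $\langle y|\GHt|x\rangle\geq 0$ (automatic from stoquasticity), the exit rate $-\langle x|\GHt|x\rangle\geq 0$, and the killing rate $E_{loc}(x)-E^\star\geq 0$ with $E_{loc}(x):=\langle x|H|\phi\rangle/\langle x|\phi\rangle$ --- rejecting if any check fails; it accepts if the walk is never rejected and survives (is not killed) up to time $t$.

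\textbf{Completeness and soundness.} If $\ket{\phi}=\ket{\psi}$ is the promised succinct ground state, then $E^\star=\lambda(H)\leq a$ and $\FH\ket{\psi}=\lambda(H)\ket{\psi}$, so $E_{loc}(x)\equiv E^\star$ on $supp(\psi)$ and $\GHt$ is a bona fide conservative generator with zero killing; the simulated chain never rejects, always survives, and the verifier accepts except for the $2^{-poly(n)}$ error from discretizing the holding-time law, bounded in Sec.~\ref{sec:DT}. If instead $\lambda(H)\geq b$, then for an arbitrary not-yet-rejected $\ket{\phi}$ we still have $\lambda(\FH)\geq\lambda(H)\geq b$ by the fixed-node property and $E^\star\leq a$, so $E^\star I-\FH\preceq (a-b)I\prec 0$; since $\GHt$ is similar to $E^\star I-\FH$ via $Diag(\phi)$, the total surviving mass of the simulated process --- a legitimate sub-stochastic semigroup wherever the legality checks pass, and a reject otherwise --- is at most $\kappa(Diag(\phi))\,e^{-(b-a)t}$, where $\kappa(Diag(\phi))=\max_x|\langle x|\phi\rangle|/\min_{x\in supp(\phi)}|\langle x|\phi\rangle|\leq 2^{O(p(n))}$ by Remark~\ref{remark:precision}. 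Choosing $t$ a large enough polynomial so that $2^{O(p(n))}e^{-(b-a)t}<2^{-n}$ drives the No-instance acceptance probability below $2^{-n}$, and this bound survives discretization; hence the protocol is an $\MA$ protocol and Theorem~\ref{thm:main} follows.

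\textbf{Main obstacle.} The crux is soundness against an adversarial witness. Unlike in~\cite{bravyi2022rapidly}, where $\ket{\phi}$ is the trusted ground state, here $\GHt$ need not be a legal CTMC generator at all, there is no spectral gap to exploit, and the Gillespie simulation exposes only a single random trajectory rather than the operator $e^{t\GHt}$ --- so the operator-norm contraction $e^{-(b-a)t}$ of $E^\star I-\FH$ must be converted into an upper bound on the actual (discrete-randomness) acceptance probability while paying for the possibly exponential condition number $\kappa(Diag(\phi))$, which is exactly why $t$ must be polynomial in the bit-precision $p(n)$ and not merely in $1/(b-a)$. A second, more delicate point is the discretization of Sec.~\ref{sec:DT}: the continuous-time analysis of Sec.~\ref{sec:CT} passes through solutions of linear ODEs for the survival probability, which are perturbation-sensitive, so one must show that replacing the continuous uniform and exponential randomness by $\cD_{K,\delta,\lambda}$ perturbs every relevant quantity by only $2^{-poly(n)}$ --- which is what forces the particular choices of $\delta,h,K$ in Remark~\ref{remark:dist}.
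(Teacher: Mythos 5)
Your architecture --- fixed-node Hamiltonian, legality checks along a Gillespie trajectory, and a spectral bound $\lambda_{max}(E^\star I-\FH_{good})\le a-b$ paid for by the condition number of $Diag(\phi)$ --- is the paper's argument, and your variant of letting the verifier compute $E^\star$ from one row and tolerating a nonnegative killing rate (rather than demanding exact column-sum zero as in line~\ref{aeq:GH} of Algorithm~\ref{algo:check}) is a harmless reformulation; the soundness half is essentially right. The first genuine gap is the start of the walk: you launch it (and anchor $E^\star$) at ``a uniformly random $x_0\in supp(\phi)$.'' The verifier has only the circuit $C_\phi$, and succinctness does not give sample access to $supp(\phi)$ (the paper's 3SAT example makes exactly this point), so this step is not implementable. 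More importantly, even granting such samples, a uniform start does not support completeness: the guarantee that the walk makes only polynomially many transitions comes from bounding the \emph{expected} transition count when the start is drawn from the stationary distribution $\pi(x)=|\langle x|\phi\rangle|^2/c$ and then extracting, by averaging, a single good $\hat{x}_{in}$ (Lemma~\ref{lem:effS}); under the uniform distribution on $supp(\phi)$ no such bound is available. The paper therefore makes $\hat{x}_{in}$ part of the witness.

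The second gap is termination. Because $\|\GHt\|$ can be exponentially large, holding times can be exponentially small and a time-$t$ Gillespie run can make exponentially many jumps even in the Yes instance; the verifier must cap the transition count at $M=poly(n)$ and reject beyond it (lines~\ref{aeq:cm}--\ref{aeq:cm2}), and completeness then rests on Lemma~\ref{lem:effS} showing the cap is respected with probability at least $1/2$ from the well-chosen start. Consequently the Yes-instance acceptance probability is $1/2$, not $1-2^{-poly(n)}$: your claim that in the Yes case the chain ``always survives'' and the verifier ``accepts except for the discretization error'' silently assumes both an efficient simulation and a favorable starting point, which are precisely the points requiring proof.
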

We will prove Theorem \ref{thm:main} in the following sections. A remark is that a slightly modified  proof can generalize  Theorem \ref{thm:main} from  local Hamiltonians  to  sparse Hamiltonians  whose spectral norm is bounded by $poly(n)$.

\section{Preliminaries}\label{sec:preliminary}

We first give two facts on the exponential function and the  spectral norm, which will be used repetitively. The proof can be found in Appendix \ref{appendix:fact}.
\begin{fact}\label{fact:ex}
	 For any $x\in[-1,1]$, $|e^{-x} - (1 -x)| \leq 2x^2.$
\end{fact}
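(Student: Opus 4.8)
The plan is to reduce the claim to a standard Taylor-remainder estimate for the exponential. First I would substitute $t := -x$, so that $t$ ranges over $[0,1]$ and the target inequality becomes $|e^{t} - (1+t)| \le 2t^2$. This removes the sign bookkeeping and lets me work with a nonnegative argument throughout.

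Next I would observe that the quantity inside the absolute value is already nonnegative: since $e^{y} \ge 1+y$ for every real $y$ (convexity of the exponential, i.e. its graph lies above the tangent line at $0$), applying this with $y = t$ gives $e^{t} - (1+t) \ge 0$. Hence the absolute value can simply be dropped, and it suffices to bound $e^{t} - (1+t)$ from above by $2t^2$ on $[0,1]$.

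For the upper bound I would use the power series $e^{t} = \sum_{k\ge 0} t^k/k!$, so that $e^{t} - (1+t) = \sum_{k\ge 2} t^k/k!$. For $t\in[0,1]$ and $k\ge 2$ we have $t^k \le t^2$, whence $e^{t}-(1+t) \le t^2\sum_{k\ge 2} 1/k! = (e-2)\,t^2$. Since $e-2 < 1 < 2$, this gives $e^{t} - (1+t) \le 2t^2$, and undoing the substitution yields $|e^{-x}-(1-x)| \le 2t^2 = 2x^2$ for all $x\in[-1,0]$. An equivalent route is Taylor's theorem with Lagrange remainder: $e^{t} - (1+t) = \tfrac12 e^{\xi} t^2$ for some $\xi\in[0,t]\subseteq[0,1]$, and $e^{\xi}\le e < 4$ gives the bound with room to spare.

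There is essentially no obstacle here; the only points requiring minor care are the direction of the substitution, confirming the sign of $e^{-x}-(1-x)$ so the absolute value is handled correctly, and checking that the relevant numerical constant ($e-2$ in the series argument, or $e/2$ in the Lagrange-remainder argument) is comfortably below $2$ on $[0,1]$.
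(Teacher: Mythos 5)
Your proof is correct, and your second route (Taylor's theorem with Lagrange remainder, bounding $e^{\xi}$ by $e$) is exactly the argument the paper gives; the power-series variant you lead with is an equally valid, essentially equivalent elementary rephrasing. Nothing is missing.
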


\begin{fact} \label{fact:prin}
	Let $S'\subseteq S\subseteq \{0,1\}^n$ be two non-empty sets. Let $M\in\bR^{|S|\times |S|}$ be a Hermitian matrix.
	 Let $N\in\bR^{|S'|\times |S'|}$ be the submatrix of $M$ by restricting rows and columns in $S'$. Then  
	 \begin{align*}
	  \lambda(M)\leq \lambda(N)\leq \lambda_{max}(N) \leq \lambda_{max}(M).
	 \end{align*}
	 In particular, 
	 \begin{align*}
	 	\|N\|\leq \|M\|.	
	 \end{align*}

\end{fact}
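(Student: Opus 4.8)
\textbf{Proof proposal for Fact \ref{fact:prin}.} The plan is to reduce everything to the Rayleigh–quotient (variational) characterization of the extreme eigenvalues of a Hermitian matrix, using a zero-padding embedding of $\bR^{|S'|}$ into $\bR^{|S|}$. Concretely, for a vector $\ket{v}\in\bR^{|S'|}$ define $\ket{\tilde v}\in\bR^{|S|}$ by $\langle x|\tilde v\rangle := \langle x|v\rangle$ for $x\in S'$ and $\langle x|\tilde v\rangle := 0$ for $x\in S\setminus S'$. Since $N$ is exactly the principal submatrix of $M$ indexed by $S'$, one has $\langle \tilde v|M|\tilde v\rangle = \langle v|N|v\rangle$ and $\|\ket{\tilde v}\| = \|\ket{v}\|$; in particular $\ket{v}\neq 0$ iff $\ket{\tilde v}\neq 0$. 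Hence every Rayleigh quotient of $N$ is also a Rayleigh quotient of $M$.

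From this the eigenvalue inequalities follow immediately from $\lambda(A)=\min_{\ket{w}\neq 0}\langle w|A|w\rangle/\|\ket{w}\|^2$ and $\lambda_{max}(A)=\max_{\ket{w}\neq 0}\langle w|A|w\rangle/\|\ket{w}\|^2$ for Hermitian $A$. Indeed, since the minimum for $N$ is taken over a set of quotients contained in the set of quotients for $M$,
\begin{align}
\lambda(M) = \min_{\ket{w}\in\bR^{|S|},\ket{w}\neq 0}\frac{\langle w|M|w\rangle}{\|\ket{w}\|^2} \leq \min_{\ket{v}\in\bR^{|S'|},\ket{v}\neq 0}\frac{\langle \tilde v|M|\tilde v\rangle}{\|\ket{\tilde v}\|^2} = \min_{\ket{v}\neq 0}\frac{\langle v|N|v\rangle}{\|\ket{v}\|^2} = \lambda(N),
\end{align}
and symmetrically $\lambda_{max}(N)\leq \lambda_{max}(M)$. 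The middle inequality $\lambda(N)\leq\lambda_{max}(N)$ is trivial (the minimum eigenvalue is at most the maximum eigenvalue, and $S'$ is nonempty so $N$ is a genuine matrix). I would state the min/max Rayleigh characterization either as a cited standard fact or prove it in one line via the spectral decomposition of the Hermitian matrix.

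For the norm bound, recall that for a Hermitian matrix $A$ one has $\|A\| = \max(\lambda_{max}(A),\,-\lambda(A))$ (again from the spectral decomposition: the spectral norm equals the largest eigenvalue in absolute value). Combining with the eigenvalue inequalities just proved, $\lambda_{max}(N)\leq \lambda_{max}(M)\leq \|M\|$ and $-\lambda(N)\leq -\lambda(M)\leq \|M\|$, so $\|N\| = \max(\lambda_{max}(N),-\lambda(N))\leq \|M\|$.

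There is no real obstacle here; the only points that warrant care are (i) verifying that the zero-padding map indeed preserves both the quadratic form and the norm — this is just bookkeeping with the index sets $S'\subseteq S$ — and (ii) being precise about the identity $\|A\|=\max(\lambda_{max}(A),-\lambda(A))$ for Hermitian $A$, which is where the ``Hermitian'' hypothesis is actually used (for non-Hermitian $M$ the statement would be false). I would keep the whole argument to a short paragraph in the appendix, invoking the spectral theorem for the two standard facts about Hermitian matrices and doing the embedding explicitly.
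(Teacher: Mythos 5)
Your proposal is correct and follows essentially the same route as the paper's appendix proof: zero-pad vectors (the paper pads eigenvectors of $N$, you take the min/max over all Rayleigh quotients, which is the same idea) and compare Rayleigh quotients of $N$ and $M$ via the variational characterization. Your explicit derivation of $\|N\|\leq\|M\|$ from $\|A\|=\max(\lambda_{max}(A),-\lambda(A))$ for Hermitian $A$ is a nice touch, as the paper leaves that step implicit.
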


\subsection{Reduction to real Hamiltonians}\label{sec:red}
In Definition \ref{def:Hss} the $n$-qubit $k$-local Hamiltonian $H$ and $n$-qubit state  $\ket{\psi}$ can have complex values. This section is to explain  w.l.o.g. we can assume that $H$ and $\ket{\psi}$ are real-valued by adding one ancilla qubit.  The following argument is simplified from  \cite{bravyi2022rapidly}.

Let $\ket{\phi}$ be any eigenstate of $H$ of eigenvalue $\alpha$.
Write $H=H_R+iH_I$ where $H_R,H_I$ are real-valued and $k$-local,
 write $\ket{\phi}=\ket{\phi_R}+i\ket{\phi_I}$ where $\ket{\phi_R},\ket{\phi_I}$ are real-valued. Define an $(n+1)$-qubit Hamiltonian $H'$ and $(n+1)$-qubit states $\ket{\phi^{'0}},\ket{\phi^{'1}}$ as
\begin{align*}
& H'=H_R\otimes I + H_I\otimes	 \begin{bmatrix}
	0&-1\\
	1&0
\end{bmatrix},\\
& \ket{\phi^{'0}} = \ket{\phi_R}\ket{0}+ \ket{\phi_I}\ket{1},\\
& \ket{\phi^{'1}} = -\ket{\phi_I}\ket{0}+\ket{\phi_R}\ket{1}.
\end{align*}
One can verify that $\ket{\phi^{'0}},\ket{\phi^{'1}}$ are orthogonal and are of eigenvalue $\alpha$ of $H'$. Besides, for any orthogonal $n$-qubit states
$\ket{\phi}$ and $\ket{\eta}$, the set $\{\ket{\phi^{'0}},\ket{\phi^{'1}}\}$ is orthogonal to  $\{\ket{\eta^{'0}},\ket{\eta^{'1}}\}$. 
Let $\ket{\phi}$ ranges over all $2^n$ eigenvectors of $H$, one will get a complete set of $2\cdot 2^n$ eigenvectors of $H'$. 
Thus $H$ and $H'$ have the same spectrum. In particular $H$ and $H'$ have the same ground energy. 

Given $H$, let $\ket{\psi}$  be its ground state, define the real-valued $H'$ and $\ket{\psi^{'0}}$ as above. Then
$H'$ is $(n+1)$-qubit, $(k+1)$-local. Further since $H$ is Hermitian, thus $H_R$ is symmetric and $H_I$ is anti-symmetric. Thus $H'$ is symmetric thus Hermitian.
 Note that one can efficiently calculate $\langle x'|H'|y'\rangle$, $\langle x'|\psi^{'0}\rangle$ with access to $\langle x |H|y\rangle,\langle x|\psi\rangle$, where  $x,y$ are the first $n$-bits of $x',y'$.  Thus  w.l.o.g. we assume that $H,\ket{\psi}$ are real-valued.

\subsection{Review of the CTMC and Gillespie's algorithm}
\label{sec:CTMC}

This section is a brief review of the continuous-time Markov chain and Gillespie's algorithm. More can be seen in Section 6 of \cite{andersonlecture} and other textbooks on Markov chain and random process~\cite{norris1998markov}. 
To ease notations,  in the following we abbreviate  the discrete state space, discrete-time Markov Chain as DTMC; and abbreviate the
discrete state space, continuous-time Markov Chain as CTMC. In this section, we use
 $S\subseteq\{0,1\}^n$ to denote the discrete  state space,   $P,G\in \bR^{|S|\times|S|}$ to denote matrices, and $x,y\in S$ to denote the states. For simplicity, we abbreviate $\sum_{x\in S}$ as $\sum_x$. 
 
 To introduce CTMC, we first review the concepts related to DTMC. 
 A DTMC with  discrete state space $S$ is a sequence of random variables $\{\xi(j)\}_{j=0,1...}$, where $\xi(j)\in S$ is the state at discrete time $j$. A DTMC is associated with   a  matrix $P\in \bR^{|S|\times|S|}$, which is called the transition matrix,  where $\langle y|P|x\rangle$ denotes the \textit{transition probability} from state $x$ to state $y$. $P$ is a \textit{legal} transition matrix  iff  $P$ is a stochastic matrix, that is, 
 \begin{align}
     \langle y|P|x\rangle\geq 0, \forall x,y;\text{ and  }\sum_y \langle y|P|x\rangle=1, \forall x.
 \label{eq:stochastic}
 \end{align}
  The DTMC w.r.t. $P$ is defined as the discrete-time stochastic process such that at any time $j$ where $\xi(j)=x$, one chooses the next state $\xi(j+1)$ to be $y$ w.p. $\langle y|P|x\rangle$.
 
 A  CTMC  with  discrete state space $S$ is a set of random variables $\{\xi(t)\}_{t}$, where $\xi(t)$ is the state at  time $t$, and $t$ can take values from a continuous interval, e.g. $t\in[0,1]$\footnote{The main difference between CTMC and DTMC is, in DTMC, the index of time is discrete, while in CTMC, the index of time is continuous.}. 
A CTMC is associated with  a matrix $G\in \bR^{|S|\times|S|}$, which is called the \textit{generator}, where $\langle y|G|x\rangle$ is called the \textit{transition rate} from state $x$ to state $y$. 
$G$ is a \textit{legal} generator iff  
$$\langle y|G|x\rangle\geq 0\text{ for }y\neq x;\text{ and  }\sum_y \langle y|G|x\rangle=0.$$  Note that when $G$ is a legal generator, one can verify that $\exp(Gt)$ is a stochastic matrix for any $t\geq 0$. A remark is that, to make the notations consistent with previous work~\cite{bravyi2022rapidly},  stochastic matrix in this manuscript (defined as in Eq.~(\ref{eq:stochastic})) has a column sum of 1 instead of a  row sum of 1.  
Also note that for a legal generator the diagonal elements are non-positive, since $$\langle x|G|x\rangle= -\sum_{y\neq x} \langle y|G|x\rangle\leq 0.$$  We say a distribution $\pi$ over the state space $S$ is a stationary distribution of the CTMC w.r.t. $G$ if 
$$\forall y,
	 \sum_x \langle y|G|x\rangle \pi(x)=0,$$
  which will imply 
	$$\sum_x \langle y|\exp(Gt)|x\rangle	\pi(x)=\pi(y), \forall y.$$ There are multiple ways to define a CTMC, here we define a CTMC via  Gillespie's algorithm as follows:
 Given a legal generator $G\in \bR^{|S|\times|S|}$, one can first define a DTMC named the embedded chain, where the transition matrix $P\in \bR^{|S|\times|S|}$ is defined as $\forall x,y\in S$,
 \begin{align}
 \langle y|P|x\rangle :=\left\{
 \begin{aligned}
 & 0, \text{ if $x=y$.}\\
 &  \frac{\langle y|G|x\rangle}{|\langle x|G|x\rangle|}, \text{ if $x\neq y$.}
  \end{aligned}
 \right.	
 \end{align}
 The CTMC w.r.t. $G$ can be defined via ``Embedded chain + Waiting time". Roughly speaking, 
 the CTMC w.r.t. $G$ is defined as the continuous-time stochastic process that when $\xi$  visits the state $x$ at time $\tau$, it will stay in the state $x$ for a ``waiting time" $\Delta \tau$, where $\Delta \tau$ is  sampled from the exponential distribution with parameter $|\langle x|G|x\rangle|$. Until time $\tau+\Delta \tau$,  it will move to the next state $y$ according to the embedded chain,  that is,  w.p. $\langle y|P|x\rangle$. More precisely, here we define
 the CTMC w.r.t. a legal generator $G$
 by the  
 Gillespie's algorithm, as shown in Algorithm \ref{algo:Gill}.

\begin{algorithm}
\caption{Gillespie's algorithm$(G, x_{in},t)$}\label{algo:Gill}
	\begin{algorithmic}[1]
		\STATE $x\gets x_{in}$,$\tau\gets 0$, $\xi(0)\gets x_{in}$
		\WHILE{$\tau<t$}
		\IF {$|\langle x|G|x\rangle|=0$}
			\STATE Set $\xi(s)=x$ for all $s\in (\tau,t]$
			\STATE $\tau \gets t$
		\ELSE
		\STATE Sample $u\in[0,1]$ from the uniform distribution  $[0,1]$
		\STATE $\Delta \tau\gets \frac{\ln(1/u)}{|\langle x|G|x\rangle|}$ \label{line:dt}
		\STATE Set $\xi(s)=x$ for all $s\in (\tau,\tau+\Delta\tau]$
		\STATE $\tau \gets \tau +\Delta\tau$
		\STATE Sample $y\in S \setminus\{x\}$ from the  probability distribution $\frac{\langle y|G|x\rangle}{|\langle x|G|x\rangle|}$
		\STATE $x\gets y$
		\ENDIF
		\ENDWHILE
		\STATE Return $x$
\end{algorithmic}
\end{algorithm}

 Lemma \ref{lem:legal} and Corollary \ref{cor:exp} are  properties of  Gillespie's algorithm.  Lemma \ref{lem:legal} describes the infinitesimal behavior of the CTMC. Roughly speaking, Lemma \ref{lem:legal} says that at any time $t$ where the current state is $x$, during a very short time $h$, the probability of staying in $x$ is approximately $1-|\langle x|G|x\rangle|h$. The probability of jumping to $y\neq x$ is approximately $\langle y|G|x\rangle h$.  What's more, the probability that $\xi$ changes its value more than twice during the  short time $h$ is negligible. The formal statement is as follows. 

\begin{lemma}[Gillespie's algorithm]\label{lem:legal} 
Let $S\subseteq \{0,1\}^n$ be the state space, $x_{in}\in S$, and
  $G\in\bR^{|S|\times |S|}$ be a legal generator. Let $h$ be an infinitesimal value.
Consider the random variable $\xi$ in Algorithm \ref{algo:Gill}  w.r.t. $(G,x_{in},t)$. 

For any $s<t$, 
 let $T(x,h)$ be the number of transitions  between time $[s,s+h]$
 conditioned on $\xi(s)=x$,  that is, the number of times that  $\xi$ changes its values. Then
\begin{align*}
	& Pr(T(x,h)=0) = 1 - |\langle x|G|x\rangle| h + O(h^2),\\
	& Pr(T(x,h)\geq 2) = O(h^2),\\
	& Pr(T(x,h)=1, \xi(s+h)=y, y\neq x) = \langle y|G|x\rangle h + O(h^2).
\end{align*}
\end{lemma}

Using Lemma \ref{lem:legal}, one can prove
\begin{corollary}[Gillespie's algorithm]\label{cor:exp}
Let $S\subseteq \{0,1\}^n$ be the state space, $x_{in}\in S$ and
 $G\in \bR^{|S|\times |S|}$ be a legal generator. 
 For any $s\in[0,t]$, the random variable $\xi(s)\in S$ generated by the Gillespie’s algorithm  w.r.t. $(G,x_{in},t)$ is distributed according to $$\pi_s(x):=\langle x|\exp(Gs)|x_{in}\rangle,\forall x.$$
\end{corollary}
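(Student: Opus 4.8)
The plan is to derive the Kolmogorov forward (master) equation for the law of $\xi(s)$ directly from the infinitesimal transition estimates of Lemma \ref{lem:legal}, and then conclude by uniqueness of solutions of a linear constant‑coefficient ODE. Write $p_s(x):=\Pr(\xi(s)=x)$ and let $p_s\in\bR^{|S|}$ be the corresponding probability vector indexed by $x\in S$, so that the initial condition is $p_0=\ket{x_{in}}$ (the standard basis vector at $x_{in}$). The goal is to show $p_s(x)=\langle x|\exp(Gs)|x_{in}\rangle=\pi_s(x)$ for all $s\in[0,t]$.

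First I would fix $s\in[0,t)$ and an infinitesimal $h>0$, and apply the law of total probability: $p_{s+h}(x)=\sum_{y} p_s(y)\,\Pr\!\big(\xi(s+h)=x\mid \xi(s)=y\big)$. For $y=x$, Lemma \ref{lem:legal} gives $\Pr(\xi(s+h)=x\mid\xi(s)=x)=1-|\langle x|G|x\rangle|\,h+O(h^2)$: this combines Eq.~(\ref{eq:6}) for the no‑transition event with the observation that a single transition leaves $x$ (the embedded chain has no self‑loops), so the only other way to be at $x$ at time $s+h$ requires $T(x,h)\geq2$, which by Eq.~(\ref{eq:7}) contributes $O(h^2)$. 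For $y\neq x$, Eq.~(\ref{eq:8}) together with the $O(h^2)$ bound Eq.~(\ref{eq:7}) on $T(x,h)\geq 2$ gives $\Pr(\xi(s+h)=x\mid\xi(s)=y)=\langle x|G|y\rangle\,h+O(h^2)$. Summing over $y$, using $\sum_y p_s(y)=1$ and that $\langle x|G|x\rangle=-|\langle x|G|x\rangle|$ (the diagonal of a legal generator is non‑positive), and noting that the $|S|$ error terms each being $O(h^2)$ with constants depending only on $G$ and $n$ makes their aggregate $O(h^2)$, I obtain
\[
\frac{p_{s+h}(x)-p_s(x)}{h}=\sum_{y}\langle x|G|y\rangle\,p_s(y)+O(h).
\]

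This relation already forces $|p_{s+h}(x)-p_s(x)|=O(h)$, hence $s\mapsto p_s$ is continuous, hence the right‑hand side is continuous in $s$; running the same computation with $s-h$ in place of $s$ then shows $s\mapsto p_s$ is differentiable on $[0,t]$ with $\tfrac{d}{ds}p_s=Gp_s$. This is a linear ODE with constant coefficients, whose unique solution with $p_0=\ket{x_{in}}$ is $p_s=\exp(Gs)\ket{x_{in}}$; reading off the $x$‑component, $p_s(x)=\langle x|\exp(Gs)|x_{in}\rangle=\pi_s(x)$, as claimed. (As a consistency check, since $G$ is a legal generator $\exp(Gs)$ is stochastic, so $p_s$ is automatically a probability distribution.)

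I expect the main obstacle to be not the ODE step, which is routine, but the quantitative control of the error terms: Lemma \ref{lem:legal}'s estimates are pointwise in the state, so one must verify that aggregating $|S|$ of them against $p_s$ and then dividing by $h$ still leaves an $O(h)$ remainder. This is fine because $|S|\leq 2^n$ and the implied constants depend only on $G$, but it is the one spot where the informal "infinitesimal $h$" language must be made precise. A secondary point of care is promoting the existence of the one‑sided difference‑quotient limit to genuine differentiability of $s\mapsto p_s$; the cleanest route is to combine continuity of $s\mapsto p_s$ (immediate from the displayed bound) with continuity of $s\mapsto Gp_s$ and invoke the standard fact that a continuous function with continuous right‑derivative is $C^1$.
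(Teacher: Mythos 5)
Your proposal is correct: deriving the Kolmogorov forward equation from the infinitesimal estimates of Lemma \ref{lem:legal} and solving the resulting linear ODE with $p_0=\ket{x_{in}}$ is exactly the standard argument the paper defers to (it cites textbooks for Corollary \ref{cor:exp}), and it is the continuous-time counterpart of the paper's own proof of the discretized analogue, Corollary \ref{cor:expD}, which iterates the same master-equation recursion $P(kh+h)=(Gh+I+E\cdot 2^{10p'(n)}h^2)P(kh)$. Your points of care (aggregating the $|S|\leq 2^n$ error terms, $s$-uniformity of the $O(h^2)$ constants, and upgrading the one-sided difference quotient to differentiability) are the right ones and are handled correctly.
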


In other words, Gillespie’s algorithm  w.r.t. $(G,x_{in},t)$ simulates a random process with transition matrix $\exp(Gs)$.
One can find the proofs for Lemma \ref{lem:legal} and Lemma \ref{cor:exp} in standard textbooks on Markov chain and random process.

\subsection{Fixed Node Hamiltonian}\label{sec:FH}

It is well-known that one can connect stoquastic Hamiltonian to Markov chain~\cite{bravyi2006complexity}. However,
in Definition \ref{def:Hss} the $n$-qubit $k$-local Hamiltonian $H$ can be general and might not be stoquastic.
The fixed node  
quantum Monte Carlo method~\cite{ten1995proof} is a method  that  transforms any real-valued Hamiltonian  to a stoquastic Hamiltonian. 

In this section, we give a brief review of the fixed node  
quantum Monte Carlo method, which will be used in our protocol later.
For technical reasons, in this section, we do not directly consider $n$-qubit Hamiltonians. Instead
 we consider 
 \begin{itemize}
     \item  A set $S\subseteq \{0,1\}^n$.
     \item A real-valued symmetric\footnote{Real-valued Hermitian matrix is real-valued symmetric matrix.} matrix $H\in \bR^{|S|\times |S|}$.
     \item A real-valued state $\ket{\phi}\in \bR^{|S|}$.
 \end{itemize}
The fixed node Hamiltonian 
$F^{H,\ket{\phi}}\in \bR^{|S|\times |S|}$ is defined as follows: For $x,y\in S$,
\begin{align}
\langle x |\FH|	y\rangle =
\left\{
\begin{aligned}
	& 0  &\text{if $(x,y)\in S^+$};\\
	& \langle x |H| y\rangle, &\text{if $(x,y)\in S^-$};\\
	& \langle x |H| x\rangle &+\sum_{z:(x,z)\in S^+} \langle x|H|z \rangle \frac{\langle z|\phi\rangle}{\langle x|\phi\rangle}\\
 &&\text{if } x=y.
\end{aligned}\label{eq:FH}
\right.
\end{align}
where
\begin{align*}
\begin{aligned}
	&S^+ =\{(x,y): x\neq y \text{ and } \langle \phi|x\rangle \langle x|H|y\rangle \langle y|\phi\rangle>0\},\\
	&S^- =\{(x,y): x\neq y \text{ and } \langle \phi|x\rangle \langle x|H|y\rangle \langle y|\phi\rangle\leq0\}.
\end{aligned}
\end{align*}

The key properties of $\FH$ are
\begin{lemma}[\cite{ten1995proof}, also in Lemma 2 of \cite{bravyi2022rapidly}]\label{lem:basic_FH} 
Given any real-valued symmetric matrix $H\in \bR^{|S|\times |S|}$ and any real-valued \textbf{regularized}\footnote{Otherwise Lemma \ref{lem:basic_FH} (1) is not a basis change thus not well-defined. In particular if $\ket{\phi}$=0,  then $\FH=H$ and $\FH$ might not be stoquastic.} unnormalized $n$-qubit state $\ket{\phi}\in \bR^{|S|}$\,\footnote{$\ket{\phi}$ can be arbitrary state. It is not necessary to be the ground state.}, define the fixed node Hamiltonian $\FH\in \bR^{|S|\times |S|}$ as above. Then
	\begin{enumerate}
		\item[(1)] $\FH$ is symmetric and real-valued thus Hermitian. Besides, $\FH$ is stoquastic modulo a change of basis $\ket{x}\rightarrow sign\left(\langle x|\phi\rangle\right)\ket{x}.$ 
		\item[(2)]  $\FH\ket{\phi}=H\ket{\phi}$.
		\item[(3)] $\lambda(\FH)\geq \lambda (H)$ for any $\ket{\phi}$. If further $\ket{\phi}$ is the ground state of $H$, then $\lambda(\FH)=\lambda(H)$ and $\ket{\phi}$ is the ground state of $\FH$,  that is, 
$$\FH\ket{\phi}=\lambda(\FH)\ket{\phi}.$$ 
		  \end{enumerate}
\end{lemma}
For completeness, we put a proof in Appendix \ref{appendix:FN}.

\subsection{CTMC related to $\FH$}\label{sec:CTMCFH}

 Define  $\FH\in \bR^{|S|\times |S|}$ as in Section \ref{sec:FH}. It is worth noting that the norm of $\FH$,  \textit{i.e.}  $\|\FH\|$, can be exponentially large, since $\langle x|\phi\rangle$ may be exponentially small for some $x$. As we discussed in the proof overview in Section \ref{sec:proofoverview}, the exponentially large norm influences the mixing time in the quantum Monte Carlo methods designed for stoquastic Hamiltonians. To handle this problem,
\cite{bravyi2022rapidly} describes a generic way of converting a
stoquastic Hamiltonian into a generator matrix of a CTMC. More details can be found in Lemma 3 of \cite{bravyi2022rapidly}. As a concrete application,  \cite{bravyi2022rapidly} applies this method to the Fixed node Hamiltonian and 
 defines a matrix $\GHt\in \bR^{|S|\times |S|}$ based on $\FH$. Specifically, for any $x,y\in S$, define
	\begin{align}\label{eq:GHt}
		\langle y| \GHt|x\rangle := 
			 &\lambda(\FH)\delta_{y,x}-\langle y|\FH|x\rangle \frac{\langle y|\phi\rangle}{\langle x|\phi\rangle},
	\end{align}
where $\delta_{y,x}=1$ iff $x=y$ and $\delta_{y,x}=0$ otherwise.
To avoid confusion, we emphasize here that we will define a slightly different $\GH$ in the following section in Eq.~(\ref{eq:GH}), which is used in the final verification protocol. For now, we temporarily focus on $\GHt$. It is worth noting that $\GHt$ is not symmetric. 
 The properties of $\GHt$ are summarized as follows.

\begin{corollary}[Corollary of Lemma 3 in \cite{bravyi2022rapidly}] \label{cor:basic_GH}
Given any real-valued symmetric matrix $H\in \bR^{|S|\times |S|}$,  and any real-valued  \textbf{regularized}  state $\ket{\phi}\in\bR^{|S|}$. Define
 $\GHt$  as in Eq.~(\ref{eq:GHt}).
	\begin{itemize}
		\item[(1)] The spectrum of $\GHt$ is the same as the spectrum of $\lambda(\FH)I-\FH$.
		\item[(2)] $\langle y| \GHt|x\rangle\geq 0$ for $y\neq x$. Further if $\ket{\phi}$ is an unnormalized ground state of $H$, then for any $x$, we have $\sum_y \langle y|\GHt|x\rangle =0$. 
		Thus $\GHt$ is a legal generator of a CTMC. 
		\item[(3)] If $\ket{\phi}$ is an unnormalized ground state of $H$.	Define $c=\|\ket{\phi}\|^2$ and $\pi(x)= |\langle x|\phi\rangle|^2/c$. Then for any $y$,
		$\sum_{x}\langle	y|\GHt|x\rangle \pi(x) =0$, thus   $\pi$ is a stationary distribution of the CTMC  w.r.t. $\GHt$.
		\end{itemize}
\end{corollary}

The key result in
\cite{bravyi2022rapidly} is proving Gillespie's
algorithm,  \textit{i.e.}  Algorithm \ref{algo:Gill}, can efficiently simulate the CTMC  w.r.t. $\GHt$. The formal statement is in Lemma \ref{lem:effS}.
For completeness, we put a proof in Appendix \ref{appendix:GH}.

\begin{lemma}[Lemma 4 in \cite{bravyi2022rapidly}]
\label{lem:effS}
	Given any real-valued $d$-sparse symmetric matrix $H\in \bR^{|S|\times |S|}$, and  real-valued \textbf{regularized}  unnormalized state $\ket{\phi}\in\bR^{|S|}$, where $\ket{\phi}$ is a ground state of $H$. Define
 $\GHt$  as in Eq.~(\ref{eq:GHt}).

	For any $x_{in}, t$, denote  $\kappa(x_{in},t)$ as  the number of transitions  in Algorithm \ref{algo:Gill} w.r.t. $(\GHt,x_{in},t)$,  that is,  the number of times that $\xi(s)$ changes its value. Then there exists $\hat{x}_{in}\in S$ such that $\langle \hat{x}_{in}|\phi\rangle\neq 0$ and the number of transitions satisfies that
	\begin{align}
		Pr \left( \kappa\left(\hat{x}_{in},t \right)\leq  d n^3  t\|H\|\right)\geq \frac{1}{2}.
	\end{align}
\end{lemma}

Note that  Lemma \ref{lem:effS} is highly non-trivial, since the norm of  $\GHt$ can still be exponentially large. In particular,
in Algorithm \ref{algo:Gill} line \ref{line:dt}, $\T$ may be exponentially small. Since the running time of Algorithm \ref{algo:Gill} is proportional to the number of transitions, exponentially small $\T$ may cause  
 Gillespie's algorithm w.r.t. ($\GHt$, $x_{in}$, $t$) to take exponentially long time, even for $t=O(poly(n))$. In the CTMC literature, it means that the CTMC may transit exponentially many times in a small time interval. However,
  Lemma \ref{lem:effS} proves that with high probability, the number of transitions is bounded, when $H$ is sparse and has a small norm.

\section{The $\MA$ verification protocol}\label{sec:MA}

In this section, we describe a $\MA$ verification protocol for deciding the local Hamiltonian problem with succinct ground state,  which is defined in Definition \ref{def:Hss}. 
Firstly in Section \ref{Sec:pre}, we will argue that w.l.o.g., we can assume that the instance $(H,a,b)$ from Definition \ref{def:Hss} satisfying 
$$a=0\text{ and }b\geq 1/poly(n).$$ 
Then in Section \ref{sec:CT} and Section \ref{sec:DT}, we will give two verification protocols based on different assumptions. In particular,
\begin{itemize}
	\item In Section \ref{sec:CT} we assume Assumption (i): We can sample $u$ from the uniform distribution $[0,1]$ in $poly(n)$ time. 
	\item In Section \ref{sec:DT} we substitute Assumption (i) by Assumption (ii): We assume that we can use $poly(n)$-time to sample from the truncated discretized exponential distribution  $\cD_{K,\delta,\lambda}$ with parameters specified in Remark \ref{remark:dist} in Appendix \ref{appendix:precision}. 
\end{itemize}
 In both Section \ref{sec:CT} and Section \ref{sec:DT}, we show that there is a $poly(n)$-time algorithm which takes inputs in the form of an $n$-qubit local Hamiltonian $H$, a $poly(n)$-size circuit $C_{\phi}$ for a succinct state $\ket{\phi}$, and an $n$-bit string $x_{in}$, such that 
 \begin{itemize}
 	\item  If $(H,a,b)$ is a Yes instance, there exists $C_{\phi}$ and
  $x_{in}$ s.t. the algorithm accepts w.p. $\geq constant$;
  \item  If $(H,a,b)$ is a No instance, then for any  $C_{\phi}$ and
  $x_{in}$, the accepting probability is exponentially small. 
 \end{itemize}

The proof in Section \ref{sec:CT} captures the key technical ideas. The proof in Section \ref{sec:DT} further addresses the errors introduced by discretization and  is our final proof for Theorem~\ref{thm:main}.

\subsection{Preprocessing}\label{Sec:pre}
Given an instance $(H^*,a^*,b^*)$, where $H^*=\sum_{j=1}^m H^*_j$  is an $n$-qubit $(k-1)$-local Hamiltonian, $m$ is the number of terms in $H$. Here $k$ is a constant, $m=poly(n)$ and $\|H^*\|=poly(n)$. 

By Section \ref{sec:red}, w.l.o.g. we assume that $H^*$ are real-valued symmetric $2^km$-sparse matrix, and is of size $2^n\times 2^n$. 
Besides, the ground state of $H^*$, denoted as  
   $\ket{\phi^*}\in \bR^{2^n}$, is also real-valued.

For the $\MA$ protocol, the witnesses are 
 \begin{itemize}
 	\item[(i)] A real value $\lambda^*$, which is supposed to be $\lambda(H^*)$\footnote{By Remark \ref{remark:precision} in Appendix \ref{appendix:precision}, $\lambda(H^*)$ can be represented by $poly(n)$ bits exactly.}. W.l.o.g. we assume that $\lambda^* \leq a^*$, otherwise we reject immediately.
 	\item[(ii)]  A $poly(n)$-size circuit $C_{\phi^*}$ which succinctly represents the state $\ket{\phi^*}$. $\ket{\phi^*}$ is supposed to be the ground state of $H^*$.
 	\item[(iii)] A computational basis $x^*_{in}\in\{0,1\}^n$,  which is supposed to satisfy Lemma \ref{lem:effS} w.r.t. $(\GHS,x_{in},t)$ which will be defined later. In particular,
  \begin{itemize}
      \item[$\bullet$]  $H_S,\ket{\phi_S},x_{in}$ are defined in the following paragraphs.
      \item[$\bullet$] $\GHS$ is defined in Eq.~(\ref{eq:GH}).
      \item[$\bullet$] Let
        \begin{align}
            &\epsilon:=b^*-a^*=1/poly(n),\\
            &t:=8 (n+p'(n))/\epsilon,
        \end{align}
     where $p'(n)=poly(n)$ is defined in Remark \ref{remark:G} in Appendix \ref{appendix:precision}.
  \end{itemize}
 \end{itemize}

 Specifically, define
\begin{align}
    &H:= H^*-\lambda^* I,\\
&\ket{\phi}:=\ket{\phi^*},\\
    &x_{in}:= x^*_{in}.
\end{align}
 The following properties hold:
\begin{itemize}
	\item  In the Yes instance, since $\lambda^*$, $\ket{\phi^*}$ are the ground energy and the ground state of $H^*$, we have $\lambda(H)= 0$ and $\ket{\phi}$ is the ground state of $H$. That is, $$H \ket{{\phi}}=0.$$
	\item In the No instance, since $\lambda(H^*)\geq b^*$ and $\lambda^*\leq a^*$. We have $$\lambda(H)\geq \epsilon.$$
\end{itemize}
Define 
$$S:=Supp(\ket{\phi}).$$ Define $H_{S}\in \bR^{|S|\times |S|}$ to be the submatrix of $H$ whose rows and columns are taken from the set $S$. 
 Similarly we define  $\ket{\phi_{S}}$ from $\ket{\phi}$. 
 The main reason we define $H_S,\ket{\phi_S}$ is to make the proof  rigorous. Feel free to assume that $\ket{\phi}$ is regularized, thus $H_S=H$, $\ket{\phi_S}=\ket{\phi}$.
 One can check that
 \begin{claim}\label{claim:HS}  Suppose $S\neq \emptyset$,
 	\begin{itemize}
 	 	\item $H_S$ is symmetric real-valued, $2^km$-sparse, where $m=poly(n)$ and $\|H_S\|=poly(n)$. Besides, $\ket{\phi_S}$ is real-valued and regularized.
	\item  In the Yes instance, $H_{S}\ket{\phi_S}=0$, $\lambda(H_S)=0$\footnote{Since $\lambda(H_S)\geq \lambda(H)=0$ by  Fact \ref{fact:prin}, and $0$ is achieved by $\ket{\phi_S}.$}.
	  $\ket{\phi_S}$ is a  ground state of  $H_S$.  
	\item In the No instance,  $$\lambda(H_S)\geq \lambda(H)\geq \epsilon =1/poly(n).$$
	\item For $x,y\in S$,  $\langle x|H_S|y\rangle$ and $C_{\phi_S}(x)$ can be easily computed in $poly(n)$ time, which is obtained by  directly computing $\langle x|H|y\rangle$ and $C_{\phi}(x)$.
	\end{itemize}
 \end{claim}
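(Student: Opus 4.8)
The plan is to verify the bullets of Claim \ref{claim:HS} one by one, essentially as bookkeeping on top of two ingredients: Fact \ref{fact:prin} (a principal submatrix of a real symmetric matrix has its spectrum sandwiched inside that of the ambient matrix), and the elementary identities recorded in Sec.~\ref{Sec:pre} for $H=\hat H-\hat\lambda I$ and $\ket\phi=\ket{\hat\phi}$.

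First I would settle the generic properties in bullet one, which hold for \emph{any} witness with $S=\mathrm{Supp}(\phi)\neq\emptyset$. Symmetry and real-valuedness of $H_S$ are inherited from $H$, hence from $\hat H$ after the reduction in Sec.~\ref{sec:red}, since a principal submatrix of a real symmetric matrix is real symmetric. For sparsity: the off-diagonal entries of $H$ equal those of $\hat H$ and subtracting $\hat\lambda I$ only touches the diagonal, so $H$ (and hence $H_S$) is $2^km$-sparse in the algorithmic sense of Sec.~\ref{sec:def} --- to list the nonzeros of row $x$ of $H_S$ one lists the nonzeros of row $x$ of $\hat H$, adjusts the diagonal using $\hat\lambda$, and then filters to $z\in S$ by testing $C_\phi(z)\neq 0$, all in $poly(n)$ time. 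For the norm bound I would have the verifier reject whenever $\hat\lambda\notin[-\|\hat H\|,\min(\hat a,\|\hat H\|)]$; the honest $\hat\lambda=\lambda(\hat H)$ always lies in this interval, and rejecting otherwise only helps soundness, so we may assume $|\hat\lambda|\le\|\hat H\|$, whence $\|H_S\|\le\|H\|\le 2\|\hat H\|=poly(n)$ by Fact \ref{fact:prin}. Finally $\ket{\phi_S}$ is real-valued because $\ket\phi$ is, and is regularized by construction: $S=\mathrm{Supp}(\phi)$, so deleting the zero entries of $\ket\phi$ leaves a vector with no zero entry.

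Next the Yes-instance bullet, where the honest witness gives $\hat\lambda=\lambda(\hat H)$ and $\ket{\hat\phi}$ a true ground state, so $H\ket\phi=(\hat H-\hat\lambda I)\ket{\hat\phi}=0$ and $\lambda(H)=0$. The identity $H\ket\phi=0$ says $\sum_y\langle x|H|y\rangle\langle y|\phi\rangle=0$ for every $x$; the terms with $y\notin S$ vanish, so $\sum_{y\in S}\langle x|H_S|y\rangle\langle y|\phi_S\rangle=0$ for every $x\in S$, i.e. $H_S\ket{\phi_S}=0$. Since $\ket{\phi_S}\neq 0$ (as $S\neq\emptyset$ and $\ket{\phi_S}$ is regularized), $0$ is an eigenvalue of $H_S$, so $\lambda(H_S)\le 0$, and together with $\lambda(H_S)\ge\lambda(H)=0$ from Fact \ref{fact:prin} this gives $\lambda(H_S)=0$ and exhibits $\ket{\phi_S}$ as a ground state. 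The No-instance bullet is then immediate: Fact \ref{fact:prin} gives $\lambda(H_S)\ge\lambda(H)$, and $\lambda(H)=\lambda(\hat H)-\hat\lambda\ge \hat b-\hat a=\epsilon=1/poly(n)$ using $\lambda(\hat H)\ge\hat b$ in the No case and the rejection check $\hat\lambda\le\hat a$. The last bullet follows since for $x,y\in S$ we have $\langle x|H_S|y\rangle=\langle x|\hat H|y\rangle-\hat\lambda\,\delta_{x,y}$, computable in $poly(n)$ time from the sparse description of $\hat H$ and the given $\hat\lambda$, while $C_{\phi_S}:=C_\phi$ already computes $\ket{\phi_S}$ up to the common factor $c_\phi(n)$.

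There is no genuine obstacle here; the statement is a packaging lemma. The only two points that need a little care are (i) keeping $\|H_S\|=poly(n)$ even though $\hat\lambda$ is an adversarially supplied number, which is handled by the extra sanity check on the range of $\hat\lambda$, and (ii) the passage from $H\ket\phi=0$ to $H_S\ket{\phi_S}=0$, which is precisely where the definition $S=\mathrm{Supp}(\phi)$ is used: truncating the rows and columns outside $S$ discards exactly the entries of $H$ that are paired against zero amplitudes of $\ket\phi$, so no information about the action of $H$ on $\ket\phi$ is lost.
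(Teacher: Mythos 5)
Your proof is correct and follows essentially the same route as the paper, whose entire argument for this claim is the one-line observation that Fact~\ref{fact:prin} yields $\lambda(H)\le\lambda(H_S)\le\lambda_{max}(H)$ and $\|H_S\|\le\|H\|\le poly(n)$ when $S\neq\emptyset$, with the remaining bullets left as the same bookkeeping you spell out. Your additional sanity check on the range of $\hat\lambda$ (to keep $\|H\|=poly(n)$ against an adversarially chosen witness) addresses a detail the paper leaves implicit, but it does not change the approach.
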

 \begin{proof}
 	It suffices to notice that by  Fact \ref{fact:prin},
 	when $S\neq \emptyset$,  we have 
  \begin{align*}
      &\lambda(H) \leq \lambda(H_S)\leq \lambda_{max}(H),\\
      &\|H_S\|\leq \|H\|\leq poly(n).
  \end{align*}
 \end{proof}

  Define $\FHS$ as in Eq.~(\ref{eq:FH}). Note that here 
instead of using $\GHtS$ as in Eq.~(\ref{eq:GHt}), we will use $\GHS$ defined below:
\begin{align}\label{eq:GH}
\langle y| \GHS|x\rangle :=
			 -\langle y|\FHS|x\rangle \frac{\langle y|\phi_S\rangle}{\langle x|\phi_S\rangle}. 
\end{align}

The main reason that we use $\GHS$ instead of $\GHtS$
is our setting is different from in \cite{bravyi2022rapidly}. More specifically, in \cite{bravyi2022rapidly}  $\ket{\phi_S}$ is trusted and is always the ground state, in this situation $$\lambda(\FHS)=\lambda(H_S).$$ 
By Lemma \ref{lem:basic_FH} (2), the value $\lambda(\FHS)$
can be computed  by 
\begin{align}\label{eq:lambda}
\lambda(\FHS)=\lambda(H_S)=\frac{\langle x|H_S|\phi_S\rangle}{\langle x|\phi_S\rangle}= \!\sum_y  \langle x|H_S|y\rangle \frac{\langle y|\phi_S\rangle} {\langle x|\phi_S\rangle},
\end{align}
where  $x\in S$ satisfying $\langle x|\phi_S\rangle \neq 0$. However, in our setting, in the No instance, $\ket{\phi_S}$ can be any adversary state. The ``$\lambda(\FHS)$" calculated by Eq.~(\ref{eq:lambda}) cannot be trusted and might break the soundness of the protocol. 

Here are some other remarks.
Note that in the Yes instance, Eq.~(\ref{eq:GHt}) and
Eq.~(\ref{eq:GH}) coincide since $\lambda(\FHS)=0$.   It is worth noting that  $\GHS$ is not symmetric.
 However, since $\FHS$ is symmetric,  by similar argument as Corollary \ref{cor:basic_GH} (1) we have
\begin{fact}\label{claim:GHH}
	 $\GHS$  is  always diagonalizable and has the same spectrum as   $-\FHS$.
\end{fact}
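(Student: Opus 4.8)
The plan is to exhibit $\GHS$ as a similarity transform of $-\FHS$ by an invertible diagonal matrix, exactly mirroring the argument for $\GHtS$ in Lemma~\ref{lem:basic_GH}(1).

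First I would invoke Claim~\ref{claim:HS}: the state $\ket{\phi_S}$ is regularized, i.e. $\langle x|\phi_S\rangle\neq 0$ for every $x\in S$, so the diagonal matrix $D:=Diag(\phi_S)\in\bR^{|S|\times|S|}$ is invertible, with $\langle x|D^{-1}|x\rangle = 1/\langle x|\phi_S\rangle$. Reading off Eq.~(\ref{eq:GH}) entrywise gives, for all $x,y\in S$,
\begin{align}
\langle y|\GHS|x\rangle = -\,\langle y|\phi_S\rangle\,\langle y|\FHS|x\rangle\,\frac{1}{\langle x|\phi_S\rangle} = -\,\langle y|\,D\,\FHS\,D^{-1}\,|x\rangle ,
\end{align}
i.e. $\GHS = -\,D\,\FHS\,D^{-1}$. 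This is the only computation required; it is routine index-chasing, and it is the single place where regularity of $\ket{\phi_S}$ is used (if some $\langle x|\phi_S\rangle$ vanished, $D$ would be singular and the conclusion could fail, analogously to the regularity caveat attached to $\FHS$ in Lemma~\ref{lem:basic_FH}).

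The two assertions then follow from elementary linear algebra. Since conjugation by the invertible matrix $D$ preserves the characteristic polynomial, $\det(\GHS - tI) = \det\!\big(-D(\FHS+tI)D^{-1}\big) = (-1)^{|S|}\det(\FHS - (-t)I)$, so $\GHS$ has spectrum $\{-\mu:\mu\in\mathrm{spec}(\FHS)\}$ with matching multiplicities; in the Yes instance $\lambda(\FHS)=0$, so this coincides with the spectrum of $\GHtS=\lambda(\FHS)I-\FHS$ computed in Lemma~\ref{lem:basic_GH}(1), namely with that of $-\FHS$. For diagonalizability, Lemma~\ref{lem:basic_FH}(1) says $\FHS$ is real symmetric, hence diagonalizable by the spectral theorem: writing $\FHS = U\Lambda U^{-1}$ with $\Lambda$ diagonal, we get $\GHS = (DU)(-\Lambda)(DU)^{-1}$ with $DU$ invertible, so $\GHS$ is diagonalizable.

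There is essentially no obstacle here: the statement is a one-line observation. The only points to keep in mind are (i) citing the regularity of $\ket{\phi_S}$ so that $D=Diag(\phi_S)$ is genuinely invertible, and (ii) tracking the overall sign, since $\GHS$ omits the $\lambda(\FHS)I$ shift present in $\GHtS$, so strictly $\mathrm{spec}(\GHS) = -\,\mathrm{spec}(\FHS)$ and ``the same spectrum as $\FHS$'' should be read as this sign-flipped, multiplicity-preserving correspondence.
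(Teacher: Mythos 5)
Your proposal is correct and takes essentially the same route as the paper, which justifies this fact by exactly the diagonal similarity transform you write down, $\GHS = Diag(\phi_S)(-\FHS)Diag(\phi_S)^{-1}$ (the "similar argument as Lemma~\ref{lem:basic_GH}(1)"), with invertibility of $Diag(\phi_S)$ from regularity and diagonalizability from the real symmetry of $\FHS$. Your sign caveat is also well taken: strictly the spectrum of $\GHS$ is that of $-\FHS$, which is the form in which the paper actually uses the fact later (cf.\ Eq.~(\ref{eq:72}) and Claim~\ref{cla:12}).
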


As explained in Remark \ref{remark:G} in Appendix \ref{appendix:precision}, there exists a polynomial $p'(n)\gg p(n)$ such that 
\begin{align*}
      &amax(\GHS)\leq 2^{p'(n)},\\
      &\|\GHS\|,\|\FHS\|\leq 2^{p'(n)+2n}.
  \end{align*}
Besides, if 
	$\langle x|\GHS|y\rangle\neq 0$, then 
 $$\langle x|\GHS|y\rangle\geq 1/2^{p'(n)}.$$

\subsection{Protocol with continuous-time randomness }\label{sec:CT}
In this section, we assume that
we can sample $u$ from the uniform distribution $[0,1]$ in $poly(n)$ time.
Let  $H_S$, $\ket{\phi_S}$  be the Hamiltonian and the state after the preprocessing procedure in Section \ref{Sec:pre} Claim \ref{claim:HS}.

  The verification protocol is shown in Algorithm \ref{algo:check}.  
The protocol is very similar to the truncated version of Gillespie's algorithm used in \cite{bravyi2022rapidly}. 
The key difference is, instead of returning $x$,  
we add a checking procedure in line \ref{aeq:GH}, and this algorithm returns Accept/Reject. Besides, we use $\GHS$ defined in Eq.~(\ref{eq:GH}) instead of $\tilde{G}^{H_S,\phi_S}$ in Eq.~(\ref{eq:GHt}).  

Compared to the setting in \cite{bravyi2022rapidly}, it is worth noting that
in our setting, $\ket{\phi_S}$ can be adversarial which might not be the ground state, thus $\GHS$ might not be a legal generator. The first thing to check is that Algorithm \ref{algo:check} is well-defined:
\begin{claim}
\label{claim:ct}
	Algorithm \ref{algo:check} is always well-defined and each line can be performed in poly(n) time.
\end{claim}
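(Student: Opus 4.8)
The statement to prove is that Algorithm \ref{algo:check} (the truncated Gillespie algorithm with the extra checking step at line \ref{aeq:GH}, run with the generator $\GHS$ from Eq.~(\ref{eq:GH}) rather than $\GHtS$) is always well-defined and each line runs in $poly(n)$ time. The plan is to go through the algorithm line by line, identifying every place where a quantity must be computed, a probability distribution must be sampled, or a branch must be decided, and argue each is (a) mathematically meaningful even when $\ket{\phi_S}$ is adversarial, i.e.\ even when $\GHS$ may fail to be a legal generator, and (b) efficiently computable given the witnesses $\hat\lambda$, $C_{\hat\phi}$ (hence $C_{\phi_S}$), and $\hat x_{in}$. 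The key point distinguishing this from \cite{bravyi2022rapidly} is that soundness is not claimed here — only well-definedness — so the argument should emphasize that the checking step at line \ref{aeq:GH} is exactly what rescues well-definedness when the generator is illegal: whenever the algorithm would be forced to sample from a non-distribution (negative off-diagonal rates, or a diagonal entry whose sign is wrong, or a ``row'' that fails to behave like a generator row), it instead Rejects and halts.

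Concretely, the steps I would carry out are: First, recall from the preprocessing (Claim \ref{claim:HS}) that $H_S$ is real symmetric, $2^k m$-sparse with an explicit sparsity oracle, and $\|H_S\|=poly(n)$, and that $\ket{\phi_S}$ is regularized so that $\langle x|\phi_S\rangle\neq 0$ for all $x\in S$; hence every ratio $\langle y|\phi_S\rangle/\langle x|\phi_S\rangle = C_{\phi_S}(y)/C_{\phi_S}(x)$ is well-defined and, by Remark \ref{remark:precision}, bounded by $2^{3p(n)}$ in absolute value and exactly computable in $poly(n)$ time. Second, observe that for any row/column index $x$, the sparsity oracle for $H_S$ lets us enumerate the $\leq 2^k m$ nonzero entries $\langle x|H_S|z\rangle$, and from these, using the defining formula Eq.~(\ref{eq:FH}) together with the ratios above, we can compute every needed entry $\langle x|\FHS|x\rangle$, $\langle y|\FHS|x\rangle$ and hence $\langle y|\GHS|x\rangle = -\langle y|\FHS|x\rangle\,\langle y|\phi_S\rangle/\langle x|\phi_S\rangle$ in $poly(n)$ time; note in particular that the set $S$ is never enumerated — only the $poly(n)$ neighbors of the current state — and that membership $z\in S$ is decided by checking $C_{\phi_S}(z)\neq 0$. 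Third, walk through Algorithm \ref{algo:check}: the diagonal quantity $\langle x|\GHS|x\rangle$ is computed and, if it is zero, we take the ``freeze'' branch as in Gillespie; otherwise we must sample a waiting time from $\ln(1/u)/|\langle x|\GHS|x\rangle|$ (well-defined since the magnitude is positive and, by Remark \ref{remark:G}, at least $1/2^{p'(n)}$), and then sample the next state from $\langle y|\GHS|x\rangle/|\langle x|\GHS|x\rangle|$ over $y\neq x$. This last step is the one that can fail when $\GHS$ is illegal — the purported weights may be negative or may not sum to $1$ — and here the checking step at line \ref{aeq:GH} intervenes: I would state that the algorithm first verifies that all off-diagonal entries in column $x$ are $\geq 0$ and that $\sum_y\langle y|\GHS|x\rangle = 0$ (equivalently $\langle x|\GHS|x\rangle = -\sum_{y\neq x}\langle y|\GHS|x\rangle$), Rejecting immediately if not, and only then samples. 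Since there are only $poly(n)$ nonzero off-diagonal entries in column $x$, this check is $poly(n)$ time.

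Fourth, I would address termination of the \textbf{while} loop and the truncation: the loop runs until simulated time $\tau$ reaches $t = poly(n)$, and the protocol is truncated so that after at most a fixed $poly(n)$ number of transitions (the bound $d n^3 t \|H_S\|$ from Lemma \ref{lem:effS}, or whatever explicit cap Algorithm \ref{algo:check} uses) the algorithm halts with a decision rather than continuing — so the total number of iterations, and hence the running time, is $poly(n)$. I would also note the arithmetic precision issue: all the quantities involved ($\hat\lambda$, entries of $H_S$, values $C_{\phi_S}(x)$, and hence entries of $\FHS$ and $\GHS$) are representable exactly in $p'(n) = poly(n)$ bits by Remark \ref{remark:precision} and Remark \ref{remark:G}, so the comparisons ``$=0$'', ``$\geq 0$'', and ``$\sum = 0$'' are exact and decidable, and the sampling of $u$ is handled by Assumption (i) of this section. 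The main obstacle — really the only subtle point — is making sure that \emph{nothing} in the algorithm implicitly assumes $\GHS$ is legal before the check has been performed: in particular I must order the exposition so that the column-positivity and column-sum checks at line \ref{aeq:GH} are performed \emph{before} the waiting-time sampling and the next-state sampling, so that by the time we divide by $|\langle x|\GHS|x\rangle|$ and sample from the normalized off-diagonal weights, we already know these genuinely form a probability distribution. Everything else is routine bookkeeping with the sparsity oracle and the bit-precision bounds.
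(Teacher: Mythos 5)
Your proposal is correct and follows essentially the same route as the paper's proof: use the regularization of $\ket{\phi_S}$ and the $poly(n)$-sparsity of $H_S$ (hence of $\FHS$ and $\GHS$) to compute all needed entries and perform the column-sum/positivity check at line \ref{aeq:GH} in $poly(n)$ time, and observe that once this check passes the normalized off-diagonal weights genuinely form a probability distribution (with any sampled successor necessarily lying in $S$), so the waiting-time and next-state sampling steps are well-defined. The only cosmetic difference is that you additionally spell out the iteration bound via the cap $M$ and the exact bit-precision of the comparisons, which the paper handles in Remark \ref{remark:precision}, Remark \ref{remark:G} and the surrounding analysis rather than inside the proof of Claim \ref{claim:ct}.
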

\begin{proof}	
 By line \ref{aeq:supp}, w.l.o.g. we assume that $S\neq \emptyset$ otherwise we reject immediately. 
 Then $\GHS,\FHS$ are well-defined since $S\neq \emptyset$.  Also note that line \ref{aeq:supp} implies a good $x_{in}$ which is not rejected by Line \ref{aeq:reject_x_0} should satisfy 
 $$\langle x_{in}|\phi\rangle\neq 0.$$ Furthermore, by definition of $\GHS$, that is, Eq~(\ref{eq:GH}), we have that all the $x$ visited by the algorithm in Line \ref{aeq:x_visit} satisfies $\langle x|\phi\rangle\neq 0$.
 
 For line  \ref{aeq:GH}, since $\GHS$ is $poly(n)$-sparse,  one can use poly-time to 
\begin{itemize}
    \item List all $z$ where $\langle z|\GHS|x\rangle\neq 0.$
    \item Compute $\sum_{z\in S} \langle z|\GHS|x\rangle$ and check whether one  of them is strictly negative. 
\end{itemize} 
  Similarly since $\GHS$ is $poly(n)$-sparse,
in line \ref{aeq:xy}, one can use poly-time to 
list all $y$ where $\langle y|\GHS|x\rangle\neq 0$, and efficiently sample $y$ from the distribution.

 Besides, when the conditions in line \ref{aeq:GH} are not satisfied, it is  guaranteed that
 the sampling procedure in line \ref{aeq:xy} is always well-defined, that is, 
 $$\sum_{y\neq x} \frac{\langle y|\GHS|x\rangle}{|\langle x|\GHS|x\rangle|}=1\text{ \!and\! }\frac{\langle y|\GHS|x\rangle}{|\langle x|\GHS|x\rangle|}\geq 0 \text{ for }y\neq x.$$ 
  Besides, by definition $\frac{\langle y|\GHS|x\rangle}{|\langle x|\GHS|x\rangle|}>0$ implies $y\in S$. Thus all the ``$x$" that $\xi$ visits in Algorithm \ref{algo:check} are all in $S$. Thus all the lines  querying entries  of $\GHS|x\rangle$ are well-defined.

\end{proof}

\begin{algorithm}[h!]
\caption{Checking$(H_S,\phi_S,x_{in})$}\label{algo:check}
	\begin{algorithmic}[1]
		\STATE $\kappa \gets 0$
        \hfill$\rhd$ {Record the number of transitions} 
		\STATE $x\gets x_{in}$, $\tau\gets 0$, $\xi(0)\gets x_{in}$ \label{aeq:initC}
		\STATE Set $t=  8(n+p'(n))/\epsilon$ and   $M= 2^k m n^3  t\|H\|$ \label{aeq:Mt}
		\IF {$x\not\in S$}\label{aeq:supp} \hfill$\rhd${A good $x_{in}$ should satisfy $\langle x_{in}|\phi\rangle\neq 0$}
		\STATE Return Reject \label{aeq:reject_x_0}
		\ENDIF
		\WHILE{$\tau<t$} \label{aeq:tauM}
		\IF{$C_{\phi_S}(x)$ is not represented by $p(n)$-bits}
			\STATE Return Reject 
			\hfill$\rhd${Check the format of $C_{\phi_S}(x)$}
		\ENDIF
		\IF{ $\sum_{z\in S} \langle z|\GHS|x\rangle \neq 0$ or $\exists y\neq x, y\in S$ s.t. $ \langle y|\GHS|x\rangle < 0$} \label{aeq:GH}
		\STATE Return Reject \label{aeq:99} \hfill$\rhd${Add a Check}
		\ENDIF		
		\IF { $\kappa\geq M$} \label{aeq:cm}
			\STATE Return Reject \label{aeq:cm1}
		\ENDIF \label{aeq:cm2}
		\IF {$|\langle x|\GHS|x\rangle|=0$}\label{aeq:xx}
			\STATE Set $\xi(s)=x$ for all $s\in (\tau,t]$ \label{aeq:ol1}
			\STATE $\tau \gets t$ \label{aeq:ol2}
		\ELSE
		\STATE Sample $u\in[0,1]$ from the uniform distribution $[0,1]$  \label{aeq:cr1}
        \NoNumber{$\Delta \tau\gets \frac{\ln(1/u)}{|\langle x|\GHS|x\rangle|}$} 
		\STATE Set $\xi(s)=x$ for all $s\in (\tau,\tau+\Delta\tau]$ \label{aeq:cr2}
		\STATE $\tau \gets \tau +\Delta\tau$ \label{aeq:cr3}
		\STATE Sample $y\in S\setminus\{x\}$ from the  probability distribution $\frac{\langle y|\GHS|x\rangle}{|\langle x|\GHS|x\rangle|}$ \label{aeq:xy}
		\STATE $x\gets y$ \label{aeq:x_visit}
		\STATE  $\kappa\gets \kappa+1$
		\ENDIF
		\ENDWHILE
		\STATE Return Accept \label{aeq:end}
\end{algorithmic}
\end{algorithm}

The performance of Algorithm \ref{algo:check} is summarized as follows. The proof is in Section \ref{sec:analysis}.
\begin{theorem}\label{thm:main_CT}
	For any $(H_S,\phi_S, x_{in})$, Algorithm \ref{algo:check} always runs in polynomial time. Besides,
 \begin{itemize}
     \item For the Yes instance, there exists $\phi_S$, $x_{in}$ such that Algorithm \ref{algo:check} accepts w.p. $\geq 1/2$.
     \item  In the No instance, $\forall$ $\phi_S$, $x_{in}$, Algorithm \ref{algo:check} rejects w.p. $\geq 1-2^{-n}$.
 \end{itemize}
\end{theorem}

\subsubsection{Analysis}\label{sec:analysis}

Note that the number of iterations in Algorithm \ref{algo:check}  is bounded by $M=poly(n)$. By Claim \ref{claim:ct} each line 
can be performed in $poly(n)$-time,  thus the algorithm always runs in polynomial time. 

In the following,  we say a string $x$ ``pass line \ref{aeq:GH}" if $x$ does \textbf{not} satisfy the conditions in line \ref{aeq:GH}, thus will not be rejected immediately in line \ref{aeq:99}. 
In the following we give the proof of completeness and soundness.

\vspace{1em}
\noindent\begin{proof}[\textbf{of Completeness of Theorem \ref{thm:main_CT}.}]
In the Yes instance, we have that $$\lambda(H_S)=0\text{ and }\ket{\phi_S}\text{ is the ground state.}$$
By Lemma \ref{lem:basic_FH} (3), $$\lambda(\FHS)=\lambda(H_S)=0.$$ Thus $\GHS$ coincides with $\GHtS$, see Eq.~(\ref{eq:GHt}) and
Eq.~(\ref{eq:GH}).  By Lemma \ref{lem:effS}, we know that there exists $x_{in}\in S$ such that   
	\begin{align}
 &\langle x_{in}|\phi_S\rangle\neq 0,\\
		&Pr \left(  \kappa\left(x_{in},t \right)\leq  M\right)\geq \frac{1}{2}.
	\end{align}
Besides, by Corollary \ref{cor:basic_GH} (2), we know that
\begin{align*}
    &\forall x\in S,\sum_{z\in S} \langle z|\GHS|x\rangle = 0,\\
    &\forall y\neq x, x\in S,y\in S, \langle y|\GHS|x\rangle \geq 0.
\end{align*}
Thus the checks in line \ref{aeq:supp}, \ref{aeq:GH} in Algorithm \ref{algo:check}
are always passed. The check in line \ref{aeq:cm} in Algorithm \ref{algo:check} is passed with probability greater than $\frac{1}{2}.$
Thus in summary, the probability of accepting is greater than $1/2$.

\end{proof}	

\noindent\begin{proof}[\textbf{of Soundness of Theorem \ref{thm:main_CT}.}]
W.l.o.g. assume line \ref{aeq:supp}  is  passed, otherwise the protocol rejects immediately. Thus we have 
$$\langle x_{in}|\phi_S\rangle\neq 0\text{ thus }\ket{\phi_S}\neq 0.$$
 In the No instance, we know 
 \begin{align}
 	\lambda(H_S)\geq \epsilon =1/poly(n),
 \end{align}
$\ket{\phi_S}$ can be an arbitrary adversary state, and $H_S\ket{\phi_S}\neq 0$.

In the No case $\GHS$ might not be a legal generator of a CTMC. To analyze the soundness, we  consider another algorithm without the check in line \ref{aeq:cm},  that is, deleting lines \ref{aeq:cm}, \ref{aeq:cm1} and \ref{aeq:cm2}.
 Denote this new algorithm as Algorithm \ref{algo:check}$^*$. 
 The accepting probability of Algorithm \ref{algo:check} can only be less than
 Algorithm \ref{algo:check}$^*$
 \footnote{Imagine running  Algorithm \ref{algo:check} without any checks,  that is, without line \ref{aeq:supp}, \ref{aeq:GH} and \ref{aeq:cm}, and we write down all the possible logs on a paper. Each log corresponds to a probability. What the checking procedure does is assigning reject to some logs. Less check, Less reject. It is possible that  Algorithm \ref{algo:check}$^*$ runs in exponential time. We do not care  about the efficiency of  Algorithm \ref{algo:check}$^*$, we only use it as a technique to bound the accepting probability of Algorithm \ref{algo:check}.}.

Define $S_{good}$ be the set of strings which pass line \ref{aeq:GH}.
 That is, 
 \begin{align}
 	 S_{good} :=&\{x\in S| \sum_{y\in S} \langle y|\GHS|x\rangle =0,\nonumber\\
   &\forall y\neq x, y\in S, \langle y|\GHS|x\rangle\geq 0\}.
 \end{align} 
 W.o.l.g. we assume that the first execution of line \ref{aeq:GH} is passed, otherwise the protocol rejects immediately. Thus we have
 $$
 \xi(0)=x_{in}\in S_{good}.
 $$

 Let $\GHS_{{good}}$ and $\FHS_{{good}}$ be the submatrix of $\GHS,\FHS$, where the row and column indices are in $S_{good}$. Let $\ket{\phi_{S_{good}}}$ be the state restricting $\ket{\phi_S}$ in $S_{good}$.
  It is worth noting that 
  \begin{align}
      \GHS_{{good}}\neq G^{H_{S_{good}},\phi_{S_{good}}},
  \end{align}
  since the diagonal elements are different. Instead, 
 \begin{align}
 		\GHS_{{good}}= Diag(\phi_{S_{good}})(-\FHS_{{good}})Diag(\phi_{S_{good}})^{-1}.\label{eq:72}
 \end{align}

 We claim that

\begin{claim}\label{cla:12}
\begin{align*}
    &\lambda_{max}(-\FHS_{good})\leq -\epsilon,\\
    &\|\exp(-\FHS_{good})\|\leq 1- \epsilon/2.
\end{align*}
where $\epsilon=1/poly(n)\leq 1/2$. 
\end{claim}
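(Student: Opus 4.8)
\textbf{Proof proposal for Claim \ref{cla:12}.}

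The plan is to first establish the eigenvalue bound $\lambda_{max}(-\FHS_{good})\le -\epsilon$ and then deduce the norm bound on the matrix exponential. The key observation for the eigenvalue bound is that $\FHS_{good}$ is a principal submatrix of $\FHS$, obtained by restricting to the rows and columns indexed by $S_{good}\subseteq S$. Since $\FHS$ is symmetric and real-valued (by Lemma \ref{lem:basic_FH}.(1)), Fact \ref{fact:prin} applies and gives $\lambda(\FHS)\le \lambda(\FHS_{good})\le \lambda_{max}(\FHS_{good})\le \lambda_{max}(\FHS)$. Now I would use the property that $\lambda(\FHS)\ge \lambda(H_S)$ for \emph{any} regularized state $\ket{\phi_S}$ (Lemma \ref{lem:basic_FH}.(3)), which holds in particular for the adversarial $\ket{\phi_S}$ in the No instance. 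Combining this with Claim \ref{claim:HS}, which tells us $\lambda(H_S)\ge \epsilon$ in the No instance, we get $\lambda(\FHS_{good})\ge \lambda(\FHS)\ge \lambda(H_S)\ge \epsilon$. Taking negatives, every eigenvalue of $-\FHS_{good}$ is at most $-\epsilon$, i.e. $\lambda_{max}(-\FHS_{good})\le -\epsilon$, which is the first claim.

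For the norm bound, I would use that $\FHS_{good}$ is symmetric, hence $-\FHS_{good}$ is symmetric, hence $\exp(-\FHS_{good})$ is symmetric (and positive definite). The spectrum of $\exp(-\FHS_{good})$ is $\{e^{-\mu} : \mu \text{ eigenvalue of } \FHS_{good}\}$, and since each such $\mu\ge\epsilon$, each eigenvalue $e^{-\mu}$ of $\exp(-\FHS_{good})$ lies in $(0, e^{-\epsilon}]$. Therefore $\|\exp(-\FHS_{good})\| = e^{-\epsilon}$. To finish, I would invoke Fact \ref{fact:ex} (or simply the elementary inequality $e^{-x}\le 1-x+x^2/2 \le 1-x/2$ for $x\in[0,1]$): since $\epsilon = 1/poly(n)\le 1/2$, we have $e^{-\epsilon}\le 1-\epsilon/2$, giving $\|\exp(-\FHS_{good})\|\le 1-\epsilon/2$ as claimed.

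The one subtlety I want to be careful about is the distinction, emphasized in the text right before the claim, between $\GHS_{good}$ and $G^{H_{S_{good}},\phi_{S_{good}}}$: the claim is stated about $-\FHS_{good}$, the principal submatrix of the \emph{symmetric} matrix $\FHS$, not about the non-symmetric $\GHS_{good}$, so Fact \ref{fact:prin} genuinely applies and there is no issue. (The similarity transform in Eq.~(\ref{eq:72}) shows $\GHS_{good}$ and $-\FHS_{good}$ share a spectrum, but I will not need that here.) I do not expect a serious obstacle; the main thing is just assembling the chain of inequalities from Fact \ref{fact:prin}, Lemma \ref{lem:basic_FH}.(3), and Claim \ref{claim:HS} correctly, and making sure the eigenvalue-to-matrix-exponential step uses symmetry of $\FHS_{good}$ (so that the operator norm equals the largest eigenvalue of the exponential rather than just being bounded by it).
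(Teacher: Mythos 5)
Your proposal is correct and follows essentially the same route as the paper: Fact \ref{fact:prin} applied to the principal submatrix of the symmetric $\FHS$, combined with $\lambda(\FHS)\geq\lambda(H_S)\geq\epsilon$ from Lemma \ref{lem:basic_FH}.(3) and Claim \ref{claim:HS}, then symmetry of $\FHS_{good}$ plus $e^{-x}\leq 1-x/2$ for small $x$ to bound the norm of the exponential. The only nit is that $\|\exp(-\FHS_{good})\|$ should be stated as $\leq e^{-\epsilon}$ rather than $= e^{-\epsilon}$ (the smallest eigenvalue of $\FHS_{good}$ may exceed $\epsilon$), but this does not affect the conclusion.
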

\begin{proof}
By definition $\ket{\phi_{S_{good}}}$  is regularized. 
	Note that by Lemma \ref{lem:basic_FH} (3), we have 
	\begin{align}
		\lambda(\FHS)\geq \lambda(H_S)\geq \epsilon.
	\end{align}
 Since  $-\FHS$ is symmetric thus Hermitian, by Fact \ref{fact:prin} we have 
 \begin{align*}
 	\lambda_{max}(-\FHS_{good})&\leq \lambda_{max}(-\FHS)\\
 	&= -\lambda(\FHS)\\
 	&\leq -\epsilon.
 \end{align*}
Note that  
$$\exp(-x)\leq 1-x/2\text{ for }x\leq 1/2,$$ and by definition of $\exp(\cdot)$, we know all eigenvalues\footnote{$\FHS_{good} $ is diagonalizable since it is Hermitian.} of $\exp(-\FHS_{good})$ is non-negative, thus 
	 we conclude that\footnote{$\exp(-\FHS_{good})$ is Hermitian, thus its  spectral norm is its maximum absolute value of eigenvalues. It is worth noting that  $\exp(\GHS_{good})$ is not Hermitian, although it has the same spectrum as $\exp(-\FHS_{good})$, they do not have the same  spectral norm.}  
  $$\|\exp(-\FHS_{good})\|\leq 1-\epsilon/2\text{ for }\epsilon\leq 1/2.$$
\end{proof}

In the following, we show that, although $\GHS$ does not correspond to a legal generator,  
Algorithm \ref{algo:check}$^*$ w.r.t. to $\GHS$ still have  similar infinitesimal properties as Lemma \ref{lem:legal}. The properties are summarized in Claim \ref{cla:pro}.  

To describe Claim \ref{cla:pro}, we define some notations. Consider running Algorithm \ref{algo:check}$^*$ w.r.t. $$(\GHS,\phi_S,x_{in}),$$ for some $x_{in}$\footnote{ 
A probability distribution of $x_{in}$ will not get a higher acceptance probability than one particular $x_{in}$ which maximize the accepting probability.}. Let $\tau_{end}$ be the value of $\tau$ when Algorithm \ref{algo:check}$^*$ returns Accept/Reject. Let $\xi(s), s\in[0,\tau_{end}]$ be the $\xi(s)$ in Algorithm \ref{algo:check}$^*$.  
We know that 
$$\xi(s)\in S_{good}, \forall s\in[0,\tau_{end}).$$  To clarify, $\tau_{end}$ and $\xi$ are random variables. For any fixed $s,t$, let $c_1$ be the first time that $\xi$ changes its value after time $s$, conditioned on $\tau_{end}\geq s$ and $\xi(s)=x$,  that is,  
	\begin{align*}
		c_1 =\min\{\eta:\eta>s,\xi(\eta)\neq x|\tau_{end}>s,\xi(s)=x\}.
	\end{align*}
Similarly, define $c_2$ to be  the second time that   $\xi$ changes its value after time $s$,  that is, 
	\begin{align*}
		c_2 =\min\{\eta: \eta>c_1, \xi(\eta)\neq \xi(c_1)|\tau_{end}>s,\xi(s)=x\}.
	\end{align*}

\begin{claim}\label{cla:pro}
	For any fixed $s< t$, any $x$, let $h$ be an infinitesimal  value. Use notations defined above. Conditioned on $\tau_{end}>s$ and $\xi(s)=x$, we have\footnote{We use $c_1$ to state this theorem instead of using the number of  transitions as in Lemma \ref{lem:legal}, since our algorithm ends immediately when it hits a not good string. It's a bit tricky to use the notion of  number of transitions in time $[s,s+h]$ here.} 
	\begin{itemize}
		\item The probability that $\tau_{end}\geq s+h$ and $\xi$ does not change value in time $[x,x+h]$ is
			\begin{align}
			&Pr(c_1\geq s+h | \tau_{end}>s,\xi(s)=x) \nonumber\\
   &= 1 - |\langle x|\GHS|x\rangle| h + O(h^2) \label{eq:xx}.
			\end{align}
		\item
		For $y\neq x, y\not\in S_{good}$, the probability that Algorithm \ref{algo:check} ends in time between $[s,s+h]$ by hitting $y$ is
		\begin{align}
			&Pr(c_1\leq s+h, \xi(c_1)=y|\tau_{end}>s,\xi(s)=x)\nonumber\\
   &= \langle y|\GHS|x\rangle h +O(h^2).
			\label{eq:xyb}
		\end{align} 
		\item For $y\neq x,y\in S_{good}$, the probability that $\xi$ hits $y$ in time $c_1\in [s,s+h]$, and keeps in $y$ in $[c_1,s+h]$ is
		\begin{align}
		&Pr(c_1\leq s+h, \xi(c_1)=y, c_2\geq s+h|\tau_{end}>s,\xi(s)=x)\nonumber\\
  &= \langle y|\GHS|x\rangle h +O(h^2)\label{eq:xyg}. 	
		\end{align}
	\item The probability of other events,  that is,  the probability that $\xi$ changes its value more than once\footnote{$\xi$ may or may not hit a not good string in $c_2$.} in $[s,s+h]$ is 
		\begin{align}
			Pr(c_2\leq s+h|\tau_{end}>s,\xi(s)=x)= O(h^2)\label{eq:el}.
		\end{align}
	\end{itemize}
\end{claim}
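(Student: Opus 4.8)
The plan is to reduce everything to the behavior of a genuine CTMC by "completing" the column structure of $\GHS$ on the bad set. Concretely, I would introduce an auxiliary generator $G \in \bR^{|S|\times|S|}$ defined by $\langle\cdot|G|x\rangle = \langle\cdot|\GHS|x\rangle$ whenever $x\in S_{good}$, and $\langle\cdot|G|x\rangle = 0$ (so $x$ is an absorbing state) whenever $x\notin S_{good}$. By the definition of $S_{good}$, every column of $G$ indexed by $S_{good}$ has nonnegative off-diagonal entries and sums to zero, and the columns indexed by the complement trivially do too; hence $G$ is a \emph{legal} generator. The key observation is that Algorithm \ref{algo:check}$^*$, run with $\GHS$ and started at $x_{in}\in S_{good}$, produces exactly the same trajectory distribution as Gillespie's algorithm (Algorithm \ref{algo:Gill}) run with $G$, up to and including the first time the walk enters $S\setminus S_{good}$ — because before that moment every visited state lies in $S_{good}$, where $\GHS$ and $G$ agree column-wise, and the waiting-time samples and next-state samples in line \ref{aeq:xy} depend only on the current column. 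The only subtlety is that Algorithm \ref{algo:check}$^*$ \emph{halts} (and rejects) upon hitting a bad $y$, whereas the $G$-CTMC would continue (but sits forever at the absorbing $y$); this does not affect any of the four probabilities in the claim, since all of them are conditioned on $\tau_{end}>s$ and concern only the window $[s,s+h]$ and the identity of $\xi$ at times $\le s+h$.

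Given this coupling, I would derive each of the four estimates directly from the exponential-distribution mechanics of Gillespie's algorithm, exactly as in the proof of Lemma \ref{lem:legal}, but bookkeeping the bad/good split of the landing state. Conditioned on $\xi(s)=x\in S_{good}$ and $\tau_{end}>s$, the memorylessness of the exponential waiting time gives that the residual time in state $x$ after $s$ is again exponential with rate $|\langle x|\GHS|x\rangle|$; so $Pr(c_1\ge s+h) = \exp(-|\langle x|\GHS|x\rangle|h) = 1 - |\langle x|\GHS|x\rangle|h + O(h^2)$ by Fact \ref{fact:ex}, giving Eq.~(\ref{eq:xx}). For Eq.~(\ref{eq:xyb}) and Eq.~(\ref{eq:xyg}): the probability of exactly one jump in $[s,s+h]$ is $|\langle x|\GHS|x\rangle|h + O(h^2)$, and conditioned on that jump the next state is $y$ with probability $\langle y|\GHS|x\rangle/|\langle x|\GHS|x\rangle|$ (line \ref{aeq:xy}); multiplying gives $\langle y|\GHS|x\rangle h + O(h^2)$. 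When $y\notin S_{good}$ this is already the event in Eq.~(\ref{eq:xyb}) (the algorithm then halts, so nothing further can happen in the window). When $y\in S_{good}$, to get Eq.~(\ref{eq:xyg}) I additionally require no \emph{second} jump before $s+h$; the probability of a further jump within the remaining time $\le h$ is $O(h)$, so it only perturbs the estimate by $O(h^2)$. Finally Eq.~(\ref{eq:el}) follows because two or more jumps in a window of length $h$ requires two independent exponential clocks (each of rate at most $\|\GHS\| \le 2^{p'(n)+2n}$, a fixed quantity) to both fire, which has probability $O(h^2)$. All the $O(h^2)$ constants can be taken uniform in $x$ because $amax(\GHS)$ and the sparsity $d=poly(n)$ are bounded (Remark \ref{remark:G}), so $|\langle x|\GHS|x\rangle|\le d\cdot amax(\GHS)$ is bounded.

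The main obstacle I anticipate is not any single estimate but making the conditioning rigorous: $\tau_{end}$ is itself a random variable depending on the whole trajectory, so "condition on $\tau_{end}>s$ and $\xi(s)=x$" must be handled carefully to be sure the residual waiting time is still exponential and independent of the past. The clean way is the coupling above — transfer the question to the honest $G$-CTMC, where $\{\tau_{end}>s\}$ becomes simply $\{\xi$ has not been absorbed by time $s\}$, i.e. $\{\xi(u)\in S_{good}\ \forall u\le s\}$, and then invoke the strong Markov property of the CTMC at the (deterministic) time $s$ together with memorylessness. This is precisely the argument the authors say they will run "directly \dots by properties of exponential distribution" in Appendix \ref{appendix:df}; the coupling just makes transparent why the exponential-clock calculus applies verbatim. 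Everything else is the same $O(h^2)$ expansion as in the textbook proof of Lemma \ref{lem:legal}.
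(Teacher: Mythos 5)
Your proposal is correct, but it formalizes the claim along a different route than the paper's own proof. You build the legal absorbing generator $G$ (zeroing out the columns indexed by $S\setminus S_{good}$), couple Algorithm \ref{algo:check}$^*$ to Gillespie's algorithm for $G$ up to the first hit of a bad state, and then dispatch the four estimates by the Markov property at the deterministic time $s$ plus the standard exponential-clock expansion, essentially reusing Lemma \ref{lem:legal}. The paper actually mentions exactly this coupling in the sentence preceding Claim \ref{cla:pro} as the intuition, but deliberately does not formalize it; its Appendix \ref{appendix:df} proof instead handles the conditioning by hand: it conditions on the last arrival time $t_x=s_x$ before $s$, uses memorylessness in the form $Pr(\Tx\geq s+h-s_x\mid \Tx\geq s-s_x)=\exp(-u_x h)$, and integrates explicitly over the law of $t_x$ (and over the jump time $f$ and the waiting time at $y$) to get each of Eqs.~(\ref{eq:xx})--(\ref{eq:xyg}), obtaining Eq.~(\ref{eq:el}) as one minus the rest using $\sum_y\langle y|\GHS|x\rangle=0$. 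What your route buys is conceptual economy: once the coupling and the $\mathcal{F}_s$-measurability of $\{\tau_{end}>s\}$ are spelled out, the claim is an immediate corollary of the legal-generator case, and the conditioning worry you flag is resolved cleanly. What the paper's direct computation buys is self-containedness (no appeal to the Markov property for the halted, "illegal" process) and, more importantly, a template that transfers verbatim to the discretized setting (Claim \ref{cla:proD} via Lemma \ref{lem:legalD}), where the same explicit sums replace the integrals and the error constants must be tracked quantitatively rather than hidden in $O(h^2)$. Two cosmetic points: the relevant rate bound is on the diagonal entries, $|\langle x|\GHS|x\rangle|\leq poly(n)\cdot amax(\GHS)$, not the spectral norm $\|\GHS\|$ (harmless, since both are fixed and finite as $h\to 0$); and if you wrote this up you should state explicitly that the trajectory distributions of the two processes agree on all events measurable up to the first entry into $S\setminus S_{good}$, which is all the claim's events require.
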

We rigorously prove Claim \ref{cla:pro} in Appendix \ref{appendix:df} using properties of the exponential distribution. On the other hand,
 one can intuitively imagine the correctness of Claim \ref{cla:pro} from Lemma \ref{lem:legal}: Although $\GHS$ may not be a legal generator, one can   consider another legal generator $G\in \bR^{|S|\times|S|}$ which is obtained by setting column $\langle *| \GHS |x\rangle$ to $0$ for $x\not\in S_{good}$.  Lemma \ref{lem:legal}  holds for  this legal generator $G$, and
 notice that the Algorithm \ref{algo:check}$^*$ behaves the same w.r.t. $\GHS$ as w.r.t. $G$ conditioned on it never hits $x \not\in S_{good}$.

With Claim \ref{cla:pro} we can prove:

\begin{claim}\label{claim:18}
	The accepting probability of Algorithm \ref{algo:check}$^*$ w.r.t. $(\GHS,\phi_S,x_{in})$
	is less than $1/2^n$.
\end{claim}

\begin{proof}
Consider the random process generated by running Algorithm \ref{algo:check}$^*$ w.r.t.  $(\GHS,\phi_S,x_{in})$. w.l.o.g.  assume that $x_{in}\in S$ otherwise line \ref{aeq:supp} rejects immediately. 

Let $P_x(s)$ be the probability that $\tau_{end}\geq s$ and $\xi(s)=x$.  Note that by definition of Algorithm \ref{algo:check}$^*$, 
$$\xi(\tau)\in S_{good}, \forall \tau\in [0,\tau_{end}].$$  Let $\ket{P_{good}(s)}\in \bR^{|S_{good}|}$ be the vector $$[...,P_x(s),...]^T\text{ for }x\in S_{good}.$$ Let $h$ be an infinitesimal  value, by Claim \ref{cla:pro},  we have for any $z\in S_{good}$,

\begin{align}
&P_z(s+h)
  =  \underbrace{P_z(s)\left(1-|\langle z|\GHS|z\rangle|h  
   + O(h^2)\right)}_\text{$\xi(s)=z$, stays in $z$ till $s+h$}\nonumber\\
   &+	\underbrace{\sum_{x:x\neq z, x\in S_{good}} P_x(s)\left(\langle z|\GHS|x\rangle h + O(h^2)\right)}_\text{$\xi(s)=x$, jump to $z$ between time $[s,s+h]$, stays in $z$ till $s+h$ } \nonumber \\
  & + \underbrace{O(h^2).}_{\substack{\xi(s)\in S_{good}, \text{ jump more than once in $[s,s+h]$} \\ \text{but finally stay $z$ in $s+h$}}} \label{eq:79}
\end{align}
Note that by the definition of $S_{good}$, we have 
\begin{align*}
\langle x|\GHS|x\rangle&= -\sum_{y\neq x,y\in S} \langle y|\GHS|x\rangle\\
&\leq 0.
\end{align*}
Thus we have for $z\in S_{good}$, $$-|\langle z|\GHS|z\rangle| =\langle z|\GHS|z\rangle.$$ Thus Eq.~(\ref{eq:79}) is equivalent to
\begin{align*}
	P_z(s+h) - P_z(s) &= \sum_{x\in S_{good}} \langle z|\GHS|x\rangle P_x(s) h +O(h^2)\\
	 & =\sum_{x\in S_{good}} \langle z|\GHS_{good}|x\rangle P_x(s) h +O(h^2),
	 \end{align*}
 where the last equality comes from the fact that $z,x\in S_{good}$. Thus

\begin{align*}
&P_z'(s) 
	 		= \langle z | \GHS_{good}|P_{good}(s)\rangle,\\
\Rightarrow &P'_{good}(s) = \GHS_{good} \ket{P_{good}(s)},	\\
 \Rightarrow &P_{good}(s) = \exp(\GHS_{good}s)\ket{x_{in}}.
\end{align*}
	Thus
	\begin{align}
	&Pr(\text{Algorithm \ref{algo:check}$^*$ Accept}) \label{eq:star_start}\\
 &= Pr(\tau_{end}>t)\nonumber\\
	& = \sum_{z\in S_{good}} P_z(t)\nonumber\\
	& =	\sum_{z\in S_{good}}  \langle z|\exp(\GHS_{good} t)|x_{in}\rangle\nonumber\\
	& =	\sum_{z\in S_{good}}  \langle z| Diag(\phi_{S_{good}})\exp(-\FHS_{good} t)Diag(\phi_{S_{good}})^{-1}|x_{in}\rangle\nonumber\\
	& =	\sum_{z\in S_{good}} \frac{ \langle z|\phi_{S} \rangle }{\langle x_{in}|\phi_S\rangle}\langle z|\exp(-\FHS_{good} t)|x_{in}\rangle.\nonumber
	\end{align}

By  Claim \ref{cla:12} and Remark \ref{remark:precision} in Appendix \ref{appendix:precision} we know that
\begin{align}
	&\sum_{z\in S_{good}} \frac{ \langle z|\phi_{S} \rangle }{\langle x_{in}|\phi_S\rangle}\langle z|\exp(-\FHS_{good} t)|x_{in}\rangle \label{eq:star_end}\\
 & \leq \sum_{z\in S_{good}} \left|\frac{ \langle z|\phi_{S} \rangle }{\langle x_{in}|\phi_S\rangle}\right|\cdot \left|\langle z|\exp(-\FHS_{good} t)|x_{in}\rangle \right|\nonumber\\
	&\leq 2^n \cdot 2^{3p(n)} \cdot (1-\epsilon/2)^t \label{eq:star_end_end} \\
	&\leq 2^{-n},\nonumber
\end{align}
for sufficiently large 
$$t\geq 8(n+p'(n))/\epsilon \geq 8(n+p(n))/\epsilon,$$ where $\epsilon\leq 1/2$. 
\end{proof}

Thus by Claim \ref{claim:18} finally we conclude that
 the accepting probability of Algorithm \ref{algo:check}, is smaller than the accepting probability of Algorithm \ref{algo:check}$^*$, which is smaller than $2^{-n}.$

\end{proof}

\subsection{Protocol with  discrete-time randomness}\label{sec:DT}

 Recall that we denote  $S\subseteq\{0,1\}^n$ as the state space,  $G\in \bR^{|S|\times|S|}$ as a matrix, and $x,y\in S$ as the states. Additionally, $p'(n)=poly(n)$ is a precision parameter explained in Appendix \ref{appendix:precision} Remark \ref{remark:G}, and
 $M$ is the upper bound of the number of transitions used in Algorithm \ref{algo:check},
 $$
  M = 2^kmn^3\cdot  8(n+p'(n))/\epsilon\cdot \|H\|  =poly(n).
 $$
 
 In this section, we replace the assumption from Section IV B—that one can sample uniformly from $[0,1]$  in $poly(n)$ time, and thus can sample from an exponential distribution  in $poly(n)$ time—with its discrete approximation. 
 In particular, we assume that we can use $poly(n)$-time to sample from the truncated discretized exponential distribution $\cD_{K,\delta,\lambda}$ as in Remark \ref{remark:dist} in Appendix \ref{appendix:precision}. 
 The value of $\lambda$ will be specified later  in the algorithm, while $K$ and $\delta$ are chosen to be
 \begin{align}   
    \delta &:= 2^{-2n}/M,\label{eq:Md1}\\
     K &:=  \lceil 2^{p'(n)+2n}  M  \left(n \ln 2 + \ln M\right)) \rceil.\label{eq:Md2}
 \end{align}

  Note that $\cD_{K,\delta,\lambda}$ serves as a discrete approximation of the exponential distribution in the following sense:
    \begin{claim}\label{claim:C2D}
         Sample a random variable $\Delta \tau$ according to the exponential distribution with parameter $\lambda$. Let $\Delta \tau_D$ be the rounded value of $\Delta \tau$, which is 
        the largest value in the set $\{k\delta\}_{k=0,...,K}$ that does not exceed $\Delta \tau$. Then the distribution of   $ \Delta \tau_D$ is $\cD_{K,\delta,\lambda}$.  
    \end{claim}
Here, the subscript $D$ in $\Delta \tau_D$ refers to the  ``discrete approximation".

The discretized $\MA$ verification protocol is derived by modifying Algorithm \ref{algo:check}: First, for clarity, we rename the variables  $(\tau,\Delta\tau,\xi)$ to $(\tau_D,\Delta\tau_D,\xi_D)$. 
 Due to the discretization error, in line \ref{aeq:Mt} we change the value of $t$ from $8(n+p'(n))/\epsilon$ to  $(- 2^{-n})+8(n+p'(n))/\epsilon.$   
Finally, we replace the continuous-time process (lines \ref{aeq:cr1}, \ref{aeq:cr2} and \ref{aeq:cr3} in Algorithm \ref{algo:check}) with its discrete approximation, as shown below: 
\begin{center}
\begin{algorithm}[H]
    \begin{algorithmic}
        \STATE Sample $\Delta \tau_D$ from $\cD_{K,\delta,|\langle x|\GHS|x\rangle|}$
		\STATE Set $\xi_D(s)=x$ for all $s\in (\tau_D,\tau_D+\Delta\tau_D]$ 
		\STATE $\tau_D \gets \tau_D +\Delta\tau_D$ 
    \end{algorithmic}
\end{algorithm}
\end{center}

We denote the discretized $\MA$ verification protocol  as Algorithm \ref{algo:check}D.
The performance of Algorithm \ref{algo:check}D is summarized as follows.

\begin{theorem}\label{thm:main_DT} 
	For any $(H_S,\phi_S, x_{in})$, Algorithm \ref{algo:check}D always runs in polynomial time. Besides,
 \begin{itemize}
     \item For the Yes instance, there exists $\phi_S$, $x_{in}$ such that Algorithm \ref{algo:check}D accepts w.p. $\geq 1/2-2^{-n}$. 
     \item  In the No instance, $\forall$ $\phi_S$, $x_{in}$, Algorithm \ref{algo:check}D rejects w.p. $\geq 1-3\cdot 2^{-n}$.
 \end{itemize}
\end{theorem}

\subsubsection{Connecting Algorithm \ref{algo:check} and Algortihm \ref{algo:check}D}

 Based on Claim \ref{claim:C2D} we can interpret the discretized distribution $\cD_{K,\delta,\lambda}$ as being derived from rounding the exponential distribution.  Using this perspective\footnote{We thank the anonymous reviewers for suggesting this connection, which significantly simplifies the proof.},  we will prove Theorem \ref{thm:main_DT} by showing that Algorithm \ref{algo:check}D is a good discrete approximation of  Algorithm \ref{algo:check}. 

\begin{algorithm}
\caption{Checking\_Compare$(H_S,\phi_S,x_{in})$}\label{algo:checkCD}
	\begin{algorithmic}[1]
		\STATE  $\kappa\gets 0$
        \hfill$\rhd${Record the number of transitions}
\STATE $x \gets x_{in}$,$\tau \leftarrow 0$,  $\xi (0)\gets x_{in}$ 
\NoNumber{$\tau_D\gets 0$, $\xi_D(0)\gets x_{in}$
} 
          \STATE \hfill$\rhd${Do not specify $M,t$.}
		\IF {$x\not\in S$}\label{aeq:suppD}
 \hfill$\rhd${A good $x_{in}$ should satisfy $\langle x_{in}|\phi\rangle\neq 0$}
		\STATE Return Reject
		\ENDIF
		\WHILE{True} 
		\IF{$C_{\phi_S}(x)$ is not represented by $p(n)$-bits}
			\STATE Return Reject 
			\hfill$\rhd${Check the format of $C_{\phi_S}(x)$.}
		\ENDIF
		\IF{ $\sum_{z\in S} \langle z|\GHS|x\rangle \neq 0$ or $\exists y\neq x, y\in S$ s.t. $ \langle y|\GHS|x\rangle < 0$} \label{aeq:GHD}
		\STATE Return Reject \hfill$\rhd${Add a Check}
		\ENDIF		
	\STATE \hfill$\rhd${Delete the check $\kappa\geq M$}	
			\STATE{} 
		\STATE
		\IF {$|\langle x|\GHS|x\rangle|=0$}\label{aeq:xxD}
			\STATE Set $\xi(s)=x$ for all $s\in (\tau,+\infty]$; 
                \NoNumber{Set $\xi_D(s)=x$ for all $s\in (\tau_D,+\infty]$;}
			\STATE $\tau \gets +\infty$
            \NoNumber{$\tau_D \gets +\infty$}
		\ELSE
		\STATE Sample $u\in[0,1]$ from  uniform distribution $[0,1]$ \label{aeq:cr1_CD}
\NoNumber{ $\Delta \tau\gets \frac{\ln(1/u)}{|\langle x|\GHS|x\rangle|}$}
\STATE Set $\xi(s)=x$ for all $s\in (\tau,\tau+\Delta \tau]$
\STATE $\tau\leftarrow \tau+\Delta \tau$
\NoNumber{Let $\Delta \tau_D$ be the rounded value of $\Delta \tau$, which is 
        the largest value in the set $\{k\delta\}_{k=0,...,K}$ that does not exceed $\Delta \tau$.}
\NoNumber{Set $\xi_D(s)=x$ for all $s\in (\tau_D,\tau_D+\Delta\tau_D]$} 
\NoNumber{$\tau_D \gets \tau_D +\Delta\tau_D$} 
		\STATE Sample $y\in S \setminus\{x\}$ from the probability distribution $\frac{\langle y|\GHS|x\rangle}{|\langle x|\GHS|x\rangle|}$ \label{aeq:xyD}
		\STATE $x\gets y$
		\STATE  $\kappa\gets \kappa+1$
		\ENDIF
		\ENDWHILE
		\STATE
\end{algorithmic}
\end{algorithm}

More specifically,
  to aid in the proof of Theorem \ref{thm:main_DT}, we define a new algorithm that couples the continuous and the discrete processes.   The full description is provided in Algorithm \ref{algo:checkCD}.
  Algorithm \ref{algo:checkCD} is derived by modifying specific lines in Algorithm \ref{algo:check} as follows.
    \begin{itemize}
    \item[\textcircled{1}] Change line \ref{aeq:initC} to contain the variables for both the continuous and discrete process:
\begin{center}
\begin{algorithmic}
\STATE $x \gets x_{in}$, $\tau \leftarrow 0$,  $\xi (0)\gets x_{in}$ 
	\STATE $\tau_D\gets 0$, $\xi_D(0)\gets x_{in}$ 
\end{algorithmic}
\end{center}

\item[\textcircled{2}] Delete the sentence in line \ref{aeq:Mt}.
To keep Algorithm \ref{algo:checkCD}  flexible as an analytical tool, we do not assign specific values to $t$ and $M$.
 Accordingly, we modify line \ref{aeq:tauM} to ``While True" which means loop forever. Additionally, we remove the sentences in lines \ref{aeq:cm}, \ref{aeq:cm1} and \ref{aeq:cm2}, and delete line \ref{aeq:end}.

\item[\textcircled{3}] Modify lines \ref{aeq:cr1}, \ref{aeq:cr2} and \ref{aeq:cr3} to
include the updates for both the continuous and discrete processes. In particular, the discrete process is derived by rounding the continuous process:
\begin{center}
\begin{algorithmic}
\STATE Sample $u\in[0,1]$ from  uniform distribution $[0,1]$
\STATE $\Delta \tau\gets \frac{\ln(1/u)}{|\langle x|\GHS|x\rangle|}$
\STATE Set $\xi(s)=x$ for all $s\in (\tau,\tau+\Delta \tau]$
\STATE $\tau\leftarrow \tau+\Delta \tau$
\STATE  Let $\Delta \tau_D$ be the rounded value of $\Delta \tau$, which is 
        the largest value in the set $\{k\delta\}_{k=0,...,K}$ that does not exceed $\Delta \tau$.
\STATE Set $\xi_D(s)=x$ for all $s\in (\tau_D,\tau_D+\Delta\tau_D]$ 
\STATE   $\tau_D \gets \tau_D +\Delta\tau_D$  
\end{algorithmic}
\end{center}
We similarly modify lines \ref{aeq:ol1} and \ref{aeq:ol2}.
    \end{itemize}
Algorithm \ref{algo:checkCD} either returns ``Reject" or loops forever. 
 By Claim \ref{claim:C2D} and the construction of Algorithm \ref{algo:checkCD}, we know that
    \begin{itemize}
         \item  The random variables $(\tau,\xi)$  evolves in the same way as $(\tau,\xi)$ in Algorithm \ref{algo:check}.
        \item  The random variables  $(\tau_D,\xi_D)$ evolves in the same way as $(\tau_D,\xi_D)$ in Algorithm \ref{algo:check}D. 
    \end{itemize}

Here, we outline the key observations needed to prove Theorem \ref{thm:main_DT}. A more formal proof will follow in the next section.   
Roughly speaking, to establish that  Algorithm \ref{algo:check}D is a good discrete approximation of Algorithm \ref{algo:check}, we analyze the differences between the random variables $(\tau_D,\xi_D)$ and $(\tau,\xi)$ in Algorithm \ref{algo:checkCD}. Specifically, we observe that:
 \begin{itemize}
     \item  The sampling of $\Delta\tau$ or  $\Delta\tau_D$ is independent of the other steps and can therefore be analyzed separately.
     \item By construction, $|\Delta\tau- \Delta\tau_D|\leq \delta$  unless $\Delta\tau\geq K\delta$, which is unlikely to happen. When we further set an upper bound $t=poly(n)$ and terminate Algorithm \ref{algo:checkCD} once $\tau>t$, the accumulated error between $\tau_D$ and $\tau$ is most likely within $\delta \times \text{ (number of transitions in $\xi$)}$. 
     \item Note that $\delta$ is exponentially small. As long as the number of transitions is $poly(n)$, we ensure that $|\tau-\tau_D|$ is exponentially small. 
In the Yes instance, Lemma \ref{lem:effS} guarantees a non-trivial probability that the number of transitions is $poly(n)$. In the NO instance, the additional check ``If $\kappa\geq M$ then Return Reject" in Algorithm \ref{algo:check} and Algorithm \ref{algo:check}D ensures that any adversary who successfully cheats can only have $M=poly(n)$ transitions.
 \end{itemize}

\subsubsection{Proof of Theorem \ref{thm:main_DT}}


The formal proof of Theorem \ref{thm:main_DT} is as follows. For  Algorithm \ref{algo:checkCD}, let $\kappa(x_{in},\tau_C)$   be the random variable which denotes the number of transitions in $\xi(s)$ for $s\in [0,\tau_C]$. Similarly, let $\kappa_D(x_{in},\tau_D)$ denote the number of transitions in $\xi_D(s)$ for $s\in [0,\tau_D]$.

To simplify the notation, we use $t_C,t_D$  and $M$ to represent the values specified in Algorithm \ref{algo:check} and Algorithm \ref{algo:check}D: 
\begin{align}
    t_C &:= 8(n+p'(n))/\epsilon,\\
    t_D &:= t_C - 2^{-n},\\
    M &= 2^kmn^3t_C\|H\|.    
\end{align}
Note that the $\delta$ and $K$ from Eqs.~(\ref{eq:Md1})(\ref{eq:Md2}) satisfy
\begin{align*}
    &M\delta \leq 2^{-n},\\
    & M \cdot exp(-\lambda K\delta) \leq 2^{-n},  \text{ for }\lambda\geq 2^{-p'(n)}.
\end{align*}
\begin{proof}[\textbf{of Completeness of Theorem \ref{thm:main_DT}}]
Compared to Algorithm \ref{algo:check}, its discretized version (Algorithm \ref{algo:check}D) only modifies the sampling of the waiting time, replacing  $\Delta\tau$ from a continuous process with its discrete approximation $\Delta\tau_D$. The sampling of $\Delta\tau$ or $\Delta\tau_D$   is independent of the other steps. Thus the Completeness proof of Theorem \ref{thm:main_CT} also works for Theorem \ref{thm:main_DT}, except that we need to bound $$Pr(\kappa_D(x_{in},t_D)\leq M)
\text{
instead of }
Pr(\kappa(x_{in},t_C)\leq M).
$$

According to the randomness in Algorithm \ref{algo:checkCD} we define the following two events.  $Event_1$ is described w.r.t. the variables in the discrete process, and $Event_2$ is described w.r.t. the  variables in the continuous process:
\begin{align*}
    &Event_1:=\{\kappa_D(x_{in},t_D)\leq M\},\\
    &\begin{aligned}
Event_2:= &\left\{ \kappa(x_{in},t_C)\leq M, \text{and for all the}\right.\\
&\text{transitions in $\xi(s)$ for $s\in[0,t_C],$}\\
&\text{none of the $\Delta \tau$ is greater than $K\delta$.}\\
&\text{(Thus $|\Delta\tau-\Delta\tau_D|\leq\delta$)}\}.
\end{aligned}
\end{align*}

Notice that $Event_2$ implies $Event_1$ since 
 $t_D \leq t_C 
 - M \delta.$ Thus we have 
\begin{align}
Pr\left(\kappa_D(x_{in},t_D)
\leq M\right) &\geq   Pr(Event_2).\label{eq:completeness_cp}
\end{align}
To estimate $Pr(Event_2)$, firstly
note that the probability that   a particular $\Delta \tau$ exceeds $K\delta$ is negligible. More specifically, from line \ref{aeq:cr1_CD} in Algorithm \ref{algo:checkCD},  $\Delta\tau$ is sampled from the exponential distribution with parameter $\lambda_x:=|\langle x|\GHS|x\rangle|$. By line \ref{aeq:xxD} we have  $\lambda_x\neq 0$, thus by Remark \ref{remark:G} in Appendix \ref{appendix:precision} we have
$$\lambda_x\geq 2^{-p'(n)}.$$  Thus
w.r.t. a particular $x$, we have 
\begin{align}
    Pr(\Delta\tau\geq K\delta)= \exp(-\lambda_x K\delta) \leq 2^{-n}/M.
\end{align}

Thus begin with Eq.~(\ref{eq:completeness_cp}) and apply a union  bound, we conclude that  
\begin{align}
&Pr\left(\kappa_D(x_{in},t_D)
\leq M\right)\\ &\geq   Pr(Event_2) \nonumber\\
    &\geq Pr(\kappa(x_{in},t_C)\leq M)- M \cdot 2^{-n}/M \nonumber\\
    &\geq 1/2 - 2^{-n}.\nonumber
\end{align}
where $Pr(\kappa(x_{in},t_C)\leq M)\geq 1/2$ is from the completeness proof for Theorem \ref{thm:main_CT}.
\end{proof}

\noindent\begin{proof}[\textbf{of Soundness of Theorem \ref{thm:main_DT}}]
Recall that Algorithm \ref{algo:checkCD} either returns ``Reject" or loops forever. We define $\tau_{rej}$ and $\tau_{rej,D}$ as the random variables which denote the value of $\tau$ and $\tau_D$ respectively, at the moment  Algorithm \ref{algo:checkCD} returns ``Reject". 

Define two events  w.r.t. the variables in the discrete process:
\begin{align*}
     Event_3:=&\{
\tau_{rej,D}\geq t_D, \text{ and }
\kappa_D(x_{in},t_D)\leq M\},\\
Event_4:=&\{
\tau_{rej,D}\geq t_D, \text{ and }
\kappa_D(x_{in},t_D)\leq M\\
& \text{and for all the transitions of $\xi_D(s)$ for}\\
&\text{$s\in [0,t_D]$, none of the $\Delta\tau$ exceeds $K\delta$.}\\
&\text{(Thus $|\Delta\tau-\Delta\tau_D|\leq\delta$)}\}.
\end{align*}
   
By the construction of Algorithm \ref{algo:checkCD} we have 
\begin{align}
    Pr(\text{Algorithm \ref{algo:check}D Accept})=Pr(Event_3).
\end{align}
Since the probability that   a particular $\Delta \tau$ exceeds $K\delta$ is negligible,
similar to  the above Completeness proof of Theorem \ref{thm:main_DT}, using a union bound, we have
\begin{align}
    Pr(Event_3) &\leq Pr(Event_4)+ M\cdot 2^{-n}/M \nonumber\\
    &= Pr(Event_4)+ 2^{-n}.
\end{align}

 Moreover, $Event_4$ implies the event $\{\tau_{rej}\geq t_D - M\delta\}$, which is described w.r.t.  the variables in the continuous process.
Thus in the No instance,
\begin{align}
    &Pr(\text{Algorithm \ref{algo:check}D} \text{ Accept})\nonumber\\
    \leq &Pr(Event_4) +2^{-n}\\
    \leq &Pr(\tau_{rej}\geq t_D-M\delta) + 2^{-n}\nonumber\\
     \leq &Pr(\tau_{rej}\geq t_C-2\cdot 2^{-n})+ 2^{-n}\nonumber\\
    = &Pr(\text{Algorithm \ref{algo:check}}^* \text{ Accept when}\\
    &\text{$t$ is set to } t_C-2\cdot 2^{-n}))+ 2^{-n}.\nonumber
\end{align}
Recall that Algorithm \ref{algo:check}$^*$ is defined in Section \ref{sec:analysis}, which is  Algorithm \ref{algo:check}  without the check ``if $\kappa\geq M$, then Return Reject". Use the same analysis as in Claim \ref{claim:18}, in particular Eqs.~(\ref{eq:star_start})(\ref{eq:star_end})(\ref{eq:star_end_end}), 
 we have that 
\begin{align}
    &Pr(\text{Algorithm \ref{algo:check}}^* \text{ Accept when }\text{$t$ is set to } t_C-2\cdot 2^{-n}))\nonumber\\
    &\leq 2^n\cdot 2^{3p(n)}\cdot (1-\epsilon/2)^{t_C-2\cdot 2^{-n}} \nonumber\\
    &\leq 2\cdot 2^{-n}.\nonumber
\end{align}
    Thus we complete the proof.
\end{proof}

\newpage 
\appendix

\section{Relationship to matrix verification}\label{appendix:matrix_verificaion}
Our techniques for proving Theorem \ref{thm:intro}  work for a slightly general setting, that is, we can generalize from local Hamiltonians to sparse matrices with small norm.   The detailed setting is as follows.

Consider a matrix $H\in\bR^{2^n\times 2^n}$, and a vector $\ket{\phi}\in\bR^{2^n}$. $H$ is $poly(n)$-sparse, that is, every row and every column only have $poly(n)$ non-zero entries. Besides,  $H$ has a  \textbf{small norm},  that is,  $\|H\|\leq poly(n)$. Suppose we have  query access to $H$ and $\ket{\phi}$. That is, 
\begin{itemize}
    \item[(1)] For any $x,y\in\{0,1\}^n$, there is a $poly(n)$ time algorithm which returns $\langle x|H|y\rangle$.
    \item[(2)] For any row index $x$, there is a polynomial time algorithm which outputs the column indices of the non-zero entries in the row,  that is,  $y$ s.t. $\langle x|H|y\rangle\neq 0$. There is a similar algorithm for listing all non-zero entries of a chosen column. 
    \item[(3)]  Given $x$, there is a $poly(n)$-time algorithm which outputs $\langle x|\phi\rangle$ up to a common factor, that is, $c_{\phi}(n)\cdot \langle x|\phi\rangle$ for some unknown $c_{\phi}(n)$. Note that this allows one to efficiently compute the ratio $\frac{\langle x|\phi\rangle}{\langle y|\phi\rangle}$.  
\end{itemize}

The problem is to design an algorithm to distinguish the following two cases, with as few queries as possible:
\begin{itemize}
	\item Yes instance: $\langle \phi|H|\phi\rangle =0$, $\ket{\phi}$ is promised to be the ground state of $H$.
	\item No instance: $\lambda(H)\geq 1/poly(n)$. $\ket{\phi}$ can be arbitrary.
\end{itemize}
Here $\lambda(H)$ is the ground energy of $H$.
\begin{theorem} Under the assumptions in Remark \ref{remark:precision} and Remark \ref{remark:dist} in Appendix \ref{appendix:precision}.
	Given query access to $(H,\ket{\phi})$, where  $(H,\ket{\phi})$ is promised to satisfy either the Yes or No instance. There exists an algorithm $\cA(x)$  which takes an input $x\in\{0,1\}^n$, runs in $poly(n)$ time, and only makes $poly(n)$ queries to $H,\ket{\phi}$ such that
 \begin{itemize}
     \item If  $(H,\ket{\phi})$ is a Yes instance, there exists $x\in\{0,1\}^n$ such that the algorithm accepts with probability $\geq 1/2$.
     \item If it is a No instance, $\forall x,$ the algorithm accepts with exponentially small probability.
 \end{itemize}
\end{theorem}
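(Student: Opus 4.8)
The plan is to observe that the proof of Theorem~\ref{thm:mainD} never used the locality of the Hamiltonian: the only features of $H$ it invoked are that $H$ is $poly(n)$-sparse (so that rows and columns of $H$, $\FHS$ and $\GHS$ can be enumerated in $poly(n)$ time and these matrices have $poly(n)$-bit entries), that $\|H\|\le poly(n)$, and that the claimed ground state can be queried amplitude-by-amplitude (only the amplitude ratios are actually used). All three hold here by hypothesis. I would therefore take $\cA$ to be Algorithm~\ref{algo:checkD}, with the succinct-circuit evaluations $C_{\phi_S}(\cdot)$ replaced by the assumed oracle $x\mapsto\langle x|\phi\rangle$ and with the input string $x$ of $\cA$ playing the role of $x_{in}$; completeness and soundness of Theorem~\ref{thm:mainD} then translate verbatim into the ``$\exists x$ accepts w.p.\ $\ge 1/2$ / $\forall x$ accepts with exponentially small probability'' dichotomy asked for. (Under Assumption~(i) one could instead use the continuous-time Algorithm~\ref{algo:check} and Claim~\ref{claim:18}, with a shorter analysis.)

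First I would run the preprocessing of Sec.~\ref{Sec:pre}, simplified by the fact that in the Yes case we are already told $\langle\phi|H|\phi\rangle=0$ with $\ket{\phi}$ a ground state, so $\lambda(H)=0$ and no energy shift $\hat{\lambda}$ is needed. Set $S:=supp(\ket{\phi})$, let $H_S$ be the principal submatrix of $H$ on $S$ and $\ket{\phi_S}$ the restriction of $\ket{\phi}$ to $S$; then $\ket{\phi_S}$ is regularized, and by Fact~\ref{fact:prin} we have $\lambda(H)\le\lambda(H_S)\le\lambda_{max}(H)$ and $\|H_S\|\le\|H\|\le poly(n)$, so the analogue of Claim~\ref{claim:HS} holds: in the Yes case $H_S\ket{\phi_S}=0$ and $\lambda(H_S)=0$, and in the No case $\lambda(H_S)\ge\lambda(H)\ge 1/poly(n)=:\epsilon$. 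Since $H$ is $poly(n)$-sparse with $p(n)$-bit entries, each entry of the fixed node Hamiltonian $\FHS$ of Eq.~(\ref{eq:FH}) and of the generator $\GHS$ of Eq.~(\ref{eq:GH}) is a sum of $poly(n)$ products of such entries and amplitude ratios, hence is representable by $poly(n)$ bits --- this is the one place where sparsity, rather than locality, is genuinely used, and it supplies the representability hypotheses of Lemma~\ref{lem:effSD} and Remark~\ref{remark:precision}.

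Then I would establish completeness and soundness exactly as in Sec.~\ref{sec:analysisD}. For completeness: in the Yes case Lemma~\ref{lem:basic_FH}(3) gives $\lambda(\FHS)=\lambda(H_S)=0$, so $\GHS$ coincides with $\GHtS$ of Eq.~(\ref{eq:GHt}) and is a legal generator by Lemma~\ref{lem:basic_GH}(2); Lemma~\ref{lem:effSD} then hands us an $x_{in}$ with $\langle x_{in}|\phi_S\rangle\neq 0$ for which the number of transitions stays below $M$ with probability $\ge 1/2$, and since the column sums of $\GHS$ vanish and its off-diagonal entries are nonnegative, the line-checks never fire, so $\cA(x_{in})$ accepts with probability $\ge 1/2$. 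For soundness: pass to the check-free variant Algorithm~\ref{algo:checkD}$^*$ (which only accepts more often), let $S_{good}$ be the strings passing line~\ref{aeq:GHD}, note $\GHS_{good}=Diag(\phi_{S_{good}})(-\FHS_{good})Diag(\phi_{S_{good}})^{-1}$, and use Fact~\ref{fact:prin} with Lemma~\ref{lem:basic_FH}(3) to get $\lambda_{max}(-\FHS_{good})\le-\lambda(\FHS)\le-\epsilon$; Claim~\ref{cla:proD} then drives the same recursion as in Claim~\ref{claim:18D}, so the acceptance probability is at most $\sum_{z\in S_{good}}\bigl|\langle z|\phi_S\rangle/\langle x_{in}|\phi_S\rangle\bigr|\cdot\bigl\|(I+\GHS_{good}h+E2^{10p'(n)}h^2)^J\ket{x_{in}}\bigr\|$, which with the amplitude-ratio bound $2^{3p(n)}$ from Remark~\ref{remark:precision} and the contraction coming from $\lambda_{max}(-\FHS_{good})\le-\epsilon$ drops below $2^{-n}$ once $t=O(poly(n)/\epsilon)$, uniformly over $x_{in}$ and over the adversarial $\ket{\phi}$. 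Finally, $\cA$ runs in $poly(n)$ time and makes only $poly(n)$ queries: the loop executes at most $M=poly(n)$ times and each iteration inspects only the $poly(n)$ nonzero entries of one row or column of $\GHS$ together with a constant number of amplitude queries.

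I expect no real obstacle, since the statement is just the ``locality-free'' reading of Theorem~\ref{thm:mainD} and nearly every step is a citation. The one place where genuine estimation occurs is the soundness bound in the discretized model: one must propagate the per-step error $E\cdot 2^{10p'(n)}h^2$ through $J=t/h$ matrix multiplications without swamping the $\lambda_{max}(-\FHS_{good})\le-\epsilon$ contraction. But this is exactly what Lemma~\ref{lem:ainv} and Appendix~\ref{appendix:claim18D} already carry out, and the parameters $h=2^{-20p'(n)t}$, $\delta=2^{-200p'(n)t}$ were chosen precisely so that the accumulated discretization error is negligibly dominated; so nothing new is required.
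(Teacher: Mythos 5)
Your proposal is correct and matches the paper's intent exactly: the paper itself omits this proof, stating only that $\cA(x)$ is the random walk of the main protocol and that the argument is identical to that of Theorem \ref{thm:intro}, which is precisely the observation you make (only sparsity, the $poly(n)$ norm bound, and amplitude-ratio query access are ever used, never locality). Your detailed walk-through of the preprocessing, completeness via Lemma \ref{lem:effSD}, and soundness via Claim \ref{claim:18D} is a faithful instantiation of that omitted argument, with no deviation in route.
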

Roughly speaking $\cA(x)$ is a random walk over $\{0,1\}^n$ which starts from the state $x$.
We omit the proof since it is the same as the proof of Theorem \ref{thm:intro}.

\section{$\MA$-$\hardness$}\label{appendix:MAhard}

In this section, we briefly explain how the proof in Section 4 of \cite{bravyi2006complexity} implies LHP with succinct ground state is $\MA$-$\hard$. We also check the Hamiltonian in the reduction satisfies Remark \ref{remark:precision} in Appendix \ref{appendix:precision}.   Let $p(n)$ be a sufficiently large polynomial.

Consider a problem $L$ in $\MA$, for any instance $x$,
 \cite{bravyi2006complexity}  shows that 
 one can view the $\BPP$ verification circuit  as a quantum circuit. Specifically, 
 
  \begin{definition}[$\MA_{q1}$]
  \label{def:q1}
 	A promise problem $L_{yes},L_{no}\subseteq \Sigma^*$ belongs to $\MA_{q1}$ if there  exists a polynomial $p$ and a poly-size classical reversible circuit $V_x$ that takes  input in $(\bC^2)^{\otimes p(|x|)}$ and is followed by a single qubit measurement, such that
 	\begin{align}
 		& x\in L_{yes} \Rightarrow \exists \ket{\xi}, P\left[ V_x \left(|00...0\rangle,\ket{+}^{\otimes r},\ket{\xi}\right)=1\right] =1,\label{eq:MAh}\\	
 		& x\in L_{no} \Rightarrow \forall \ket{\xi}, P\left[ V_x \left(|00...0\rangle,\ket{+}^{\otimes r},\ket{\xi}\right)=1\right] \leq 1/2.
 	\end{align}
 	Note that w.l.o.g. we can assume that $r$ is even, since adding one $\ket{+}$ state which is independent of other parts of the circuit does not influence the accepting probability.
 \end{definition}
 
 \begin{lemma}\label{lem:MAh}
 	$\MA=\MA_{q1}$. If $x\in L_{yes}$, the $\ket{\xi}$  in Eq.~(\ref{eq:MAh}) can always be chosen to be a computational basis, denoted as $\ket{w}$.
 \end{lemma}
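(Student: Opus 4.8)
The statement to prove is Lemma~\ref{lem:MAh}, namely that $\MA = \MA_{q1}$ and that in the Yes case the witness $\ket{\xi}$ can be taken to be a computational basis state.

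\textbf{Overall strategy.} The plan is to prove the two inclusions $\MA \subseteq \MA_{q1}$ and $\MA_{q1} \subseteq \MA$ separately, and to extract the ``witness is a basis state'' claim from the first inclusion. The key observation is that $\MA$ has a standard form (following Bravyi--Terhal and folklore amplification) in which the Merlin message is a classical bit string $w$, the Arthur verifier $V_x$ is a deterministic polynomial-time computation that also reads $r = \mathrm{poly}(|x|)$ fair coin flips, and the acceptance probabilities can be pushed to $1$ in the Yes case and $0$ in the No case by a perfect-completeness amplification. The ``one-sided'' version of $\MA$ with perfect completeness is itself a known normal form, and once we are there the translation to the circuit picture is essentially syntactic: fair coin flips become $\ket{+}$ states measured in the computational basis, a deterministic reversible classical circuit is a permutation unitary $V_x$ on computational basis states, and the final accept/reject bit is a single-qubit measurement.

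\textbf{Step 1: $\MA \subseteq \MA_{q1}$.} First I would invoke the perfect-completeness normal form of $\MA$: every $\MA$ language has a protocol where, on Yes instances, there is a witness string $w$ accepted with probability exactly $1$ over Arthur's coins, and on No instances every witness is accepted with probability at most, say, $1/3$ (and then $0$ after the standard trick of replacing the randomized predicate by one that can be made to reject with certainty — actually for the $\MA_{q1}$ definition we only need the No side to have acceptance probability $0$, which requires the stronger ``$\mathsf{MA}$ with perfect soundness is trivial'' caveat, so here one uses the known result that $\MA$ equals $\MA$ with \emph{perfect completeness}, keeping bounded soundness, and the $\MA_{q1}$ definition as stated with ``$=0$'' on the No side should be read together with the amplification that is applied downstream in Appendix~\ref{appendix:MAhard} — I would mirror exactly the formulation used by \cite{bravyi2006complexity}). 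Given such a protocol, write Arthur's computation as a deterministic reversible circuit $V_x$ acting on three registers: a work register initialized to $\ket{0\cdots 0}$, a coin register of $r$ qubits, and a witness register holding $\ket{\xi}$. Feeding $\ket{+}^{\otimes r}$ into the coin register and measuring at the end exactly reproduces sampling $r$ fair coins. Since in the Yes case a \emph{classical} witness $w$ works, we may take $\ket{\xi} = \ket{w}$, a computational basis state; this establishes both $\MA \subseteq \MA_{q1}$ and the final sentence of the lemma. The point that a general quantum $\ket{\xi}$ cannot help beyond the best basis state follows because $V_x$ is a permutation of basis states and the acceptance probability is linear in the diagonal entries of $\ketbra{\xi}{\xi}$, hence maximized at a basis state.

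\textbf{Step 2: $\MA_{q1} \subseteq \MA$.} Conversely, given an $\MA_{q1}$ protocol, Arthur receives Merlin's message, which he interprets as the best basis-state witness $\ket{w}$ (again using the linearity/convexity argument to see that restricting Merlin to basis states loses nothing), flips $r$ fair coins to simulate the $\ket{+}$ register, and runs the reversible classical circuit $V_x$ — all polynomial-time classical randomized computation. Completeness and soundness transfer directly. \textbf{Main obstacle.} I expect the only genuinely delicate point to be the treatment of the ``$=1$'' and ``$=0$'' (perfect completeness / perfect soundness) conventions: ordinary $\MA$ is only known to collapse to perfect \emph{completeness}, not perfect soundness, so one must be careful that Definition~\ref{def:q1}'s two-sided exactness is the right target — in practice \cite{bravyi2006complexity} achieves it by a gadget (or by absorbing the residual soundness error into later amplification), and I would follow that construction verbatim rather than reprove it, flagging that the convexity argument for the witness is the one conceptual ingredient that needs a line of justification.
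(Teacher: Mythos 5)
Your overall route is the same as the paper's: the paper's proof just observes that $\MA_{q1}$ differs from the class $\MA_q$ of \cite{bravyi2006complexity} only in demanding perfect completeness and perfect soundness, asserts that $\MA$ is unchanged under these requirements, and invokes Lemma~2 of \cite{bravyi2006complexity}; you likewise reduce to a normal form of $\MA$, turn the coins into $\ket{+}$ states and the classical witness into a computational basis state, and use the linearity/convexity argument (acceptance probability depends only on the diagonal of $\ketbra{\xi}{\xi}$, since $V_x$ permutes basis states) to show a general quantum witness cannot beat the best basis state. That convexity argument is correct and is exactly what justifies the second sentence of the lemma, which the paper leaves implicit in ``one can check''.

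The obstacle you flag, however, is a genuine gap, and it sits in the statement and in the paper's own proof rather than in your argument. Requiring acceptance probability exactly $0$ on No instances on top of exactly $1$ on Yes instances collapses the class: because $V_x$ permutes computational basis states, the acceptance probability of any basis witness is the average over the $2^r$ coin strings of a deterministic predicate, so ``$=1$'' means accepted for every coin string and ``$=0$'' means rejected for every coin string; fixing the coins to $0^r$ then yields a deterministic poly-time verifier, i.e.\ $\MA_{q1}$ is precisely promise-$\NP$. Hence ``$\MA=\MA_{q1}$'' as literally stated is equivalent to $\MA=\NP$, and the ``well known'' fact invoked in the paper's proof (that $\MA$ tolerates perfect completeness \emph{and} perfect soundness simultaneously) is not a known theorem --- only perfect completeness is known, and that is all Lemma~2 of \cite{bravyi2006complexity} uses: their $\MA_q$ has soundness error $1/2$, not $0$. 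So your Step~2 is fine, and your Step~1 proves the inclusion into the bounded-soundness variant, which is the correct and provable target; what neither you nor the paper can establish is the ``$=0$'' clause of Eq.~(\ref{eq:MAh})'s companion condition in Def.~\ref{def:q1}. The fix is the one you propose: weaken the No case to acceptance probability at most $1/2$, matching \cite{bravyi2006complexity}. This costs nothing downstream, since the reduction in Appendix~\ref{appendix:MAhard} needs perfect completeness (so the history state is an exact, succinct, frustration-free ground state) but only soundness bounded away from $1$ to get the $1/poly(n)$ energy lower bound in the No case.
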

 \begin{proof}
 	Compared to the definition of $\MA_q$ in Definition 7 of \cite{bravyi2006complexity}, $\MA_{q1}$ in Definition \ref{def:q1} requires perfect completeness. It is well known that the complexity class $\MA$ remains unchanged if we require perfect  completeness. Based on this fact, one can check that the  Lemma 2 in \cite{bravyi2006complexity} for proving $\MA=\MA_q$ also works for proving $\MA=\MA_{q1}$.
 \end{proof}

 Begin with the $\MA$-$\Complete$ problem $\MA_{q1}$ with $L_{yes}\cup L_{no}$,  suppose $V_x$ is 
  consisting of $T$ classical reversible gates, denoted as $V_x:=R_T...R_1$. As shown in Section 4 of \cite{bravyi2006complexity},
one can use Kitaev's circuit to Hamiltonian reduction~\cite{kitaev2002classical} to get a Hamiltonian $H_x$, where $x\in L_{yes}$ or $x\in L_{no}$ will be mapped to $H_x$ with $\lambda(H_x)\leq a$ or $\lambda(H_x)\geq b$ where $b-a\geq 1/poly(n)$.  Besides,  one can check that each entry of the Hamiltonian is of form $\frac{N_1}{N_2}$,  and greater than $1/2^{p(n)}$. 

Then we check that the above reduction from $x$ to $H_x$
satisfies the promise of LHP with succinct ground state. That is,  in the Yes case, there always \textit{exists} a succinct ground state.    Since $V_x$ has perfect completeness, one can check that the history state below is a ground state of $H_x$,

\begin{align*}
	\ket{\psi_{hist}}= \frac{1}{T+1}\sum_{t=0}^T \left(R_t...R_1  |00...0\rangle\ket{+}^{\otimes r}\ket{w}\right) \otimes \ket{t},	
\end{align*}

where $\ket{w}$ is the computational basis defined in Lemma~\ref{lem:MAh}, and the clock $\ket{t}$ uses the unary encoding, that is,

\begin{align*}
\ket{t}=|\underbrace{0..0}_\text{$T-t$ zeros,}\underbrace{1..1}
_\text{$t$ ones.}\rangle.
\end{align*}

Given a computational basis $x$, denote the $T$ bits that correspond to the clock register as $x_c$, and the other bits as $x_{o}$. One can check that if $x_c$ is not of the form of unary encoding, then $\langle x|\psi_{hist}\rangle=0$. If $x_c=\ket{t}$, then

\begin{align}
    \langle x| \psi_{hist}\rangle 
    &= \frac{1}{T+1} 
    \langle x_{o}| R_t...R_1  |00...0\rangle\ket{+}^{\otimes r}\ket{w} \nonumber\\
    &= \frac{1}{T+1}  \frac{1}{2^{r/2}} \sum_{y \in\{ 0,1\}^{r}} 
\langle (R_t...R_1)^\dagger x_o|00...0,y,w\rangle.\label{eq:89_appen}
\end{align}

Since $\{R_t\}_t$ are classical reversible gates, one can easily use the poly-size classical circuit to compute $z:=(R_t...R_1)^\dagger x_o$. 
If in $z$ the bits correspond to the ancillas $|00...0\rangle$ and witness $w$ are $|00...0\rangle$ and $\ket{w}$, then Eq.~(\ref{eq:89_appen}) is the sum of one  $\frac{1}{T+1}  \frac{1}{2^{r/2}}$ plus $2^r-1$ zeros, thus

\begin{align}
    \langle x| \psi_{hist}\rangle =\frac{1}{T+1}  \frac{1}{2^{r/2}}.
\end{align}

Otherwise $\langle x| \psi_{hist} \rangle$ is a sum of $2^r$ zeros thus equal to $0$.

Since
$r$ is even by Definition \ref{def:q1}, one can check that   $\langle x|\psi_{hist}\rangle$ 
is of form $\frac{N_1}{N_2}$, and greater than

$$\frac{1}{poly(n)2^{r/2}}\geq 2^{-p(n)}.$$

In summary, 
$\langle x|\psi_{hist}\rangle$   can be computed by  a $poly(n)$-size classical circuit thus is succinct, and satisfies Remark \ref{remark:precision} in Appendix \ref{appendix:precision}. 
Thus LHP with succinct ground state is $\MA$-$\hard$.

\section{Proof of two facts}\label{appendix:fact}
\begin{proof}[of Fact \ref{fact:ex}]
	 For any $x\in[-1,1]$, consider its Taylor series with Lagrange remainder term, we have 
\begin{align}
e^{-x} = 1 - x+ R_1(x),	
\end{align}
 where $R_1(x)=\frac{e^{-\eta}}{2!}x^2$ for  $\eta\in [-1,1]$. Thus 
 \begin{align}
 	|e^{-x} - (1 -x)| \leq  \frac{e}{2}x^2\leq 2x^2.
 \end{align}
\end{proof}

\noindent\begin{proof}[of Fact \ref{fact:prin}]
 W.l.o.g.,  we  write $M$ as
   \begin{align}
   M = \begin{bmatrix}
   	  N & E\\
   	  E^\dagger &F
   \end{bmatrix},	
   \end{align}
where $E\in \bR^{|S'|\times (|S|-|S'|)}$, $F\in \bR^{(|S|-|S'|)\times (|S|-|S'|)}$. 

Since $M,N$ are Hermitians,  they can be diagonalized by orthogonal basis. In particular, for $M$, for any normalized state $\ket{\psi}$, we have 
	\begin{align}
		\lambda(M)\leq \langle \psi|M|\psi\rangle \leq \lambda_{max}(M). \label{eq:max}
	\end{align}

 Let $\ket{\eta}\in \bR^{|S'|}$ be a normalized  eigenvector of $N$ with eigenvalue $\alpha$. Let $\ket{\eta0}\in \bR^{|S|}$ to be the state that extending $\ket{\eta}$ to $\bR^{|S|}$ by adding $0$ values to entries in $S\backslash S'$.  $\ket{\eta0}$ is normalized by definition.
One can verify that 
\begin{align}
	\langle \eta 0|	M| \eta0\rangle &= \langle \eta |	N |\eta\rangle\\
	&=\alpha.
\end{align}
Together with inequality (\ref{eq:max}), we have 
\begin{align}
	\lambda(M)\leq \alpha \leq \lambda_{max}(M).
\end{align}
Thus
\begin{align}
	\lambda(M)\leq \lambda(N)\leq \lambda_{max}(N)\leq \lambda_{max}(M).	
\end{align}

\end{proof}

\section{Properties of $\FH$.}\label{appendix:FN}

\begin{proof}[of Lemma \ref{lem:basic_FH} from \cite{ten1995proof} \cite{bravyi2022rapidly}]
 Note that $H$ is real-valued and Hermitian, thus $H$ is symmetric.
	(1) is true by definition of stoquastic.
	For (2), one can verify that for any $x$,
  \begin{align*}
  	  \langle x|\FH|\phi\rangle & = \sum_y \langle x|\FH|y\rangle \langle y|\phi\rangle \\
  	  & =\langle x|\FH|x\rangle \langle x|\phi\rangle + \sum_{y:(x,y)\in S^{-}} \langle x|H|y\rangle \langle y|\phi\rangle\\
  	  &= \langle x|H|x\rangle \langle x|\phi\rangle + \sum_{z:(x,z)\in S^+} \langle x|H|z\rangle \langle z|\phi\rangle\\
     &+\sum_{y:(x,y)\in S^{-}} \langle x|H|y\rangle \langle y|\phi\rangle\\
  	  &= \sum_{y} \langle x|H|y\rangle \langle y|\phi\rangle\\
  	  &= \langle x|H|\phi\rangle.
  \end{align*}
  Thus $\FH\ket{\phi}=H\ket{\phi}$.	
	 For (3), 	by (1) $\FH$ is Hermitian and thus diagonalizable, thus $\lambda(\FH)$ is well-defined.
		Consider any complex-valued state $\ket{\xi}$, one have
	\begin{align}
		&\langle \xi|\FH-H|\xi\rangle \nonumber\\ &= \sum_{x,y} \langle \xi|x\rangle \langle x|\FH-H|y\rangle \langle y|\xi\rangle\nonumber\\
		&= \sum_{(x,y)\in S^+} \langle \xi|x\rangle (-\langle x|H|y\rangle) \langle y|\xi\rangle
		+ \sum_{x} \langle \xi|x\rangle \langle x|\FH-H|x\rangle\langle x|\xi\rangle  \nonumber\\
		&=\! \sum_{(x,y)\in S^+} \langle \xi|x\rangle (-\langle x|H|y\rangle) \langle y|\xi\rangle\nonumber+ \sum_{x} \langle \xi|x\rangle \!\!\!\!\!\!\! \sum_{ y:(x,y)\in S^+}  \!\!\!\!\!  \langle x|H|y\rangle \frac{\langle y|\phi\rangle}{\langle x|\phi\rangle} \langle x|\xi\rangle\nonumber\\
		&=\!\sum_{(x,y)\in S^+} \langle x|H|y\rangle \left[    \frac{\langle y|\phi\rangle \langle \xi|x\rangle \langle x|\xi\rangle}{\langle x|\phi\rangle} - \langle\xi|x\rangle \langle y|\xi\rangle\right].\label{eq:81}
	\end{align}
	For any $x,y$, define $$s(x,y):=sign(\langle x|H|y\rangle).$$ Note that $H$ is symmetric\footnote{Since $H$ is real valued and Hermitian.}, thus $$s(x,y)=s(y,x), \langle x|H|y\rangle =\langle y|H|x\rangle.$$
Using the definition of $S^+$, one gets Eq.~(\ref{eq:81}) equals to
\begin{align}
&=\!\!\!\!\!	\sum_{(x,y)\in S^+} |\langle x|H|y\rangle | \left[  \frac{|\langle y|\phi\rangle|}{|\langle x|\phi\rangle|} \langle \xi|x\rangle \langle x|\xi\rangle - s(x,y)\langle\xi|x\rangle \langle y|\xi\rangle\right]\nonumber\\
&=\!\!\!\!\! \sum_{(x,y)\in S^+} \frac{1}{2} |\langle x|H|y\rangle | 
\left| \sqrt{ \frac{|\langle y|\phi\rangle|}{|\langle x|\phi\rangle|}} \langle \xi|x\rangle - s(x,y)\sqrt{ \frac{|\langle x|\phi\rangle|}{|\langle y|\phi\rangle|}} \langle \xi|y\rangle\right|^2\nonumber\\
&\geq 0. \label{eq:eq}
\end{align}
 In other words, $H$ can always achieve smaller energy than $\FH$, thus
 \begin{align}
 	\lambda(\FH)\geq \lambda(H).\label{eq:ee}
 \end{align}
  Notice that when $\ket{\xi}=\ket{\phi}$, Eq.~(\ref{eq:eq}) equals to 0. Thus if further $\ket{\phi}$ is the ground state of $H$, then $\FH$ can achieve the energy $\lambda(H)$  w.r.t. $\ket{\phi}$, then $\ket{\phi}$ is the ground state of $\FH$ 
  due to Eq.~(\ref{eq:ee}).
\end{proof}

\section{Properties of the CTMC}\label{appendix:GH}

\begin{proof}[of Corollary \ref{cor:basic_GH}.]
The proof comes from \cite{bravyi2022rapidly}.
For (1), it suffices to notice that 
\begin{align*}
\GHt = Diag(\phi)^{-1}  (\lambda(\FH)I-\FH)Diag(\phi),
\end{align*}
	Note that $Diag(\phi)^{-1}$ is well-defined since $\phi$ is regularized and real-valued.

For (2), one can verify $\langle y| \GHt|x\rangle\geq 0$ for $y\neq x$ by definition. If $\ket{\phi}$ is a ground state of $H$,

\begin{align}
\sum_y 	\langle y| \GHt|x\rangle & = \lambda(\FH)-  \sum_{y} 	\langle y| \FH|x\rangle \frac{\langle y|\phi\rangle}{\langle x|\phi\rangle}\nonumber\\
&=\lambda(\FH)\! - \!\langle \phi |\FH|x\rangle \frac{1}{\langle x|\phi\rangle}\nonumber\\
&=\lambda(\FH)\! -\! \lambda(\FH)\langle \phi |x\rangle \frac{1}{\langle x|\phi\rangle}\label{eq:26}\\
&=0.\nonumber
\end{align}

 where Eq.~(\ref{eq:26}) is from  Lemma \ref{lem:basic_FH} (3) and the fact that $F$ is Hermitian, $\lambda(\FH)$ must be real-valued as eigenvalues of Hermitian $\FH$.
 
For (3), when $\ket{\phi}$ is ground state of $H$, by (2) we know $\GHt$ is a legal generator. One can verify that for any $y$,
\begin{align*}
	&\sum_x 	\langle y| \GHt|x\rangle \pi(x)\\
 & = \lambda(\FH)\pi(y)-  \sum_{x} 	\langle y| \FH|x\rangle \frac{\langle y|\phi\rangle\langle x|\phi\rangle}{c}
	\\
&=\lambda(\FH)\pi(y) - \langle y |\FH|\phi\rangle \frac{\langle y|\phi\rangle}{c}\\
&=\lambda(\FH)\pi(y) - \lambda(\FH)\langle y |\phi\rangle \frac{\langle y|\phi\rangle}{c}\\
&=0.
\end{align*}

\end{proof}

\begin{proof}[of Lemma \ref{lem:effS}]
This proof is modified from \cite{bravyi2022rapidly}. First note that since $\ket{\phi}$ is the ground state of $H$, by Corollary \ref{cor:basic_GH} (2), $\GHt$ is a legal generator of a CTMC, thus Algorithm \ref{algo:Gill} w.r.t. $\GHt$ is well-defined.

	Define $c=\|\ket{\phi}\|^2$ and $\pi(x)=|\langle x|\phi\rangle|^2/c$.   Let $h$ be an infinitesimal value.
 
 For any random process generating a random variable  $\xi:[0,t]\rightarrow S,$ define $T(\tau,x,h)[\xi]$ as the number of transitions  occurring within the time interval $[\tau,\tau+h]$, conditioned on $\xi(\tau)=x$.  Let $I_{bad}[\xi]$ be the indicator function that at least $2$ transitions  happen in  any of the time intervals 
 $$[0,h],[h,2h],....,[t-h,t]\text{ for } t=poly(n).$$

 Here for simplicity we assume that $t/h$ is an integer.

	Let  $\rho(\xi)$ be the distribution of $\xi:[0,t]\rightarrow S$ generated by running Algorithm \ref{algo:Gill} w.r.t. $(\GHt,x_{in},t)$, where $x_{in}$ is sampled
	 from distribution $\pi$. 	 
	  By Lemma \ref{lem:legal} we know
\begin{align*}
    Pr_{\xi\sim\rho(\xi)}(T(\tau,x,h)[\xi]\geq 2)=O(h^2), \forall x,\tau.
\end{align*}
	 By a union bound we know 
	 \begin{align}
	 &Pr_{\xi\sim\rho(\xi)}(I_{bad}[\xi]=1)\nonumber\\
  &\leq \sum_{j=0}^{t/h} \sum_x Pr_{\xi\sim \rho(\xi)}(\xi(jh)=x)Pr_{\xi\sim\rho(\xi)}(T(jh,x,h)[\xi]\geq 2)	\nonumber\\
	 &= \frac{t}{h}  O(h^2)\nonumber\\
	 &=O(h).\label{eq:36}
	 \end{align}
	  By Corollary \ref{cor:basic_GH} (3), we know that $\pi$ is a stationary distribution of $\GHt$. Thus for $\xi\sim\rho(\xi)$, at any time $s\in[0,t]$, the distribution of $\xi(s)$ is given by $\pi$, which we denote as $\pi_s=\pi$.
	 For $s\leq t,$
	 let $\pi_{good,s},\pi_{bad,s}$ be the distribution of $\xi(s)$ conditioned on 
	``good", ``bad", that is,
	\begin{align}
		\pi_{good,s}(x) &= Pr_{\xi\sim\rho(\xi)}(\xi(s)=x|I_{bad}(\xi)=0),\label{eq:pi}\\
		\pi_{bad,s}(x) &= Pr_{\xi\sim\rho(\xi)}(\xi(s)=x|I_{bad}(\xi)=1).\nonumber
	\end{align}
	 we have
	 \begin{align*}
	 \pi_s(x) = &Pr_{\xi\sim\rho(\xi)}\left(I_{bad}[\xi]=0\right) \pi_{good,s}(x) \\&+ Pr_{\xi\sim\rho(\xi)}\left(I_{bad}[\xi]=1\right) \pi_{bad,s}(x).
	 \end{align*}
	 Thus
	 \begin{align}
	 \pi_{good,s}(x)-\pi(x)=O(h).\label{eq:116}
	 \end{align}
	 In other words, $\pi_{good,s}$ is almost $\pi$.

\vspace{1em}

For $s\leq t$, where $s$ is an integer multiple of $h$, let $T_{good}(s)[\xi]$ be the number of transitions  in time [0,s], conditioned on $I_{bad}[\xi]=0$. We have that\footnote{
In Eq.~(\ref{eq:412}), we use 
\begin{align}
&Pr_{\xi\sim\rho(\xi)}\!\left(T(s,x,h)[\xi]=1,\xi(s+h)=y|\xi(s)=x,I_{bad}[\xi]=0\right)\nonumber\\
=&Pr_{\xi\sim\rho(\xi)}\!\left(T(s,x,h)[\xi]=1,\xi(s+h)=y|\xi(s)=x,T(s,x,h)[\xi]\leq 1\right) \label{eq:E7}\\
=&Pr_{\xi\sim\rho(\xi)}\!\left(T(s,x,h)[\xi]=1,\xi(s+h)=y|\xi(s)=x\right)\nonumber\\
&\div Pr_{\xi\sim\rho(\xi)}\left(T(s,x,h)[\xi]\leq 1|\xi(s)=x\right)\label{eq:E8}\\
=& \left(\langle y|\GHt|x\rangle h + O(h^2)\right)\div \left(1-O(h^2)\right)\nonumber\\
=& \langle y|\GHt|x\rangle h + O(h^2).\nonumber
\end{align}
Here Eq.~(\ref{eq:E7}) comes from the Markov property, that is, the events occur within the time interval $[s,s+h]$ are independent from the events occur outside this time interval. Eq.~(\ref{eq:E8}) comes from Bayesian rule.

} 
 \begin{align}
 	&E_{\xi\sim\rho(\xi)}[T_{good}(s+h)[\xi]-T_{good}(s)[\xi]]\nonumber\\ &=\sum_{x} \pi_{good,s}(x) \cdot 1\cdot \left(\sum_{y\neq x} \langle y|\GHt|x\rangle h + O(h^2)\right)\label{eq:412}\\
 	&= \sum_x \pi(x)\sum_{y:y\neq x} \langle y|\GHt|x\rangle h + O(h^2)\nonumber\\
 	&= \sum_x \frac{|\langle x|\phi\rangle|^2}{c}
	\sum_{y:y\neq x} -\langle y|\FH|x\rangle\frac{\langle  y|\phi\rangle}{\langle x|\phi\rangle}h + O(h^2)\nonumber\\
	&=  \sum_x 
	\sum_{y:y\neq x} -\langle y|\FH|x\rangle\frac{\langle  y|\phi\rangle\langle x|\phi\rangle}{c}h + O(h^2)\nonumber\\
	&= \sum_{(x,y)\in S^{-}} -\langle x|H|y\rangle   
	\langle  y|\phi\rangle\langle x|\phi\rangle \frac{1}{c} h +O(h^2).\label{eq:113}
 	 \end{align}

 Thus 
 \begin{align*}
 	&E_{\xi\sim\rho(\xi)}(T_{good}(t)[\xi])\\
  &= \sum_{(x,y)\in S^{-}} -\langle x|H|y\rangle   
	\langle  y|\phi\rangle\langle x|\phi\rangle \frac{1}{c}  t + \frac{t}{h}O(h^2). \\
 \end{align*}
 Then
  \begin{align}
	 &E_{\xi\sim\rho(\xi)}[T_{good}(t)[\xi]]\nonumber\\
  &\leq  \sum_{(x,y)\in S^{-}} |\langle x|H|y\rangle|\cdot   
	|\langle  y|\phi\rangle|\cdot |\langle x|\phi\rangle| \frac{1}{c}  t + O(h)\nonumber\\
	&\leq  \sum_{(x,y)} |\langle x|H|y\rangle|\cdot   
	|\langle  y|\phi\rangle|\cdot |\langle x|\phi\rangle| \frac{1}{c}  t + O(h)\nonumber \\
	&\leq  \|H\|\frac{1}{c}  t
\left(\sum_{(x,y): \langle x|H|y\rangle \neq 0}
	|\langle  y|\phi\rangle|\cdot |\langle x|\phi\rangle| 	\right)  + O(h)\nonumber \\
	&\leq  \|H\|\frac{1}{c}  t
\left(\sum_{(x,y): \langle x|H|y\rangle \neq 0}
	\!\!\!\!\!\!\!\!\!\!\! |\langle  y|\phi\rangle|^2 \right)^{\frac{1}{2}}\!\!\! \left(\sum_{(x,y): \langle x|H|y\rangle \neq 0}
	\!\!\!\!\!\!\!\!\!\!\! |\langle  x|\phi\rangle|^2 \right)^{\frac{1}{2}} \nonumber\\
 &+ O(h)\nonumber \\
	&  \leq\|H\|\frac{1}{c}  t d\|\phi\|^2 + O(h)\label{eq:499}
	\\
	& =dt\|H\| + O(h),\nonumber
 \end{align}
	where Eq.~(\ref{eq:499}) comes from the fact that since $H$ is $d$-sparse,  for every $y$, we know that $|\langle  y|\phi\rangle|^2$ appears for at most $d$ times  in $\left(\sum_{(x,y): \langle x|H|y\rangle \neq 0}
	|\langle  y|\phi\rangle|^2 \right)$.
	 
Let $\kappa_{good}(x,t)[\xi]$ be  the number of transitions  between $[0,t]$ conditioned on $$\xi(0)=x,I_{bad}[\xi]=0.$$ Let $A(x)$ be the distribution of $\xi$ generated by Algorithm \ref{algo:Gill}  w.r.t. $(\GHt,x,t)$.
 note that
\begin{align}\label{eq:E}
	 &E_{\xi\sim\rho(\xi)}(T_{good}(t)[\xi])\nonumber\\
     &=  \sum_x \pi(x)  
	 E_{\xi\sim A(x)}
	 (\kappa_{good}(x,t)[\xi]). 
\end{align}
Combine Eqs.~(\ref{eq:499})(\ref{eq:E}), using an average argument, we know that there exists an $\hat{x}_{in}$  with
$\pi(\hat{x}_{in})\neq 0$, that is, $ \langle \hat{x}_{in}|\phi\rangle\neq 0$,
such that 
\begin{align}
	E_{\xi\sim A(\hat{x}_{in})} (\kappa_{good}(\hat{x}_{in},t)[\xi])\leq dt\|H\| + O(h).
\end{align}
By a Markov bound, we know 
\begin{align}
	&Pr_{\xi\sim A(\hat{x}_{in})}  \left( \kappa_{good}(\hat{x}_{in},t)[\xi]\geq dn^3t\|H\|\right)\nonumber\\
    &\leq 1/n^2.
\end{align}
Besides, we have that
\begin{align*}
	&Pr_{\xi\sim A(\hat{x}_{in})}  (\kappa(\hat{x}_{in},t)[\xi] \geq dn^3t\|H\|)\\ &= Pr_{\xi\sim A(\hat{x}_{in})}(I_{bad}[\xi]=1,\kappa(\hat{x}_{in},t)[\xi] \geq dn^3t\|H\|)\\ 
	&+ Pr_{\xi\sim A(\hat{x}_{in})}(I_{bad}[\xi]=0,\kappa(\hat{x}_{in},t)[\xi] \geq dn^3t\|H\|).\end{align*}
    
Note that $Pr_{\xi\sim A(\hat{x}_{in})}(I_{bad}[\xi]=1)=O(h) $ by a similar argument as Eq.~(\ref{eq:36}), we finally conclude that

\begin{align}
	&Pr_{\xi\sim A(\hat{x}_{in})}(\kappa(\hat{x}_{in},t)[\xi] \geq dn^3t\|H\|)\nonumber\\
	\leq &Pr_{\xi\sim A(\hat{x}_{in})}(I_{bad}[\xi]=0,\kappa(\hat{x}_{in},t)[\xi]\geq dn^3t\|H\|)+O(h)\nonumber\\
	 =&Pr_{\xi\sim A(\hat{x}_{in})}(I_{bad}[\xi]=0)\times \nonumber\\
 &
	Pr_{\xi\sim A(\hat{x}_{in})}\!  \left(\kappa_{good}(\hat{x}_{in},t\!)[\xi]\!\geq\! dn^3t\|H\| \,|\, I_{bad}[\xi]=0\right) +O(h)\nonumber\\
	\leq& (1-O(h)) 1/n^2 +O(h)\nonumber\\
	\leq& 1/2.
\end{align}

In other word,
\begin{align}
	Pr_{\xi\sim A(\hat{x}_{in})}(\kappa(\hat{x}_{in},t)[\xi] \leq dn^3t\|H\|)
	\geq 1/2.
\end{align}

\end{proof}

\section{Proof of Claim \ref{cla:pro}} \label{appendix:df}

\begin{proof}[of  Claim \ref{cla:pro}]
	First we define more notations for the proof. Conditioned on $\tau_{end}>s,\xi(s)=x$, let $t_x$ be the time that $\xi$ last arrives $x$ before time $s$,  that is,  
	\begin{align}
		t_x:=\max \{\eta\leq s: \xi(\eta)\neq x\}.
	\end{align}
 Let $\Tx$ be the waiting time sampled by line \ref{line:dt} when $\xi$ reaches $x$ at time $t_x$. Note that $c_1 = t_x +\Tx$. 
	If $y\in S_{good}$ is the state that $\xi$ visit next,  that is,  $\xi(c_1)$, similarly define $\Ty$ be the waiting time  by line \ref{line:dt} when $\xi$ reaches $y$ at time $c_1$. 
 
 To simplify the notations, we denote 
 \begin{align*}
     & u_x:=|\langle x|\GHS|x\rangle|,\\
     &P_{xy}:=\langle y|\GHS|x\rangle/u_x.
 \end{align*}

To ease notations, we abbreviate the probability density function
  $p(t_x=s_x |\tau_{end}>s,\xi(s)=x)\, d s_x$ as $p(t_x=s_x) d s_x$.
Note that by definition
\begin{align*}
	\int_{s_x\leq s} p(t_x=s_x)\, d s_x=1.
\end{align*}

To complete the proof, notice that
\begin{itemize}
\item
For Eq.~(\ref{eq:xx}) we can perform calculation in  Appendix \ref{appendix:eq}  \textcircled{A}.

\item
For Eq.~(\ref{eq:xyb}), when $y\not\in S_{good}$, 	 we can perform calculation in  Appendix \ref{appendix:eq} \textcircled{B}.

\item
For Eq.~(\ref{eq:xyg}), when $y\in S_{good}$,	 we can perform calculation in  Appendix \ref{appendix:eq} \textcircled{C}.

\item
Eq.~(\ref{eq:el}) is one minus of the probability of Eqs.~(\ref{eq:xx})(\ref{eq:xyb})(\ref{eq:xyg}). Since $\sum_y\langle y|\GHS|x\rangle =0$, thus we know  Eq.~(\ref{eq:el}) holds.
\end{itemize}
\end{proof}

\vspace{3em}
\section{Remarks on precision}\label{appendix:precision}

\begin{remark}[How we represent values]\label{remark:precision}   \end{remark}

 We say  $x\in\bC$ is represented by $p(n)$-bits, if $x$ is of the form $\frac{N_1}{N_2}+\frac{N_3}{N_4}i$, where $\forall i$, $N_i$ is an integer and $|N_i|\leq 2^{p(n)}$, thus can be represented by $p(n)$ binary bits.  
 
 Note that LHP with succinct ground state is a promise problem, we implicitly assume that there is a sufficiently large polynomial $p(n)=poly(n)$, such that every value in Definition~\ref{def:Hss},  that is,  $\langle x|H|y\rangle,a,b,m, C_{\psi}(x)$ can be represented  by $p(n)$-bits. 	Note that those assumptions implicitly imply
 	\begin{itemize}
 		\item[(1)]  $C_{\psi}(x)$ can be computed \textbf{exactly}. Thus the ratio of the amplitudes,  that is,  $\frac{\langle x|\psi\rangle}{\langle y|\psi\rangle}$, can be computed exactly for $y$ where $C_{\psi}(y)\neq 0$.
 		\item[(2)] If $C_{\psi}(x) \neq 0$, then $|C_{\psi}(x)| \geq 1/2^{p(n)}$.  
 		Similarly for $\langle x|H|y\rangle$.
 		\item[(3)] $\lambda(H)$ can be represented exactly by $poly(n)$ bits, since $\langle x|H|y\rangle, C_{\psi}(x)$  can be represented by $p(n)$ bits, and $\lambda(H)= \frac{\sum_y \langle x|H|y\rangle C_{\psi}(y)}{C_{\psi}(x)}$ for some $x$ s.t. $C_{\psi}(x)\neq 0$.
 	\end{itemize}
For $S\subseteq \{0,1\}^n$,
  we say a matrix $G\in \bC^{|S|\times |S|}$ can be represented by $p(n)$-bits, if all entries $\langle x|G|y\rangle$ can be represented by $p(n)$-bits.

  One can check that if $G$ can be represented by $p(n)/2$-bits, then there exists $p'(n)=poly(n)$ such that\footnote{Given any matrix $M\in\bR^{|S|\times |S|}$, for  any normalized vector $\ket{\phi}\in\bR^{|S|}$, 
since $|\langle x|\phi\rangle|\leq 1, \forall x$, 
we have $ \langle \phi|M^\dagger M|\phi\rangle  = \sum_{x,y,z} \langle \phi|x\rangle \langle x|M^\dagger| y\rangle \langle y|M|z\rangle \langle z|\phi\rangle \leq amax(M)^2 2^{3n}.$ thus $\|M\|\leq  \sqrt{amax(M)^2 2^{3n}}\leq amax(M)2^{2n}.$}
$$amax(G)\leq 2^{p'(n)}, \, \|G\|\leq 2^{p'(n)+2n}.$$
Besides, if $\langle x|G|y\rangle \neq 0$, then $\langle x|G|y\rangle \geq 1/2^{p'(n)}$.

\begin{remark}[Assumptions for sampling]\label{remark:dist} \end{remark}

Let $M=poly(n)$.
Set parameters
\begin{align*}
    \delta &:= 2^{-2n}/M\\
     K &:=  \lceil 2^{2n}  M  \left(n \ln 2 + \ln M\right)) \rceil
\end{align*}

Note that for any sufficiently large  $q=poly(n)$, there is a $poly(n)$-time algorithm whose output distribution 
approximates the truncated discretized exponential distribution $\cD_{K,\delta,\lambda}$ within total variation distance\footnote{Given two probability distribution $p,q$ over a discrete set $\Omega$, the total variation distance $d_{TV}(p,q)$ between $p$ and $q$ is defined as $d_{TV}(p,q)=\sum_{x\in\Omega} \frac{1}{2}|p(x)-q(x)|$, where $p(x)$ is the probability of $p=x$ and similarly for $q(x)$. } $2^{-q}$. 
Specifically, the algorithm is sampling $q^2$ random bits 
$s_1,...,s_q\in\{0,1\}.$ Let

$$\eta= \sum_{j=1}^{q^2} s_j 2^{j-1}.$$

\begin{itemize}
    \item If 
$\eta\in ( \exp(-\lambda (k+1)\delta),\exp(-\lambda k\delta)]$ for $k\leq K$, output $w=k\delta$.
\item  If $\eta \leq \exp(-\lambda K\delta)$, output $w=K\delta$.
\end{itemize}
 The total variation distance between the output distribution and $\cD_{K,\delta,\lambda}$ is $O(\frac{1}{2^{q^2}}K) = O(2^{-q})$ for sufficiently large $q=poly(n).$
  
To ease analysis, in this manuscript we assume that for the parameters defined above, we can use $poly(n)$-time to sample the discretized exponential distribution $\cD_{K,\delta,\lambda}$ \textbf{exactly}.

\begin{remark}[More on how we represent values]\label{remark:G}\end{remark} 

As in Remark \ref{remark:precision}, in the following sections, we assume values in $H,\ket{\phi}$ can be represented by $p(n)$-bits.
Note that by Remark \ref{remark:precision},  if $\langle x_{in}|\phi\rangle\neq 0$, then
\begin{align}
 \left|\frac{ \langle z|\phi \rangle }{\langle x_{in}|\phi\rangle}\right|	=  \left| \frac{ C_{\phi}(z) }{C_{\phi}(x_{in})} \right|
 \leq \frac{2^{p(n)}+2^{p(n)}}{1/2^{p(n)}}
 \leq 2^{3p(n)}.
\end{align}

We represent $\langle x|\FHS|y\rangle$,$\langle x|\GHS|y\rangle$ in a similar way as 
	in Remark \ref{remark:precision}.  Note that  $H_S$ is $2^k m$-sparse where $k$ is a constant and $m=poly(n)$, thus there exists another polynomial $p'(n)$, such that $\langle x|\FHS|y\rangle$, $\langle x|\GHS|y\rangle$ can be represented by $p'(n)/2$ bits, where $$p'(n)/2=O(\log[ (2^{p(n)*2^km}])=O(poly(n)).$$  In particular,   we have 
	\begin{itemize}
		\item[(1)] The followings holds:
  \begin{align*}
      &amax(\GHS)\leq 2^{p'(n)},\\
      &\|\GHS\|,\|\FHS\|\leq 2^{p'(n)+2n}.
  \end{align*}
		\item[(2)] If 
	$\langle x|\GHS|y\rangle\neq 0$, then $\langle x|\GHS|y\rangle\geq 1/2^{p'(n)}$.
	\end{itemize}
	We always assume $p(n),p'(n)$ are sufficiently large polynomials, and $p(n)\ll p'(n)$.

\section{Calculation for Equations}\label{appendix:eq}
Please see the next page.

\begin{figure*}
\begin{align}
\textcircled{A}\quad& \text{Calculation for Claim \ref{cla:pro} Eq.~(\ref{eq:xx}).} \nonumber\\
&\nonumber\\
 &Pr(c_1\geq s+h | \tau_{end}>s,\xi(s)=x)\\ 
	 & = \int_{s_x\leq s}  p(t_x=s_x)  Pr(\Tx \geq s+h-s_x|\Tx\geq s-s_x) d s_x\\
	& = \int_{s_x\leq s}  p(t_x=s_x)  \exp(-u_xh)d s_x\\
	& = 1- u_x h+O(h^2)\\
	&= 1-|\langle x|\GHS|x\rangle|h + O(h^2).\\
 &\nonumber\\
\textcircled{B}\quad& \text{Calculation for Claim \ref{cla:pro} Eq.~(\ref{eq:xyb}).}\nonumber\\
 &\nonumber\\
 & Pr(c_1\leq s+h, \xi(c_1)=y|\tau_{end}>s,\xi(s)=x)\\
 & =
 Pr(\Tx\leq s+h-s_x, \xi(s_x+\Tx)=y|\tau_{end}>s,\xi(s)=x)\\
 &=\int_{s_x\leq s}  p(t_x=s_x) \int_{s-s_x\leq f\leq s+h-s_x}  p(\Tx=f |\Tx\geq s-s_x) P_{xy} df \, d s_x\\
 & =   P_{xy} \int_{s_x\leq s} p(t_x=s_x)  \int_{s-s_x\leq f\leq s+h-s_x}  u_x\exp\left[-u_x (f-s+s_x) \right]
  df \, d s_x\\
  & =  P_{xy} \int_{s_x\leq s}  p(t_x=s_x)  \left[1-\exp(-u_x h)\right]
  d s_x\\
  & =p_{xy} u_x h + O(h^2)\\
  & = \langle y|\GHS|x\rangle h +O(h^2).\\
  &\nonumber\\
\textcircled{C}\quad & \text{Calculation for Claim \ref{cla:pro} Eq.~(\ref{eq:xyg}).}\nonumber\\
  &\nonumber\\
  &Pr(c_1\leq s+h, \xi(c_1)=y, c_2\geq s+h|\tau_{end}>s,\xi(s)=x)\\
 & =
 Pr(\Tx\leq s+h-s_x, \xi(s_x+\Tx)=y, s_x+\Tx+\Ty\geq s+h|\tau_{end}>s,\xi(s)=x)\\
 & =\int_{s_x\leq s}  p(t_x=s_x) \int_{s-s_x\leq f\leq s+h-s_x} p(\Tx=f |\Tx\geq s-s_x) P_{xy} Pr(\Ty\geq h- (f-s+s_x)) df \, d s_x\\
 & =   P_{xy} \int_{s_x\leq s} p(t_x=s_x)  \int_{s-s_x\leq f\leq s+h-s_x}  u_x\exp\left[-u_x (f-s+s_x) \right] \exp\left[ -u_y (h-(f-s+s_x)) \right]
  df \, d s_x\\
  & =   P_{xy} \int_{s_x\leq s} p(t_x=s_x)  \int_{s-s_x\leq f\leq s+h-s_x}  u_x\exp\left[-u_yh\right] \exp\left[ (u_y-u_x) (f-s+s_x) \right]
  df \, d s_x\\
  & = P_{xy} \int_{s_x\leq s} p(t_x=s_x)   u_x\exp\left[-u_yh\right] \frac{1}{u_y-u_x}\left[ \exp\left[(u_y-u_x)h\right]-1 \right] d s_x\\
  & = P_{xy} u_x \exp[-u_yh]h\\
  & = \langle y|\GHS|x\rangle h+ O(h^2).\\
  &\nonumber
	\end{align}
\end{figure*}

\clearpage

\section{Acknowledgement}
We thank Jiayu Zhang,  Thomas Vidick, Urmila Mahadev, Sevag Gharibian,   Yupan Liu and Jielun Chen (Chris)  for their helpful discussions.   We thank the anonymous reviewers for their valuable
suggestions for presentations and open questions.
Jiaqing Jiang is supported by MURI Grant FA9550-18-1-0161 and the IQIM, an
NSF Physics Frontiers Center (NSF Grant PHY-1125565).

\bibliographystyle{ieeetr}
\bibliography{ref.bib}

\end{document}